\documentclass{article}

\usepackage{graphicx} 

\usepackage{lmodern}
\usepackage[top=1in, bottom=1in, left=1in, right=1in]{geometry}

\usepackage{amsmath,amssymb,amsthm}
\usepackage{wrapfig}
\usepackage{tikz}
\usetikzlibrary{positioning}

\newtheorem{theorem}{Theorem}

\newtheorem{definition}{Definition}

\newtheorem{result}{Result}
\newtheorem{question}{Question}
\newtheorem{lemma}[theorem]{Lemma}
\newtheorem{fact}[theorem]{Fact}

\newtheorem*{lemma*}{Lemma}
\newtheorem*{corollary*}{Corollary}
\newtheorem*{theorem*}{Theorem}
\newcommand{\adeg}{\widetilde{\deg}}
\newcommand{\bdeg}{\operatorname{bdeg}}
\newcommand{\OR}{\mathrm{OR}}
\newcommand{\AND}{\mathrm{AND}}
\newcommand{\MAJ}{\mathrm{MAJ}}
\newcommand{\E}{\mathbb E}
\newcommand{\PrTH}{\operatorname{PrTH}}
\newcommand{\THR}{\operatorname{THR}}
\newcommand{\SF}{\mathcal{S}}

\usepackage{xcolor}
\usepackage[normalem]{ulem} 

\usepackage[
    backend=biber,
    style=numeric-comp,
    sorting=nty,
    url=false,
    maxbibnames=99
]{biblatex}
\usepackage[
    colorlinks=true,
    allcolors=blue,
    citecolor=blue
]{hyperref}
\usepackage[capitalise,nameinlink]{cleveref}

\title{A General Composition Theorem for Approximate Degree}

\author{%
\begin{tabular}{c@{\hspace{2.2em}}c}
Samruddhi Pednekar & Supartha Podder
\end{tabular}\\[0.9em]
{\small Stony Brook University, New York, USA}\\[0.4em]
{\small \texttt{\char`\{spednekar, supartha\char`\}@cs.stonybrook.edu}}
}

\date{}
\begin{document}

\maketitle

\begin{abstract}
A longstanding open question in Boolean function complexity asks whether
approximate degree composes multiplicatively under block composition.
Although a general multiplicative upper bound is known, matching lower
bounds have previously been established only for restricted classes of
functions. We resolve this question for all total Boolean functions by
proving the matching lower bound. Together with Sherstov's upper bound~\cite{robust_polynomials},
our result shows that, for every pair of total Boolean functions
$f:\{0,1\}^n\to\{0,1\}$ and $g:\{0,1\}^m\to\{0,1\}$,
\[
\widetilde{\deg}(f\circ g)
=
\Theta\!\left(
\widetilde{\deg}(f)\,\widetilde{\deg}(g)
\right),
\]
where $\widetilde{\deg}$ denotes constant-error approximate degree.
\end{abstract}

\tableofcontents

\newpage

\section{Introduction}
Approximate degree is a well-studied complexity measure of Boolean functions and a central tool for proving lower bounds on quantum query complexity~\cite{Paturi1992,NisanSzegedy1994,beals2001quantum,KulkarniPodder2016,aaronson2021degree,BunThaler2022,BunVoronova2023,ChakrabortyEtAl2024,PodderYaoYe2025,Sherstov2025,HuffstutlerEtAl2026, balodis2026optimal, BunJuvekarKing2026}. It is the minimum degree of a real polynomial that approximates
the function pointwise to within a fixed constant error,
typically $1/3$, on the Boolean cube. 
For total Boolean functions, tight composition theorems are known
for deterministic query complexity and polynomial degree
\cite{savicky2002determinism, montanaro2013composition, tal_function_composition}, as well as bounded-error quantum
query complexity \cite{reichardt2011reflections, kimmel2012quantum, lee2011quantum}.
However, analogous composition theorems for randomized query
complexity and approximate degree remain open, even up to
polylogarithmic factors
\cite{ben2020tight, ben2022randomised, robust_polynomials, chakraborty_et_al:LIPIcs.APPROX/RANDOM.2023.63}.

In particular, the approximate-degree composition conjecture asks
whether, for all total Boolean functions $f$ and $g$,
\[
  \adeg(f \circ g) = \Theta(\adeg(f) \adeg(g)),
\]
where $f \circ g$ denotes block composition.


~\cite{robust_polynomials} resolved this question for one direction by proving $\adeg(f \circ g) = O(\adeg(f) \adeg(g))$ for all total Boolean functions $f$ and 
$g$.

Thus, the following question still remains open, 
\begin{question}\label{question_1}
\[\adeg(f \circ g) = \Omega(\adeg(f) \adeg(g))? \]
\end{question}
Several papers in the past two decades have shown that this bound holds true, for certain special classes of functions but the question still remains open when both inner and outer functions are arbitrary Boolean functions. \cite{sher_direct_product} proved the bound when the outer function has linear approximate degree. ~\cite{sherstov2013} and ~\cite{BUN20152} proved the lower bound for approximate degree of $\AND \circ \OR$ in terms of approximate degree of $\AND$ and approximate degree of $\OR$ using dual witness of $\OR$. Later, a few papers gave lower bounds for composed functions for special cases up to polylogarithmic bounds, thus focusing on the more relaxed version of $\adeg(f \circ g) = \tilde{\Omega}(\adeg(f)\adeg(g))$.

~\cite{classical_lower_bounds} proved the result for arbitrary inner function and symmetric outer function. They first gave a proof for $\OR$ as outer function and arbitrary $g$ as inner function by using a quantum algorithm for combinatorial group testing problem. They also mentioned the possibility of proving the same result using dual witnesses. Later, ~\cite{chakraborty_et_al:LIPIcs.APPROX/RANDOM.2023.63} gave a lower bound \[\adeg(f \circ g) \ge \tilde{\Omega}(\sqrt{bs(f)} \adeg(g)),\] where $bs(f)$ is the block sensitivity of $f$.  This result, which is in terms of a non-trivial complexity measure of $f$ and $\adeg(g)$ establishes the lower bound for any outer function $f$ which has $\sqrt{bs(f)}$ asymptotically equal to $\adeg(f)$. They also gave the result for an outer strongly $k-$junta symmetric function for $k = O(\sqrt{n})$. A more recent paper~\cite{ChakrabortyEtAl2024} established the result for recursive functions. They first used the following result
\[ \adeg(f \circ \MAJ_t \circ g) \ge \Omega(\adeg(f) \adeg(g) ), \; t = \Omega(\log{n}) \]

and then showed that the inserted majority function can be removed from the bound when $f$ or $g$ is a recursive function with a high depth and the base function of $g$ is of constant arity but is neither $\AND$ nor $\OR$. They also highlighted that proving $\adeg(f \circ \MAJ_t \circ g) \le \tilde{O}(\adeg(f \circ g))$ for $t = \Theta(\log{n})$ would solve Question~\ref{question_1} with polylogarithmic loss. They particularly mentioned in the open questions that giving a bound for arbitrary $f$ composed with $\OR$ will potentially give some insight on solving the conjecture.

In another recent work, \cite{cornelissen2026subcube} introduces \emph{subcube stifling}, a combinatorial measure $\mu(f)$, and proves that  $\widetilde{\deg}(f \circ g) = \Omega(\sqrt{\mu(f)}\,\widetilde{\deg}(g))$. They also show that random Boolean functions have $\mu(f) = \Theta(\log n)$ with high probability and constructs examples from linear codes with $\mu(f) = \Theta(n)$.

Our work unconditionally proves the bound given in Question~\ref{question_1} for arbitrary outer and inner total Boolean functions. It first gives a lower bound on the approximate degree of an arbitrary Boolean function composed with $\OR$. It then extends it to arbitrary inner functions by giving a new probability distribution obtained from the dual witness of the inner function $g$ and using the properties of trigonometric polynomials and squarefree extraction elaborated later.

\begin{result}[Theorem~\ref{thm:composition-main}]
    Let $f: \{0,1\}^n \rightarrow \{0,1\}$ and $g: \{0,1\}^m \rightarrow \{0,1\}$ be arbitrary total Boolean functions. Then 
    \[ \adeg(f \circ g) = \Omega(\adeg(f) \adeg(g)). \]
\end{result}

\paragraph{AI Disclosure.}
We used ChatGPT (GPT-6 Astra) to assist with exploring research directions, identifying potential errors in arguments, simplifying
proof strategies, and improving the exposition. The authors take full responsibility for the mathematical content and final
presentation.

\section{Preliminaries}

\subsection{Notations}

We write $[n]=\{1,\ldots,n\}$ and $|x|=\sum_i x_i$ for the
Hamming weight of $x\in\{0,1\}^n$. 

$\mathbb{Q}$ is the set of rational numbers and $\mathbb{C}$ is the set of complex numbers.

For $x \in \mathbb{C}$, $\overline{x}$ is the complex conjugate of $x$.

For a finite set $\mathcal{X}$, we identify functions on $\mathcal{X}$
with vectors in $\mathbb{R}^{\mathcal{X}}$ and use the unnormalized inner product
$\langle u,v\rangle=\sum_{x\in\mathcal{X}}\overline{u_x}v_x$.
We write $\|v\|_1=\sum_x|v_x|$ and
$\|v\|_2=(\sum_x|v_x|^2)^{1/2}$.
For matrices, $\|A\|$ denotes the operator norm induced by $\|\cdot\|_2$;
for polynomials, $\|P\|_K=\sup_{x\in K}|P(x)|$.

We identify a function $\phi:\{0,1\}^m\to\mathbb{R}$ with the vector
$(\phi(x))_{x\in\{0,1\}^m}\in\mathbb{R}^{\{0,1\}^m}$.

We denote the diagonal multiplication operator associated with $h$
by $M_h$, so that $(M_hv)_x=h(x)v_x$.
We write $A^*$ for the conjugate transpose, $[A,B]=AB-BA$,
and $A\preceq B$ when $B-A$ is positive semidefinite.

We denote the standard orthonormal basis by
$\{e_x:x\in\{0,1\}^m\}$.
Under the identification
$\mathbb{R}^{\{0,1\}^m}\cong(\mathbb{R}^2)^{\otimes m}$,
we have $e_x=e_{x_1}\otimes\cdots\otimes e_{x_m}$,
where $e_0,e_1$ are the standard basis vectors of $\mathbb{R}^2$.

We will use the following rotation matrix in $\mathbb{R}^2$:
\begin{equation}\label{eq:rotation_matrix}
    U_{\theta} = \begin{pmatrix}
        \cos(\theta/2) & -\sin(\theta/2) \\
        \sin(\theta/2) & \cos(\theta/2)
    \end{pmatrix},
\end{equation}
and its tensor power that acts on $\mathbb{R}^{\{0,1\}^m}$ as, 
\[
    U_{\theta}^{\otimes m}
    =\underbrace{U_{\theta}\otimes\cdots\otimes U_{\theta}}_{m\text{ factors}}.
\]

\subsection{Definitions}

We use the standard definition of approximate degree of Boolean functions which is stated below:
\begin{definition}[Approximate degree of Boolean functions]
    Let $f: \{0,1\}^n \rightarrow \{0,1\}$ be a Boolean function. For $0 \le \epsilon < 1/2$, a real polynomial $p: \mathbb{R}^n \rightarrow \mathbb{R}$ is said to approximate $f$ if 
    \[ |p(x)-f(x)| \le \epsilon, \qquad \text{for every} \; x \in \{0,1\}^n. \]
    The $\epsilon$-approximate degree of $f$, denoted by $\deg_{\epsilon}(f)$ is the minimum degree of all such $\epsilon$-approximating polynomials of $f$.

    In particular, $\adeg(f) = \deg_{1/3}(f)$. When $\epsilon=0$, it is the exact degree of $f$.
\end{definition}

\begin{definition}[Bounded approximate degree of Boolean functions]
    Let $f: D \subseteq \{0,1\}^n \rightarrow \{0,1\}$ be a Boolean function. For $0 \le \epsilon < 1/2$, a real polynomial $p: \mathbb{R}^n \rightarrow \mathbb{R}$ is said to approximate $f$ if 
    \[ |p(x)-f(x)| \le \epsilon, \qquad \text{for every} \; x \in D \quad \text{and} \quad 0\le p(x)\le 1, \; \text{for every} \; x \in \{0,1\}^n. \]
    The bounded $\epsilon$-approximate degree of $f$, denoted by $\bdeg_{\epsilon}(f)$ is the minimum degree of all such bounded $\epsilon-$approximating polynomials of $f$.

    In particular, $\widetilde{\bdeg}(f) = \bdeg_{1/3}(f)$.
\end{definition}

We use the standard block composition of Boolean functions, see~\cite{tal_function_composition}.
\begin{definition}[Composition of Boolean functions]
    Let $f: \{0,1\}^n \rightarrow \{0,1\}$ and $g: D \subseteq \{0,1\}^m \rightarrow \{0,1\}$. For $i \in [n]$, let $x^{(i)}  \in \{0,1\}^m$ and let $x = (x^{(1)},\dots,x^{(n)}) \in D^n $ where $x^{(1)},\dots,x^{(n)}$ are $n$ disjoint blocks of $m$ bits each. Then, the function composition, $f$ composed with $g$ is 
    \[ (f\circ g) (x) = f(g(x^{(1)}),\dots,g(x^{(n)})). \]
    $f$ is called the outer function and $g$ is called the inner function.
\end{definition}

\begin{definition}[Threshold function]
    A threshold function, denoted as $\THR_m^t$, is defined below:
    \begin{equation*}
        \THR_m^t(x) = \begin{cases}
            1, \qquad |x|\ge t, \\
            0, \qquad \text{otherwise}.
        \end{cases}
    \end{equation*}
\end{definition}

\begin{definition}[Promise threshold function]
    A promise threshold function $\PrTH_m^t$ is defined below:
    \begin{equation*}
        \PrTH_m^t(x) = \begin{cases}
            1, \qquad &|x|= t, \\
            0, \qquad &|x|=t-1, \\
            \text{undefined}, \qquad &\text{otherwise}.
        \end{cases}
    \end{equation*}
\end{definition}

\begin{definition}[Symmetric function]
    A function $f:\{0,1\}^n \rightarrow \{0,1\}$ is symmetric if $f(x)=f(y)$ for all $x,y  \in \{0,1\}^n$ such that $|x|=|y|$.
\end{definition}

We use the standard definition of trigonometric polynomials. See~\cite{rudin} for more details.

\begin{definition}[Univariate trigonometric polynomial]
A univariate trigonometric polynomial is a function $R$ of the form 
\[ R(\theta) = a_0 + \sum_{n=1}^d a_n \cos{n \theta} + \sum_{n=1}^d b_n \sin{n \theta}, \; \theta \in \mathbb{R}, \; a_n \in \mathbb{C}, \; b_n \in \mathbb{C}, \; d \in \mathbb{Z}_{\ge0 }.\]

The degree of this trigonometric polynomial is $\max\{n: a_n \ne 0 \; \text{or} \; b_n \ne 0 \}$. $R$ is real-valued if $a_n, b_n \in \mathbb{R}$.

It is also written in the form
\[ R(\theta) = \sum_{n=-d}^d c_n e^{in \theta}.   \]
$R$ is real-valued if $c_{-n} = c_n^*$. 
\end{definition}

\begin{definition}[Squarefree extraction]
For a monomial $(x_1^{a_1}\cdots x_n^{a_n})$, where
$a_i$'s are nonnegative integers, define
\[
\SF(x_1^{a_1}\cdots x_n^{a_n})=
\begin{cases}
(x_1^{a_1}\cdots x_n^{a_n}), & \text{if $a_i\in\{0,1\}$ for every $i\in[n]$}\\
0 & \text{otherwise.}
\end{cases}
\]
Then for a polynomial $P=\sum_{j=1}^N c_j m_j$, where the $m_j$ are
distinct monomials and $c_j\in\mathbb{R}$, define
\[
\SF(P)=\sum_{j=1}^N c_j\,\SF(m_j).
\]

\end{definition}

This kind of operator is used in existing literature on arithmetic circuit complexity. For example, see the projection of polynomial onto multilinear monomials~\cite{mrinal2017}.

    Note that, squarefree extraction differs from Boolean multilinearization:
it deletes every monomial in which some variable has exponent
at least two, whereas Boolean multilinearization replaces every
positive exponent by one.

For a tuple $\alpha=(\alpha_1,\ldots,\alpha_n)$ of nonnegative
integers, write
\[
x^\alpha=\prod_{i=1}^n x_i^{\alpha_i},
\qquad
\alpha!=\prod_{i=1}^n(\alpha_i!),
\qquad
\partial^\alpha=\partial_1^{\alpha_1}\cdots\partial_n^{\alpha_n},
\qquad
|\alpha| = \sum_{i=1}^n \alpha_i.
\]
where $\partial_i$ denotes differentiation with respect to $x_i$.

We write $B_r(a)=\{x\in\mathbb{R}^n:\|x-a\|_2<r\}$ for the
open ball of radius $r$ centered at $a$.

A function $R:U\to\mathbb{R}$ on an open set
$U\subseteq\mathbb{R}^n$ is \emph{real analytic} if, for every
$a\in U$, it agrees with an absolutely convergent power series
centered at $a$ on some ball $B_r(a)\subseteq U$.

\begin{definition}[Taylor expansion at 0]
Let $R:U\to\mathbb{R}$ be real analytic, where
$U\subseteq\mathbb{R}^n$ is open and contains $0$.
The \emph{Taylor expansion of $R$ at $0$} is the series
\[
\sum_{\alpha\in\mathbb{Z}_{\ge 0}^n}
\frac{\partial^\alpha R(0)}{\alpha!}\,x^\alpha.
\]
For some $r>0$ with $B_r(0)\subseteq U$, this series converges
absolutely to $R(x)$ for every $x\in B_r(0)$.
\end{definition}

The squarefree extraction of this is

\[ \SF(R)(x) = \sum_{S \subseteq [n]} \partial_S R(0) \prod_{i \in S} x_i,  \]
where $S = \{i \in [n] : \alpha_i=1\}$ and $\partial_S = \prod_{i \in S} \partial_i$.

\begin{definition}[Total frequency of trigonometric polynomial]
    A multivariate trigonometric polynomial $R$ has total frequency at most $D$ if it can be written as
    \[ R(\theta) = \sum_{\|a\|_1 \le D} c_a e^{i \langle a, \theta \rangle }\]

    where $a \in \mathbb{Z}^n$ and $\|a\|_1 = \sum_i |a_i|$.
\end{definition}

We give the standard result from complex analysis below. For more background please refer to \cite{Conway1978}. 
\begin{fact}[Maximum modulus principle]
    Let $\mathcal{D} \subseteq \mathbb{C}$ be a bounded open connected set . Let $\overline{\mathcal{D}}$ be closure of $\mathcal{D}$. If $f$ is a holomorphic function on $\mathcal{D}$ and continuous on $\overline{\mathcal{D}}$, then $|f|$ attains its maximum on the boundary of $\mathcal{D}$.
\end{fact}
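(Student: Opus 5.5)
The plan is to prove the Maximum modulus principle via the mean value property of holomorphic functions, combined with a connectedness argument.

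\emph{Step 1: the maximum exists.} Since $\mathcal{D}$ is bounded, $\overline{\mathcal{D}}$ is compact. Since $f$ is continuous there, $|f|$ attains a maximum $M$ at some point $z_0\in\overline{\mathcal{D}}$. If $z_0\in\partial\mathcal{D}$ we are done. So assume $z_0\in\mathcal{D}$; the goal is then to show $f$ is constant on $\mathcal{D}$.

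\emph{Step 2: local constancy near a maximizer.} For any disc $\overline{B_r(z_0)}\subseteq\mathcal{D}$, the Cauchy integral formula gives the mean value identity
\[
f(z_0)=\frac{1}{2\pi}\int_0^{2\pi} f(z_0+re^{it})\,dt,
\]
and hence
\[
M=|f(z_0)|\le \frac{1}{2\pi}\int_0^{2\pi}|f(z_0+re^{it})|\,dt\le M .
\]
Both inequalities are therefore equalities. The integrand is continuous and bounded by $M$, so $|f|\equiv M$ on each circle of radius $r$ about $z_0$, and hence on the whole disc $B_r(z_0)$.

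Next I upgrade $|f|\equiv M$ to $f$ being constant. If $M=0$ this is immediate. Otherwise, equality in the triangle inequality for the integral forces $f(z_0+re^{it})$ to have the same argument as $f(z_0)$ for every $t$. Combined with $|f|\equiv M$, this gives $f\equiv f(z_0)$ on the disc. (Alternatively, a holomorphic function of constant modulus is constant by the Cauchy--Riemann equations.)

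\emph{Step 3: propagate by connectedness.} Let $A=\{z\in\mathcal{D}: f(z)=f(z_0)\}$. This set is nonempty, since it contains $z_0$. It is relatively closed in $\mathcal{D}$ by continuity. It is open by Step 2, because every point of $A$ is again a maximizer of $|f|$. Since $\mathcal{D}$ is connected, $A=\mathcal{D}$. By continuity, $f\equiv f(z_0)$ on all of $\overline{\mathcal{D}}$.

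Finally, $\partial\mathcal{D}\neq\emptyset$: a nonempty bounded open subset of $\mathbb{C}$ cannot also be closed, since $\mathbb{C}$ is connected. So $|f|=M$ is attained at boundary points as well.

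The only step needing care is Step 2, specifically turning equality in the averaged inequality into pointwise constancy of $f$ rather than just of $|f|$. I would handle this with the equality case of $\bigl|\int h\bigr|\le\int|h|$ for continuous $h$, as described above; everything else is routine.
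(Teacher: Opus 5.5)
Your proof is correct and complete. The compactness step, the equality case of $\bigl|\int h\bigr|\le\int|h|$ giving local constancy of $f$ at an interior maximizer, the open--closed connectedness argument, and the observation that a nonempty bounded open subset of $\mathbb{C}$ has nonempty boundary are all sound. The paper gives no proof of its own and cites this as a standard fact from Conway; your argument is essentially the textbook proof found there.
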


\subsection{Basic Lemmas}

Proofs of all the lemmas are deferred to \Cref{appendix-proof-priliminaries}.

\begin{lemma}[Error reduction~\cite{BunThaler2022}]\label{error_reduction}
    Let $F: \{0,1\}^n \rightarrow \{0,1\}$ be a Boolean function with $\adeg(F)=d$. Then, there exists a multilinear polynomial $q$ with degree at most $K_{\epsilon}d$ that approximates $F$ such that for $\epsilon \in (0,1/2)$, 
    \[ 0\le q(x) \le 1 \quad \text{for every x} \in \{0,1\}^n, \qquad |q(x) - F(x)| \le \epsilon \quad \text{for every x} \in \{0,1\}^n, \]
    where $K_{\epsilon}$ is a constant dependent on $\epsilon$.
\end{lemma}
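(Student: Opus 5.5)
Starting from a $1/3$-approximating polynomial $p$ of degree $d=\adeg(F)$, I will amplify its accuracy by composing with a univariate polynomial. After that, I clamp the range to $[0,1]$ and multilinearize. Only constants depending on $\epsilon$ enter the degree.

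First, multilinearize $p$ over $\{0,1\}^n$ by replacing each $x_i^k$ with $x_i$ for $k\ge1$. This changes no value on the cube and does not increase the degree. So we may assume $p$ is multilinear with $|p(x)-F(x)|\le 1/3$ for all $x\in\{0,1\}^n$. Hence $p(x)\in[-1/3,1/3]$ when $F(x)=0$, and $p(x)\in[2/3,4/3]$ when $F(x)=1$.

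Next, choose a univariate real polynomial $A_\epsilon$ of constant degree $k_\epsilon$ with three properties:
\begin{enumerate}
\item $A_\epsilon(t)\in[0,\epsilon]$ for $t\in[-1/3,1/3]$;
\item $A_\epsilon(t)\in[1-\epsilon,1]$ for $t\in[2/3,4/3]$;
\item $0\le A_\epsilon\le1$ on the whole interval $[-1/3,4/3]$.
\end{enumerate}
One concrete construction rescales $[-1/3,4/3]$ affinely to $[0,1]$. The two regions become $[0,2/5]$ and $[3/5,1]$. Then take the amplification polynomial
\[
A(s)=\Pr[\mathrm{Bin}(k,s)\ge k/2].
\]
It maps $[0,1]$ into $[0,1]$. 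By a Chernoff bound it is at most $e^{-\Omega(k)}$ on $[0,2/5]$ and at least $1-e^{-\Omega(k)}$ on $[3/5,1]$. Choosing $k=O(\log(1/\epsilon))$ gives error $\le\epsilon$.

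Define $q'=A_\epsilon\circ p$, which has degree at most $k_\epsilon d$. On every $x\in\{0,1\}^n$, $p(x)$ lies in one of the two regions, so $q'(x)\in[0,1]$ and $|q'(x)-F(x)|\le\epsilon$. Finally, let $q$ be the multilinearization of $q'$ over the cube. It agrees with $q'$ on $\{0,1\}^n$ and has degree at most $K_\epsilon d$ with $K_\epsilon=k_\epsilon$.

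The only step needing care is the boundedness requirement $0\le q\le1$ on all inputs, not just approximate values. This is why I insist $A_\epsilon$ map the entire range of $p$ on the cube into $[0,1]$, rather than merely amplify near $0$ and $1$. The binomial tail polynomial handles this automatically, since it is a probability for every $s\in[0,1]$.
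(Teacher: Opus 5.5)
Your proposal is correct and is essentially the standard argument for this lemma. The paper gives no proof of its own: it cites Bun--Thaler, and in the promise-threshold proof it invokes the same binomial amplification polynomial of Buhrman et al. That cited argument proceeds just as yours does: affinely rescale $[-1/3,4/3]$ to $[0,1]$, compose with $s\mapsto\Pr[\mathrm{Bin}(k,s)\ge k/2]$ for $k=O(\log(1/\epsilon))$, and multilinearize, with boundedness on $\{0,1\}^n$ automatic because $F$ is total.
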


\begin{lemma}[Norm of squarefree extraction]\label{lem:squarefree_norm_bound}
    Let $P$ be a real homogeneous polynomial of degree $r$ in $n$ variables. Then, 
    \[ \|\SF(P)\|_{[-1,1]^n} \le 4^r \|P\|_{[-1,1]^n} \]
\end{lemma}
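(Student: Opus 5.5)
The plan is to write $\SF(P)$ as an average of values of $P$ at points of the cube, with total coefficient mass at most $e^r\le 4^r$. The two ingredients are polarization and a random colouring of the coordinates.

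\textbf{Step 1 (polarization).} Let $L$ be the symmetric $r$-linear form with $L(x,\dots,x)=P(x)$. Write $P=\sum_{|\alpha|=r} c_\alpha x^\alpha$. Then $L(y^1,\dots,y^r)=\sum_{(i_1,\dots,i_r)\in[n]^r} T_{i_1\cdots i_r}\,y^1_{i_1}\cdots y^r_{i_r}$. Here $T$ is the symmetric tensor determined by $P$. For a multilinear monomial $x_S$ with $|S|=r$, each of the $r!$ orderings of $S$ has entry $T=c_S/r!$. We also use the standard identity
\[
L(y^1,\dots,y^r)=\frac{1}{r!\,2^r}\sum_{\varepsilon\in\{\pm1\}^r}\varepsilon_1\cdots\varepsilon_r\,P\Bigl(\sum_{j=1}^r\varepsilon_j y^j\Bigr).
\]

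\textbf{Step 2 (random colouring).} Fix $x\in[-1,1]^n$ and colour each coordinate $i\in[n]$ independently and uniformly with a colour $c(i)\in[r]$. Let $y^j$ be $x$ restricted to the coordinates of colour $j$, with zeros elsewhere. Expanding $L(y^1,\dots,y^r)$ gives the following.
\begin{itemize}
\item A monomial with a repeated variable contributes nothing, since it would need one coordinate to carry two colours.
\item A multilinear monomial $x_S$ survives only if $S$ is rainbow-coloured. In that case exactly one ordering of $S$ matches the colour classes, so its contribution is $(c_S/r!)\,x_S$.
\end{itemize}
Since $\Pr[S\text{ rainbow}]=r!/r^r$, this gives $\E_c[L(y^1,\dots,y^r)]=r^{-r}\SF(P)(x)$, that is, $\SF(P)(x)=r^r\,\E_c[L(y^1,\dots,y^r)]$.

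\textbf{Step 3 (bounding).} The $y^j$ have disjoint supports and $x\in[-1,1]^n$. Hence $\sum_j\varepsilon_j y^j\in[-1,1]^n$ for every sign vector $\varepsilon$. The identity in Step 1 then gives $|L(y^1,\dots,y^r)|\le \|P\|_{[-1,1]^n}/r!$. Combining,
\[
|\SF(P)(x)|\le \frac{r^r}{r!}\|P\|_{[-1,1]^n}\le e^r\|P\|_{[-1,1]^n}\le 4^r\|P\|_{[-1,1]^n}.
\]

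The main point needing care is the coefficient bookkeeping in Step 2. One must check that the symmetric tensor spreads $c_S$ evenly as $c_S/r!$ over the orderings of $S$, and that the colouring kills every non-squarefree term while keeping exactly one ordering of each rainbow $S$. The polarization identity in Step 1 is classical and can be verified by expanding $P$ as $L$ applied to $\sum_j\varepsilon_jy^j$ and averaging against $\varepsilon_1\cdots\varepsilon_r$.
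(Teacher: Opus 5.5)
Your proof is correct, and it takes a genuinely different route from the paper's.

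The paper complexifies. For $|z_j|\le 2$ it reads off $4^{-r}P(z)$ as the top Fourier coefficient of $t\mapsto P(\tfrac12\mathrm{Re}(e^{it}z))$. This gives $|P|\le 4^r\|P\|_{[-1,1]^n}$ on the polydisc of radius $2$. It then writes $\SF(P)(y)=\E[P(Z_1y_1,\dots,Z_ny_n)]$ with $Z_j=2e^{i\Theta_j}$, where $\Theta_j$ has density $(1+\cos\theta)/(2\pi)$, so that the moments of $Z_j$ are $1,1,0,0,\dots$.

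You stay over the reals instead. Your random colouring plays the role of the vanishing higher moments of $Z_j$ as the filter that kills non-squarefree monomials. Polarization applied to disjointly supported vectors keeps every point $\sum_j\varepsilon_jy^j$ inside the cube. The only loss is therefore the rainbow probability, which yields $r^r/r!\le e^r$.

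What your route buys:
\begin{itemize}
\item a sharper constant;
\item the explicit fact that $\SF(P)(x)$ is a signed combination, of total weight at most $r^r/r!$, of values of $P$ at coordinatewise sign-flips of $x$;
\item no complex analysis at all.
\end{itemize}

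The paper's route reuses the same complex-extension and Fourier-coefficient viewpoint as its lemma on homogeneous Taylor coefficients. It loses a factor $2^r$ twice: once in the bound on the polydisc (equivalently, $2^r$ on the unit polydisc by homogeneity), and once because $|Z_j|=2$. In the main theorem, your constant would simply let the truncation threshold $KD\le d_f\kappa/(32e)$ be replaced by $KD\le d_f\kappa/(8e^2)$.

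The case $r=0$, where colouring with zero colours is degenerate, is trivial and should just be set aside explicitly.
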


\begin{lemma}[Homogeneous Taylor coefficients]\label{lem:homogeneous_taylor_coefficients} Suppose $R$ is a trigonometric polynomial of total frequency at most $D$ and $|R(\theta)| \le 1$ for all $\theta \in \mathbb{R}^n$. Taking Taylor expansion at zero as $R = \sum_{r\ge 0} H_r$, where $H_r$ is homogeneous of degree $r$. Then $|H_0| \le 1$ and for $r \ge 1$, 
\[ \|H_r\|_{[-1,1]^n} \le \left( \frac{e D}{r} \right)^r. \]
\end{lemma}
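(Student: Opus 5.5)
The plan is to prove the bound for $H_r$ by reducing to one variable and using Bernstein-type inequalities for trigonometric polynomials; the bound $|H_0|\le 1$ is immediate, since $H_0=R(0)$.

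Fix $x\in[-1,1]^n$ and consider the univariate function $\rho(t)=R(tx)$ for $t\in\mathbb{R}$. Since $R(\theta)=\sum_{\|a\|_1\le D}c_a e^{i\langle a,\theta\rangle}$, we get
\[
\rho(t)=\sum_a c_a e^{it\langle a,x\rangle}.
\]
The frequencies satisfy $|\langle a,x\rangle|\le \|a\|_1\|x\|_\infty\le D$. So $\rho$ is an entire function of exponential type at most $D$, and it is bounded by $1$ on the real line. Moreover, by homogeneity, $H_r(tx)=t^rH_r(x)$, so $H_r(x)=\rho^{(r)}(0)/r!$. It therefore suffices to bound the $r$-th Taylor coefficient of $\rho$ at $0$.

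I would do this with Cauchy's estimate on a circle of radius $s$ in $\mathbb{C}$, together with a growth bound off the real axis. For $z=u+iv$ we have
\[
|\rho(z)|=\Bigl|\sum_a c_a e^{iu\langle a,x\rangle}e^{-v\langle a,x\rangle}\Bigr|.
\]
A naive bound here would involve $\sum|c_a|$, which is not controlled. The right tool is the Phragmén–Lindelöf principle for functions of exponential type $D$ bounded on $\mathbb{R}$: it gives $|\rho(u+iv)|\le e^{D|v|}$. This is the step I expect to be the main obstacle, since the paper only states the maximum modulus principle.

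To apply it rigorously, I would use the quasi-periodic structure. When the frequencies $\langle a,x\rangle$ are integers (for rational $x$ after rescaling), one can apply maximum modulus on a strip, or the standard argument for trigonometric polynomials: multiply by $e^{iDz}$ to obtain an algebraic polynomial in $w=e^{iz}$ of degree at most $2D$ bounded by $1$ on the unit circle, and then use maximum modulus in the disk. General real $x$ then follows by density and continuity.

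With $|\rho(z)|\le e^{D|z|}$ in hand, Cauchy's estimate gives
\[
|H_r(x)|\le \frac{e^{Ds}}{s^r}.
\]
Choosing $s=r/D$ yields
\[
|H_r(x)|\le \frac{e^r D^r}{r^r}=\Bigl(\frac{eD}{r}\Bigr)^r,
\]
which is exactly the claimed bound.

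Concretely, for the rescaling step: if all $x_i$ are rationals with common denominator $q$, the frequencies of $\rho(qt)$ are integers of size at most $qD$. The polynomial $w^{qD}\rho$ then has degree at most $2qD$ and is bounded by $1$ on $|w|=1$. The maximum modulus principle on the disk $|w|\le e^{|v|}$ gives a bound of $e^{qD|v|}$ in the variable $qt$, which is $e^{D|\mathrm{Im}\,t|}$ in the original variable. Both sides of the final inequality are continuous in $x$, so density finishes the proof.
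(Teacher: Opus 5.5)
Your proof is correct and follows the paper's argument. You restrict $R$ to a ray through a rational point with common denominator $q$, obtaining an integer-frequency trigonometric polynomial of degree at most $qD$. You bound its growth off the real axis by $e^{qD|\mathrm{Im}\,t|}$ via the maximum modulus principle for the associated algebraic polynomial in $w=e^{it}$. You then apply Cauchy's estimate and finish by density. The only difference is bookkeeping: you rescale the growth bound first and use radius $r/D$, whereas the paper uses radius $r/(qD)$ and rescales afterward via $H_r(p)=q^rH_r(y)$.

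Two small fixes are needed. First, for $|w|>1$ the bound does not follow from maximum modulus on the disk $|w|\le e^{|v|}$. It follows from applying maximum modulus on the unit disk to the reversed polynomial $w^{2qD}P(1/w)$, where $P$ is your degree-$2qD$ polynomial, or alternatively from the symmetry $t\mapsto -t$. Second, the case $D=0$ must be handled separately, since $s=r/D$ is undefined there; it is trivial because then $H_r=0$ for $r\ge 1$.
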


The concept of dual witness from dual linear programming has been previously used for showing lower bounds on approximate degree in~\cite{BUN20152} and~\cite{sherstov2013}.
\begin{fact}[Dual witness~\cite{BunThaler2022,BUN20152,sherstov2013}]\label{dual_witness_properties}
    Let $g: \{0,1\}^m \rightarrow \{0,1\}$ be a non-constant Boolean function. $\adeg(g) \ge d$ if and only if there exists a real function $\psi$ on $\{0,1\}^m$ such that
    \begin{align}
        & \|\psi\|_1=1 ; \label{dual_norm} \\
        & \langle \psi, p \rangle=0 \quad \text{for every polynomial $p$ of degree} <d; \label{high_pure_degree} \\
        & \langle \psi, g \rangle > 1/3. \label{correlation}.
    \end{align}
\end{fact}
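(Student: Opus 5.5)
The statement is the standard LP-duality characterization of approximate degree, so I would prove both directions through linear programming duality in the finite-dimensional space $\mathbb{R}^{\{0,1\}^m}$. Write $V_{<d}$ for the subspace of functions on $\{0,1\}^m$ that are restrictions of real polynomials of degree $<d$; because every such function has a multilinear representative, $V_{<d}$ is spanned by the characters $\chi_S$ with $|S|<d$. Saying $\adeg(g)\ge d$ means $\epsilon^*:=\min_{p\in V_{<d}}\|g-p\|_\infty>1/3$. (A minimum exists because this is the distance from a point to a subspace of a finite-dimensional normed space.) The plan is to express $\epsilon^*$ as a maximum over dual objects and then read off properties \eqref{dual_norm}--\eqref{correlation}.

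For the easy direction, suppose $\psi$ satisfies \eqref{dual_norm}--\eqref{correlation}, and let $p$ have degree $<d$ with $\|g-p\|_\infty\le 1/3$. Using \eqref{high_pure_degree} and H\"older's inequality,
\[ \langle \psi,g\rangle=\langle\psi,g-p\rangle\le\|\psi\|_1\|g-p\|_\infty\le 1/3, \]
which contradicts \eqref{correlation}. Hence no such $p$ exists, and $\adeg(g)\ge d$.

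For the converse, I would use the duality between $\ell_\infty$ and $\ell_1$ for the distance to a subspace. Concretely,
\[ \min_{p\in V_{<d}}\|g-p\|_\infty=\max\{\langle\psi,g\rangle:\ \|\psi\|_1\le1,\ \psi\perp V_{<d}\}. \]
This can be proved in either of two ways. One is to write the left side as the LP $\min\epsilon$ subject to $-\epsilon\le g(x)-p(x)\le\epsilon$ and take its LP dual, splitting $\psi=\psi^+-\psi^-$ into the multipliers of the two constraint families. The other is Hahn--Banach: the linear functional $g+p\mapsto\epsilon^*$ on $\mathrm{span}(g)+V_{<d}$ has norm $1$ in $\ell_\infty$, and extending it gives the required $\psi$. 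Strong duality holds because the primal is feasible (take $p=0$) and bounded. This yields $\psi\perp V_{<d}$ with $\|\psi\|_1\le1$ and $\langle\psi,g\rangle=\epsilon^*>1/3$.

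The remaining step, which is the only real obstacle, is upgrading $\|\psi\|_1\le 1$ to equality. Since $\langle\psi,g\rangle>0$, we have $\psi\ne0$, so we may rescale to $\psi/\|\psi\|_1$. This keeps orthogonality to $V_{<d}$ and multiplies the correlation by $1/\|\psi\|_1\ge1$, so the correlation remains $>1/3$. This gives \eqref{dual_norm}--\eqref{correlation}.

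Non-constancy of $g$ is needed only so that the statement is not vacuous for $d\ge1$. Note also that orthogonality to constants forces $\sum_x\psi(x)=0$, which is consistent with the fact that the correlation with any constant function is zero.
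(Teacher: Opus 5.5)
Your argument is correct and is the standard LP-duality (equivalently, $\ell_\infty$--$\ell_1$ distance-to-a-subspace) proof; the paper gives no proof of this Fact and simply cites it from Bun--Thaler and Sherstov, where it is derived this same way. Two small corrections: the Hahn--Banach functional should be $tg+p\mapsto t\epsilon^*$ on $\mathrm{span}(g)+V_{<d}$, and non-constancy is really needed at $d=0$, not for $d\ge 1$. At $d=0$ you have $V_{<0}=\{0\}$ and need $\epsilon^*=\|g\|_\infty=1>1/3$, and the equivalence genuinely fails for $g\equiv 0$.
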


\begin{lemma}\label{lem:squarefree_approximation}
    Let $Q: \{0,1\}^{mn} \rightarrow \mathbb{R}$ be a polynomial. Let $\{ \rho_{\theta} : \theta \in \mathbb{R}\}$ be a differentiable family of probability measures on $\{0,1\}^m$ and let $\nu_0$ and $\nu_1$ be probability measures on $\{0,1\}^m$ such that, $\nu_{(1+z)/2} = \rho_0 + \frac{z}{\kappa} \rho_0'$ for $z \in \{-1,1\}$ and some scalar $\kappa \ne 0$. Let $X^{(1)}, \dots, X^{(n)}$ be $n$ disjoint blocks of $m$ bits each and $X^{(i)}$'s are sampled independently from $\rho_{\theta_i}$. Let $R$ be defined as 
    \[ R(\theta_1,\dots,\theta_n) = \E_{X^{(i)}\sim \rho_{\theta_i}}[Q(X^{(1)}, \dots, X^{(n)})]. \]
    If $R$ is analytic in a neighborhood of 0, then, 
    \[ \SF(R)\left(\frac{z}{\kappa}\right) =  \E_{X^{(i)}\sim \nu_{\frac{1+z_i}{2}}}[Q(X^{(1)}, \dots, X^{(n)})],
    \]
    for every $z \in \{-1,1\}^n$.
\end{lemma}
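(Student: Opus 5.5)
The plan is to expand $R$ by multilinearity in the product measure and compare coefficients with the expectation on the right-hand side. First, since the blocks are independent and the underlying space $\{0,1\}^{mn}$ is finite, I would write
\[
R(\theta_1,\dots,\theta_n)=\sum_{x^{(1)},\dots,x^{(n)}} Q(x^{(1)},\dots,x^{(n)})\prod_{i=1}^n \rho_{\theta_i}(x^{(i)}),
\]
a finite sum. Each $\theta_i$ enters only through the single factor $\rho_{\theta_i}(x^{(i)})$. Consequently, for every $S\subseteq[n]$ the mixed partial derivative $\partial_S R(0)$ is obtained by differentiating each factor with $i\in S$ once:
\[
\partial_S R(0)=\sum_{x} Q(x)\prod_{i\in S}\rho_0'(x^{(i)})\prod_{i\notin S}\rho_0(x^{(i)}).
\]
Differentiability of the family is enough to justify this. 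Analyticity of $R$ near $0$ is used only so that the Taylor expansion, and hence $\SF(R)$, is defined as in the preliminaries, via $\SF(R)(y)=\sum_S \partial_S R(0)\prod_{i\in S}y_i$. The squarefree coefficients involve only first derivatives in distinct variables, so no higher derivatives of $\rho$ are needed.

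Next, I would substitute $y=z/\kappa$ with $z\in\{-1,1\}^n$ and exchange the finite sums:
\[
\SF(R)(z/\kappa)=\sum_x Q(x)\sum_{S\subseteq[n]}\prod_{i\in S}\frac{z_i}{\kappa}\rho_0'(x^{(i)})\prod_{i\notin S}\rho_0(x^{(i)}) = \sum_x Q(x)\prod_{i=1}^n\Bigl(\rho_0(x^{(i)})+\frac{z_i}{\kappa}\rho_0'(x^{(i)})\Bigr).
\]
The second equality is the distributive expansion of a product of binomials. By hypothesis, each factor equals $\nu_{(1+z_i)/2}(x^{(i)})$, so the last sum is exactly $\E_{X^{(i)}\sim\nu_{(1+z_i)/2}}[Q(X)]$ with independent blocks, which is the claim.

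The only delicate point is being careful about what $\SF(R)$ means for an analytic function rather than a polynomial. One must check that the coefficient of $\prod_{i\in S}x_i$ in the Taylor series is $\partial_S R(0)/\alpha! = \partial_S R(0)$, since $\alpha!=1$ for a 0/1 multi-index. One must also check that this coefficient agrees with the mixed derivative computed termwise, which holds because $R$ is a finite linear combination of products of single-variable functions $\rho_{\theta_i}(x^{(i)})$. Beyond this, the argument is bookkeeping.
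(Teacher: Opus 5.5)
Your proposal is correct and is essentially the paper's argument. The paper packages your computation as the product of functionals $\mathcal{L}_i(h)=h|_{\theta_i=0}+\frac{z_i}{\kappa}\partial_i h|_{\theta_i=0}$. It applies this once to the finite-sum representation of $R$, which gives the $\nu$-expectation, and once termwise to the Taylor series, which gives $\SF(R)(z/\kappa)$. You instead read off $\partial_S R(0)$ directly and use the finite formula $\SF(R)(y)=\sum_S\partial_S R(0)\prod_{i\in S}y_i$; this is the same computation and slightly cleaner, since it avoids applying the functional termwise to an infinite series.
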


\section{Overview of Proof Techniques}

The proof starts with an approximating polynomial $Q$ for $f\circ g$ and asks how to use it to approximate $f$. Each input bit of $f$ is the output of a block on which $g$ acts. If we sample each block so that its output under $g$ is a prescribed bit $y_i$, then averaging $Q$ over these blocks gives an approximation to $f(y)$. The challenge is to choose the sampling rules so that this averaging also captures the complexity of $g$. We will show that if the degree of $Q$ is too small, the resulting approximation to $f$ can be reduced to degree below $\adeg(f)$, giving a contradiction.

To make this idea precise, consider two distributions $\nu_0$ and $\nu_1$ on the inputs of $g$, chosen so that $X\sim\nu_b$ satisfies $g(X)=b$ with high probability. For an input $y=(y_1,\ldots,y_n)$ to $f$, sample the blocks $X^{(i)}$ independently from $\nu_{y_i}$. A union bound, together with the approximation guarantee for $Q$, shows that $\E Q(X^{(1)},\ldots,X^{(n)})$ approximates $f(y)$. When $\nu_b$ is supported on $g^{-1}(b)$, there is no error from the sampling. However, obtaining these approximations separately for each $y$ is not enough: we need them to arise from a single function whose degree or frequency we can control. This is why we introduce parameterized families of distributions. A similar averaging strategy has been used in~\cite{sherstov2013}.

The OR case in Section~\ref{section_inner_or} illustrates this idea directly. Sample each bit in block $X^{(i)}$ independently from a Bernoulli distribution with parameter $p_i$, and let $R(p_1,\ldots,p_n)$ be the expectation of $Q$. Setting $p_i=0$ forces the block output to be zero, while setting $p_i=a$ for a suitable $a=O(\log(2n)/m)$ makes the block output one with sufficiently high probability. Thus, setting $p=ay$ gives an approximation $R(ay)$ to $f(y)$. We may take $Q$ to be bounded in $[0,1]$ by error reduction, with only a constant factor increase in degree. Lemma~\ref{lem:bernoulli_avg} then shows that $R$ is also bounded in $[0,1]$ on $[0,1]^n$ and has degree at most $\deg Q$. The important point is that $R$ must approximate $f$ on a small scaled copy of the Boolean cube while remaining bounded on the entire cube. The dilation bound in Lemma~\ref{lem_or_dilation} shows that this requires
\[
\deg R=\Omega\!\left(\frac{\adeg(f)}{\sqrt{a}}\right),
\]
which gives the desired lower bound on $\deg Q$.

For promise threshold and arbitrary total inner functions, Sections~\ref{section_promise_threshold} and~\ref{section:inner_arbitrary} use a related approach. These inner functions are harder than OR and averaging now produces a \emph{trigonometric polynomial} $R$. The approximation to $f$ is then recovered from its \emph{squarefree extraction} $\SF(R)$. The main task is to construct a family of probability measures $\{\rho_\theta:\theta\in\mathbb{R}\}$ that encodes the two output-conditioned distributions through its value and first derivative at zero:
\[
\rho_0=\frac{\nu_0+\nu_1}{2},
\qquad
\rho'_0=\frac{\kappa}{2}(\nu_1-\nu_0).
\]
The parameter $\kappa$ measures the size of this derivative. We want $\kappa$ to be large, while keeping the frequency of the averaged function controlled by the degree of $Q$. Achieving both properties is the main purpose of the constructions in Lemmas~\ref{lem:prob_dist_th} and~\ref{prob_inner_arbitrary}.

Let $D=\deg Q$, and sample the blocks independently from $\rho_{\theta_i}$. Lemmas~\ref{lem:threshold_avg_trig} and~\ref{squarefree_probability_outputclass_relation} show that
\[
R(\theta_1,\ldots,\theta_n)
=
\E_{X^{(i)}\sim\rho_{\theta_i}}
\left[Q(X^{(1)},\ldots,X^{(n)})\right]
\]
is bounded in $[0,1]$ on $\mathbb{R}^n$ and has total frequency at most $D$. To see how to recover $f$, observe that $\nu_{(1+z_i)/2}=\rho_0+(z_i/\kappa)\rho'_0$. Expanding the product of these measures across the blocks uses either the value or the first derivative in each block. These are precisely the terms selected by squarefree extraction from the Taylor expansion of $R$. Therefore,
\[
\SF(R)(z/\kappa)
=
\E_{X^{(i)}\sim\nu_{(1+z_i)/2}}
\left[Q(X^{(1)},\ldots,X^{(n)})\right]
\]
approximates $f((1+z)/2)$ for every $z\in\{-1,1\}^n$. Thus, the derivative scale $\kappa$ allows us to recover the outer function on a cube scaled by $1/\kappa$.

The remaining step is to show that this recovery is impossible when $D$ is too small. Write the Taylor expansion of $R$ at zero as $R=\sum_{r\ge0}H_r$, where $H_r$ is homogeneous of degree $r$. Lemmas~\ref{lem:squarefree_norm_bound} and~\ref{lem:homogeneous_taylor_coefficients} give
\[
\left\|
\SF(H_r)(\,\cdot\,/\kappa)
\right\|_{[-1,1]^n}
\le
\left(\frac{4eD}{\kappa r}\right)^r
\qquad (r\ge1).
\]
If $D$ is sufficiently small compared with $\kappa\adeg(f)$, the terms of degree at least $\adeg(f)$ contribute only a small error. We can therefore discard all these terms while retaining an approximation to $f((1+z)/2)$. After substituting $z=2y-1$, this gives an approximating polynomial for $f$ of degree below $\adeg(f)$, a contradiction. Consequently,
\[
D=\Omega\!\left(\kappa\adeg(f)\right).
\]
The construction of the probability family determines the value of $\kappa$, and hence the contribution of the inner function to the lower bound.

For our 2nd warm-up example, the promise threshold, the construction comes easily from the geometry of two adjacent Hamming layers. The function $\PrTH_m^{k+1}$ is zero on the layer of weight $k$ and one on the layer of weight $k+1$, so we take $\nu_0=\sigma_k$ and $\nu_1=\sigma_{k+1}$, where $\sigma_j=\operatorname{Uniform}\{x\in\{0,1\}^m:|x|=j\}$. Let $v_k$ and $v_{k+1}$ be the nonnegative unit vectors whose squared coordinates are these distributions, and set $\phi=(v_k+v_{k+1})/\sqrt{2}$. Applying $U_\theta^{\otimes m}$ to $\phi$ and squaring its coordinates gives the family $\rho_\theta$. The interaction between the adjacent layers gives
\[
\kappa=\sqrt{(k+1)(m-k)}
=
\Theta\!\left(
\widetilde{\bdeg}(\PrTH_m^{k+1})
\right).
\]
Here we use bounded approximate degree: $Q$ approximates the composition on its promise and remains bounded on the entire Boolean cube.

For an arbitrary total $g$, the construction is challenging. Its zero and one inputs need not have such a simple geometric relationship. A \emph{dual witness} turns out to be useful here: it supplies a replacement for the Hamming-layer structure. It gives nested subspaces corresponding to polynomial degrees and an associated operator $A$. Multiplication by a degree-$r$ polynomial connects only subspaces whose degree levels differ by at most $r$. Together with the choice of eigenvalues of $A$, this ensures that averaging such a polynomial produces frequency at most $r$. At the same time, the dual witness guarantees that the part of $A$ connecting the zero and one input subspaces has norm $\Theta(\adeg(g))$. Using singular vectors of this coupling, we choose a unit vector $v$ and define
\[
\rho_\theta(x)=\left|(e^{i\theta A}v)_x\right|^2.
\]
Since $e^{i\theta A}$ is unitary, these coordinates form a probability distribution. The coupling gives the required derivative identity with $\kappa=\Theta(\adeg(g))$, while the degree-subspace structure gives the frequency bound. Combining these two properties with the squarefree extraction and truncation argument yields the general composition lower bound.

\section{Warm up: Lower Bounds for Special Classes of Inner Functions}

\subsection{Arbitrary Function composed with OR Function}
\label{section_inner_or}

We derive the following dilation bound from the classical higher-order Markov inequality; see~\cite{harris2002markov}. The proof controls homogeneous components through one-dimensional restrictions, a reduction also used in~\cite{robust_polynomials}.
Proofs of the following lemmas are deferred to \Cref{appendix-proof-f-or}.

\begin{lemma}[Dilation lemma for $\OR$]\label{lem_or_dilation}
    Let $f: \{0,1\}^n \rightarrow \{0,1\}$ be a non-constant Boolean function with $\adeg(f)=d_f \ge 1$. Let $R : \mathbb{R}^n \rightarrow \mathbb{R}$ be a polynomial with degree at most $d$. Additionally, $R(x)\in [0,1]$ for all $x \in [0,1]^n$. If for some $a \in [0,1]$,
    \[|R(ay) - f(y)| \le \frac{1}{10},\]
    for all $y \in \{0,1\}^n$, then 
    \[d \sqrt{a} > \frac{d_f}{4e}.\]
\end{lemma}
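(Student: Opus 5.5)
Set $P(y):=R(ay)$. Then $P$ has degree at most $d$ and $|P(y)-f(y)|\le 1/10$ on $\{0,1\}^n$. Suppose for contradiction that $d\sqrt a\le d_f/(4e)$. The plan is to show that the part of $P$ of degree at least $d_f$ is uniformly tiny on $[0,1]^n\supseteq\{0,1\}^n$. Truncating it then gives a polynomial of degree $<d_f$ that approximates $f$ to error below $1/3$, contradicting $\adeg(f)=d_f$.

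\emph{Step 1 (homogeneous decomposition).} Write $R=\sum_{r=0}^d H_r$ with $H_r$ homogeneous of degree $r$. Then $P(y)=\sum_r a^r H_r(y)$, so the degree-$r$ homogeneous part of $P$ is exactly $a^rH_r$. It suffices to bound $\|H_r\|_{[0,1]^n}$ for $r\ge 1$.

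\emph{Step 2 (reduction to one variable).} Fix $x\in[0,1]^n$ and set $p(t)=R(tx)$ for $t\in[0,1]$. Since $tx\in[0,1]^n$, we have $p(t)\in[0,1]$, and $\deg p\le d$. Also $H_r(x)=p^{(r)}(0)/r!$. Change variables $t=(1+s)/2$ and set $q(s)=p((1+s)/2)-\tfrac12$. Then $|q|\le \tfrac12$ on $[-1,1]$, and for $r\ge1$ we have $p^{(r)}(0)=2^r q^{(r)}(-1)$.

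\emph{Step 3 (higher-order Markov bound).} This is the key step and the main place where care is needed. The higher-order Markov inequality \cite{harris2002markov} gives
\[
|q^{(r)}(-1)|\le \tfrac12\, T_d^{(r)}(1)=\tfrac12\prod_{k=0}^{r-1}\frac{d^2-k^2}{2k+1}\le \frac{d^{2r}}{(2r-1)!!}.
\]
Using $r!\,(2r-1)!!=(2r)!/2^r$, this yields
\[
|H_r(x)|\le \frac{2^r d^{2r}}{r!\,(2r-1)!!}=\frac{4^r d^{2r}}{(2r)!}\le \Big(\frac{e d}{r}\Big)^{2r},
\]
where the last step uses $(2r)!\ge (2r/e)^{2r}$. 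Multiplying by $a^r$ and applying the assumption gives
\[
\|a^rH_r\|_{[0,1]^n}\le \Big(\frac{e d\sqrt a}{r}\Big)^{2r}\le\Big(\frac{d_f}{4r}\Big)^{2r}.
\]

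\emph{Step 4 (truncation).} For $r\ge d_f$, the bound from Step 3 is at most $4^{-2r}$. Summing over $r\ge d_f\ge1$ gives a tail of at most $\sum_{r\ge1}16^{-r}=1/15$. Let $\tilde P=\sum_{r<d_f}a^rH_r$. Then $\tilde P$ has degree $<d_f$ and satisfies $|\tilde P(y)-f(y)|\le 1/10+1/15<1/3$ for all $y\in\{0,1\}^n$. This contradicts $\adeg(f)=d_f$, so $d\sqrt a>d_f/(4e)$.

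The only delicate points are the constants in Step 3, namely the rescaling from $[0,1]$ to $[-1,1]$ and the Chebyshev derivative formula. If the constant came out slightly worse, the bound would still close, because the threshold $d_f/(4e)$ leaves slack in the geometric tail.
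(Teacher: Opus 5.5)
Your proof is correct and follows essentially the same route as the paper. Like the paper, you restrict $R$ to the rays $t\mapsto R(tx)$, apply the higher-order Markov inequality on $[0,1]$ to get $|H_r|\le (2d)^{2r}/(2r)!$, and discard the homogeneous parts of degree at least $d_f$ using the same $\sum_{r\ge d_f}16^{-r}\le 1/15$ tail bound. Your explicit affine rescaling to $[-1,1]$ (which gains a harmless factor $1/2$) and your bounding on all of $[0,1]^n$ rather than only $\{0,1\}^n$ are minor refinements of the same argument.
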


\begin{lemma}\label{lem:bernoulli_avg}
    Let $Q$ be a real-valued multilinear polynomial on the Boolean cube $\{0,1\}^{mn}$ with degree at most $d$. Let $X=(X^{(1)},\dots,X^{(n)})$ be $n$ disjoint blocks of $m$ bits each. Let $p=(p_1,\dots,p_n) \in [0,1]^n$.
    For all $mn$ bits, let $X^{(i)}_j$ be sampled independently from $Bernoulli(p_i)$. Then if $R$ is defined as
    \[ R(p_1,\dots,p_n) = \E[Q(X^{(1)},\dots,X^{(n)})], \]
    then $\deg(R) \le d$. Furthermore, if $0 \le Q \le 1$ for all $x \in \{0,1\}^{mn}$, then $0 \le R \le 1$ for $p \in [0,1]^n$. 
\end{lemma}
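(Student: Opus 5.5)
We prove Lemma~\ref{lem:bernoulli_avg} by computing $R$ monomial by monomial, using linearity of expectation and independence of the bits.

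Since $Q$ is multilinear of degree at most $d$, write
\[
Q(X)=\sum_{S\subseteq [n]\times[m],\ |S|\le d} c_S \prod_{(i,j)\in S} X^{(i)}_j .
\]
By linearity of expectation,
\[
R(p)=\sum_S c_S\, \E\Big[\prod_{(i,j)\in S} X^{(i)}_j\Big].
\]
All $mn$ bits are independent, and each monomial contains every variable at most once. So each expectation factorizes as
\[
\E\Big[\prod_{(i,j)\in S} X^{(i)}_j\Big]=\prod_{(i,j)\in S}\E\big[X^{(i)}_j\big]=\prod_{i=1}^n p_i^{|S_i|},
\qquad S_i=\{j:(i,j)\in S\}.
\]
This is a monomial in $p$ of total degree $\sum_i|S_i|=|S|\le d$. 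Hence $R$ is a polynomial in $p_1,\dots,p_n$, generally not multilinear since $p_i$ may appear to higher powers, and $\deg R\le d$.

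For boundedness, fix $p\in[0,1]^n$. Then each $\mathrm{Bernoulli}(p_i)$ is a genuine probability distribution, so
\[
R(p)=\sum_{x\in\{0,1\}^{mn}} \Pr[X=x]\,Q(x)
\]
is a convex combination of the values of $Q$ on the Boolean cube. If $0\le Q(x)\le 1$ on $\{0,1\}^{mn}$, this convex combination also lies in $[0,1]$.

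The only point needing care is that the factorization uses multilinearity of $Q$. A squared variable $X^2$ would have expectation $p$, not $p^2$, but the degree bound still holds in that case. No real obstacle arises.
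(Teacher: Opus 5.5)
Your proposal is correct and follows essentially the same route as the paper's proof. Both expand $Q$ into multilinear monomials and use independence to factor each expectation into $\prod_i p_i^{|S_i|}$, which gives total degree at most $d$. Both then get $0\le R\le 1$ because $R(p)$ is an average of the values of $Q$ under the product Bernoulli distribution.
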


\begin{theorem}[Lower bound for $f$ composed with $\OR$]\label{theorem_inner_or}
    Let $f:\{0,1\}^n \rightarrow \{0,1\}$ be an arbitrary Boolean function. Then for every $m \ge 1$,
    \[ \adeg(f \circ \OR_m) \ge \Omega\left(\adeg(f) \sqrt{\frac{m}{\log{2n}}} \right). \]
    Equivalently, $\adeg(f \circ \OR_m) \ge \tilde{\Omega}(\adeg(f)\adeg(\OR_m))$.
\end{theorem}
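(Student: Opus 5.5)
We follow the averaging strategy from the overview: start from a low-degree bounded approximation to $f\circ\OR_m$, average it over Bernoulli inputs to obtain a polynomial $R$, and let Lemma~\ref{lem_or_dilation} bound $\deg R$ from below.

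Let $D=\adeg(f\circ \OR_m)$. If $f$ is constant, the bound is trivial, so assume $f$ is non-constant. By Lemma~\ref{error_reduction} with $\epsilon=1/20$, there is a multilinear polynomial $Q$ on $\{0,1\}^{mn}$ such that:
\begin{itemize}
\item $\deg Q\le K D$ for an absolute constant $K$;
\item $0\le Q\le 1$ on the cube;
\item $|Q(x)-(f\circ\OR_m)(x)|\le 1/20$ for every $x$.
\end{itemize}
Define $R(p_1,\dots,p_n)=\E[Q(X)]$, where every bit of block $i$ is drawn independently from $\mathrm{Bernoulli}(p_i)$. By Lemma~\ref{lem:bernoulli_avg}, $R$ is a polynomial of degree at most $KD$ with $0\le R\le 1$ on $[0,1]^n$. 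These are exactly the hypotheses on $R$ in Lemma~\ref{lem_or_dilation}.

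The remaining hypothesis is $|R(ay)-f(y)|\le 1/10$ for all $y\in\{0,1\}^n$, with $a$ of order $\log(2n)/m$. Fix $y$ and sample with $p=ay$.
\begin{itemize}
\item Blocks with $y_i=0$ are identically zero, so $\OR_m$ of such a block is $0$ with certainty.
\item A block with $y_i=1$ has $\OR_m=0$ with probability $(1-a)^m\le e^{-am}$.
\end{itemize}
Choose $a=\min\{1,\ln(40n)/m\}$. Then the failure probability of each block is at most $1/(40n)$. A union bound over at most $n$ blocks shows that, except with probability at most $1/40$, the block outputs equal $y$ exactly, and on that event $|Q-f(y)|\le 1/20$. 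On the complementary event, $Q\in[0,1]$ and $f(y)\in\{0,1\}$, so the deviation is at most $1$. Hence
\[
|R(ay)-f(y)|\le \tfrac{1}{20}+\tfrac{1}{40}<\tfrac{1}{10}.
\]
When $m<\ln(40n)$ we have $a=1$. In that case the theorem reduces to $\adeg(f\circ\OR_m)\ge\adeg(f)$, which follows by restricting each block to a single variable, or equally from the same argument with $a=1$.

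Lemma~\ref{lem_or_dilation} now gives $KD\sqrt{a}>\adeg(f)/(4e)$. Since $a=O(\log(2n)/m)$, this yields
\[
D=\Omega\!\left(\adeg(f)\sqrt{m/\log(2n)}\right).
\]
The equivalent $\tilde\Omega$ form follows from $\adeg(\OR_m)=\Theta(\sqrt m)$.

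Given the lemmas, the argument is straightforward. The only point needing care is the error bookkeeping: the error of $Q$ and the union-bound error must together stay under $1/10$. This is why we take $Q$ both bounded and $1/20$-accurate via error reduction, so that the bad event contributes at most its probability.
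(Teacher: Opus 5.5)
Your proposal is correct and follows essentially the same route as the paper. Both error-reduce to a bounded $1/20$-approximant $Q$, average over product Bernoulli inputs via Lemma~\ref{lem:bernoulli_avg}, set $p=ay$ with a union bound to get $|R(ay)-f(y)|\le 1/10$, and then invoke Lemma~\ref{lem_or_dilation}. The differences are cosmetic: the paper takes $a=1-(\eta/n)^{1/m}$ so that $(1-a)^m=\eta/n$ exactly, then bounds $a\le \log(n/\eta)/m$, and it routes the error through $\E[f(Y)]$ instead of your good/bad-event split. That split does not actually need boundedness of $Q$; boundedness is needed only for the hypothesis $R\in[0,1]$ of the dilation lemma.
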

\begin{proof}
    Let $F= f \circ \OR_m$. Let $\adeg(F)= d$ and $\adeg(f)=d_f$. For constant $f$, the result is immediate since $d_f=0$ and therefore, we focus on non-constant $f$. Let $q$ be a polynomial with degree $d$ that $1/3$- approximates $F$. Then after error amplification and from Lemma~\ref{error_reduction}, we get a multilinear polynomial $Q$ with degree at most $Kd$ such that 
    \[ 0 \le Q(x) \le 1, \quad |Q(x)-F(x)| \le \eta \quad \text{for every} \; x \in \{0,1\}^{mn}.  \] 
    where $\eta = 1/20$ and $K$ is an absolute constant.
    
    \[ R(p_1,\dots,p_n) = \E[Q(X^{(1)},\dots,X^{(n)})]. \]

    Let $p \in [0,1]^n$. For $i \in [n]$, let $X^{(i)}$ be an $m$-bit block such that $X^{(i)}_j$ is independently sampled from $\operatorname{Bernoulli}(p_i)$. Thus, $\Pr[X^{(i)}_j = 1] = p_i$. 
    Let $Y^{(i)} = \OR_m(X^{(i)})$ and $Y=(Y^{(1)},\dots,Y^{(n)})$.
    Since $|Q(X)-F(X)|\le \eta$, $\E|Q(X)-F(X)|\le \eta$. Also since,
    \[|\E[Q(X)-F(X)]| \le \E|Q(X)-F(X)| \le \eta,  \quad \E[Q(X)]=R(p), \quad \E[F(X)] = \E[f(Y)] \]
    we get 
    
    \begin{equation}
    |R(p)-\E[f(Y)]| \le \eta.        
    \end{equation}

    \[ \Pr[Y^{(i)}=0] = \Pr[X^{(i)}_1=0,\dots,X^{(i)}_m=0] = (1-p_i)^m. \]
    Let $\delta = \eta/n$ and $a= 1-{\delta}^{1/m}$.
    Let us now set $p=ay$ for a fixed arbitrary $y \in \{0,1\}^n$. The goal is to bound $\Pr[Y \ne y]$. 
    If $y_i=0$, then $\Pr[Y^{(i)}=y_i] = 1$. If $y_i=1$, then $\Pr[Y^{(i)}=0] = (1-a)^m$.
    \[ \Pr[Y \ne y] = \Pr\left[\bigcup_i (Y^{(i)} \ne y_i) \right] \le \sum_i \Pr[Y^{(i)} \ne y_i] \le n\delta = \eta.\]
    
    Since $f$ is a Boolean function $|f(Y)-f(y)| \le \mathbf{1}_{\{Y \ne y\}}$ and hence $\left|\E\left[f(Y) - f(y)\right]\right| \le \E\left[|f(Y) - f(y)|\right] \le \Pr[Y \ne y] \le \eta$.
  
    \[|R(p)-f(y)| \le |R(p)-\E[f(Y)]|+|\E[f(Y)]-f(y)| \le 2 \eta = 1/10.\]
    Substituting $p=ay$, we get $|R(ay)-f(y)| \le 1/10$. 
    Note that $y$ was arbitrary and therefore, $|R(ay)-f(y)| \le 1/10$ for all $y \in \{0,1\}^n$.
    Using Lemma~\ref{lem:bernoulli_avg}, we have $0 \le R \le 1$ for $p \in [0,1]^n$ and $\deg(R) \le K d$.
    
    Finally, by Lemma~\ref{lem_or_dilation}, we get 
    \[ Kd \sqrt{a} > \frac{d_f}{4e}.  \]

    Recall $a = 1- \delta^{1/m}$. Re-writing it as $a = 1-\exp{\log \delta^{1/m}} = 1-\exp{\frac{1}{m} \log \delta}$. Using $1-e^{-x} \le x$, we get \[ a \le \frac{1}{m}(-\log{\delta}) = \frac{1}{m} \log{\frac{n}{\eta}}. \]

    Hence, we get 
    \[ K d \sqrt{ \frac{1}{m} \log{\frac{n}{\eta}}} \ge Kd \sqrt{a} > \frac{d_f}{4e}. \]

    Finally, we get,
     \[  d > \frac{d_f \sqrt{m}}{4Ke \sqrt{\log(n/\eta)}}. \]
\end{proof}

\subsection{Arbitrary Function composed with Promise Threshold Function}
\label{section_promise_threshold}

 Because threshold functions are symmetric, one of the most natural probability distributions to use would be uniform distributions over Boolean strings with Hamming weights $k$ and $k+1$. More precisely, let $\sigma_k = \text{Uniform}\{x : |x|=k\}$. The support of this distribution $\sigma_k$ contains only the inputs with Hamming weight $k$. Thus, the output distributions here would be $\nu_0 = \sigma_k$ and $\nu_1 = \sigma_{k+1}$. 

The techniques in this section are analogous to the ones used in Section~\ref{section:inner_arbitrary} and thus give an easier understanding while transitioning from earlier proofs.

The omitted proofs of this section are deferred to  Appendix~\ref{appendix:exact_threshold}.

\begin{lemma}\label{lem:prob_dist_th}
    Let $m \ge 1$ and $0 \le k \le m-1$. Let $\kappa = \sqrt{(k+1)(m-k)}$. Let $\sigma_k = \operatorname{Uniform}\{x : |x|=k\}$, $\sigma_{k+1} = \operatorname{Uniform}\{x : |x|=k+1\}$ and let $\nu_0 = \sigma_k$ and $\nu_1 = \sigma_{k+1}$. 
    Define vectors $v_j$ for $j \in \{k,k+1\}$ and $\phi$ in $\mathbb{R}^{\{0,1\}^m}$ by
    \[ v_j = \binom{m}{j}^{-1/2} \sum_{|x|=j} e_x \;, \quad  \phi = \frac{v_k + v_{k+1}}{\sqrt{2}}. \] 
    Let $U_{\theta}$ be as defined in Equation~\ref{eq:rotation_matrix}. Define $\rho_{\theta}: \{0,1\}^m \rightarrow [0,1]$ by
    \[\rho_{\theta}(x) =  (\langle e_x, U_{\theta}^{\otimes m} \phi \rangle)^2.\]
    Then $\{\rho_{\theta}: \theta \in \mathbb{R}\}$ is a family of probability measures. Furthermore,
    \[ \rho_0 = \frac{\sigma_k + \sigma_{k+1}}{2}, \quad \rho_0' = \frac{\kappa}{2}(\sigma_{k+1}-\sigma_{k}). \]
    Consequently, for $z \in \{-1,1\}$, $ \rho_0 + \frac{z}{\kappa} \rho_0' = \sigma_{k+(1+z)/2} = \nu_{(1+z)/2}$. 
\end{lemma}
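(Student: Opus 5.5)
The plan is to verify each claim directly from the structure of $U_\theta^{\otimes m}$. First, $U_\theta$ is a real orthogonal matrix, so $U_\theta^{\otimes m}$ is orthogonal on $\mathbb{R}^{\{0,1\}^m}$. Since $v_k$ and $v_{k+1}$ are unit vectors with disjoint supports, they are orthonormal, and so $\|\phi\|_2=1$. Hence $\sum_x \rho_\theta(x)=\|U_\theta^{\otimes m}\phi\|_2^2=1$. Each $\rho_\theta(x)$ is a square of a real number, so it is nonnegative, and therefore $\{\rho_\theta\}$ is a family of probability measures. At $\theta=0$ we have $U_0=I$, so $\rho_0(x)=\phi_x^2$. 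The supports are disjoint, so $\phi_x^2=\tfrac12\binom{m}{|x|}^{-1}$ on the two layers and $0$ elsewhere. This gives $\rho_0=(\sigma_k+\sigma_{k+1})/2$.

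For the derivative, write $\psi_\theta=U_\theta^{\otimes m}\phi$, so that $\rho_\theta(x)=(\psi_\theta)_x^2$ and $\rho_0'(x)=2\phi_x(\psi_0')_x$. Differentiating at $0$ gives $U_0'=\tfrac12 J$ with $J=\begin{pmatrix}0&-1\\1&0\end{pmatrix}$. By the product rule,
\[
\psi_0'=\tfrac12 G\phi,\qquad G=\sum_{j=1}^m I\otimes\cdots\otimes J_{(j)}\otimes\cdots\otimes I.
\]
On basis vectors, $Je_0=e_1$ and $Je_1=-e_0$. So $Ge_x=\sum_{j:x_j=0}e_{x\oplus e_j}-\sum_{j:x_j=1}e_{x\oplus e_j}$, that is, $G=L-L^{*}$ where $L$ is the raising operator that flips a $0$ to a $1$. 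Acting on the symmetric vectors, a standard count gives $L v_k=\sqrt{(k+1)(m-k)}\,v_{k+1}=\kappa v_{k+1}$ and $L^*v_{k+1}=\kappa v_k$. The remaining pieces, $L^*v_k$ and $Lv_{k+1}$, are supported on the layers $k-1$ and $k+2$.

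Now $\rho_0'(x)=\phi_x(G\phi)_x$, and $\phi_x$ is supported only on layers $k$ and $k+1$. Therefore only the layer-$k$ and layer-$(k+1)$ components of $G\phi$ matter. On layer $k+1$ the component is $\tfrac{1}{\sqrt2}Lv_k=\tfrac{\kappa}{\sqrt2}v_{k+1}$. On layer $k$ it is $-\tfrac{1}{\sqrt2}L^*v_{k+1}=-\tfrac{\kappa}{\sqrt2}v_k$. Multiplying by $\phi_x=\tfrac{1}{\sqrt2}(v_j)_x$ gives $\rho_0'(x)=\tfrac{\kappa}{2}(v_{k+1})_x^2-\tfrac{\kappa}{2}(v_k)_x^2$, which equals $\tfrac{\kappa}{2}(\sigma_{k+1}-\sigma_k)(x)$. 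The final identity $\rho_0+\tfrac{z}{\kappa}\rho_0'=\tfrac{1-z}{2}\sigma_k+\tfrac{1+z}{2}\sigma_{k+1}$ then reduces to $\sigma_{k+(1+z)/2}$ for $z\in\{\pm1\}$.

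The main obstacle is getting the constant in $Lv_k=\kappa v_{k+1}$ right, together with the signs. Each weight-$(k+1)$ string arises from $k+1$ weight-$k$ strings. So $L\sum_{|x|=k}e_x=(k+1)\sum_{|y|=k+1}e_y$, and after normalization the coefficient is
\[
(k+1)\binom{m}{k}^{-1/2}\binom{m}{k+1}^{1/2}=(k+1)\sqrt{\tfrac{m-k}{k+1}}=\kappa .
\]
The same computation gives $L^*v_{k+1}=\kappa v_k$, since $L^*$ is the adjoint of $L$. For the sign, note that the $-L^*$ term yields $-\sigma_k$, exactly as required.
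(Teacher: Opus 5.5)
Your proposal is correct and follows essentially the same route as the paper: orthogonality of $U_\theta^{\otimes m}$ gives the normalization, $U_0=I$ gives $\rho_0$, and the product rule at $\theta=0$ with $U_0'=\tfrac12 J$ gives $\rho_0'$. Your raising/lowering decomposition $G=L-L^*$, with $Lv_k=\kappa v_{k+1}$ and $L^*v_{k+1}=\kappa v_k$, is a cleaner packaging of the paper's coordinatewise computation of $E_jv_k$ and $E_jv_{k+1}$, where the paper counts the $m-k$ contributing positions to obtain $\kappa$.
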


\begin{lemma}\label{lem:threshold_trig_expectation}
     Let $\phi \in \mathbb{R}^{\{0,1\}^m} $ be a unit vector and let $U_{\theta}$ be as defined above. Let $\rho_{\theta}(x) = (\langle e_x, U_{\theta}^{\otimes m} \phi \rangle)^2$ be a probability measure on $\{0,1\}^m$. Let $X = (X_1,\dots,X_m)$ such that $X \sim \rho_{\theta}$. Let $S \subseteq [m]$, $|S|=r$ and $X_S= \prod_{i\in S} X_i$ be a monomial. Then, $\E(X_S)$ is a trigonometric polynomial with frequency at most $r$. 
\end{lemma}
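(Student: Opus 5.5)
We need to show that $\E(X_S)=\sum_{x}\rho_\theta(x)\prod_{i\in S}x_i$ is a trigonometric polynomial in $\theta$ of frequency at most $r=|S|$. Write $\psi_\theta=U_\theta^{\otimes m}\phi$, so that $\rho_\theta(x)=\langle e_x,\psi_\theta\rangle^2$. Then
\[
\E(X_S)=\langle \psi_\theta, M_{x_S}\psi_\theta\rangle=\langle \phi, (U_\theta^{\otimes m})^{*} M_{x_S} U_\theta^{\otimes m}\phi\rangle ,
\]
where $M_{x_S}$ is the diagonal multiplication operator by the monomial $x_S$.

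Under the tensor identification, $M_{x_S}=\bigotimes_{i=1}^m B_i$, where $B_i=\Pi_1:=e_1e_1^{*}$ for $i\in S$ and $B_i=I$ for $i\notin S$. Conjugation respects the tensor product, so
\[
(U_\theta^{\otimes m})^{*}M_{x_S}U_\theta^{\otimes m}=\bigotimes_{i=1}^m U_\theta^{*}B_iU_\theta .
\]
For $i\notin S$ the factor is $U_\theta^{*}U_\theta=I$, since $U_\theta$ is a rotation. For $i\in S$ the factor is $U_\theta^{*}\Pi_1U_\theta=w_\theta w_\theta^{*}$, where $w_\theta=U_\theta^{*}e_1=(\sin(\theta/2),\cos(\theta/2))^{T}$.

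The entries of $w_\theta w_\theta^{*}$ are
\[
\sin^2(\theta/2)=\tfrac{1-\cos\theta}{2},\qquad \cos^2(\theta/2)=\tfrac{1+\cos\theta}{2},\qquad \sin(\theta/2)\cos(\theta/2)=\tfrac{\sin\theta}{2}.
\]
Each is a trigonometric polynomial of frequency at most $1$. This is the one point that needs care: individual entries of $U_\theta$ have half-integer frequency, but the conjugated projector has only integer frequency $\le 1$.

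Each matrix entry of the full tensor product is a product of $m$ factor entries. Those coming from $i\notin S$ are constants ($0$ or $1$), and those coming from $i\in S$ have frequency at most $1$. Since multiplying trigonometric polynomials adds frequencies, every entry has frequency at most $|S|=r$. Finally, $\E(X_S)=\sum_{x,y}\phi_x\phi_y\,(\cdot)_{x,y}$ is a fixed real linear combination of these entries, so it is a trigonometric polynomial of frequency at most $r$.

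This argument does not use the specific choice of $\phi$, only that it is a fixed vector, so the lemma holds for any unit $\phi$ as stated.
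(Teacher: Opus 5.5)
Your proposal is correct and follows essentially the same route as the paper: write $\E(X_S)$ as the quadratic form $\phi^T\bigl(\bigotimes_{j} U_\theta^T C_j U_\theta\bigr)\phi$, where $C_j=e_1e_1^T$ for $j\in S$ and $C_j=I$ otherwise, and then use that $U_\theta^T e_1e_1^T U_\theta$ has entries $\tfrac{1\mp\cos\theta}{2}$ and $\tfrac{\sin\theta}{2}$, each of frequency at most $1$. Your explicit entry computation, and your remark that the half-integer frequencies of $U_\theta$ cancel, state the frequency count a bit more explicitly than the paper's one-line remark.
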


\begin{lemma}\label{lem:threshold_avg_trig} Let $Q$ be a multilinear polynomial on $n$ blocks of $m$ variables, with degree at most $D$. Also, let $0 \le Q \le 1$ on the entire Boolean cube. 
Let \[ R(\theta_1,\dots,\theta_n) = \E_{X^{(i)} \sim \rho_{\theta_i}} [Q(X^{(1)},\dots,X^{(n)})]
\]
where $X^{(i)}$ are blocks sampled independently from $\rho_{\theta_i}$ described in Lemma~\ref{lem:prob_dist_th}.

Then, $0 \le R \le 1$ on $\mathbb{R}^n$ and $R$ is a trigonometric polynomial of total frequency at most $D$.

Furthermore, for every $z \in \{-1,1\}^n$,
  \[ \SF(R)(z/\kappa) = \E_{X^{(i)} \sim \sigma_{k+(1+z_i)/2}}Q(X^{(1)},\dots,X^{(n)}). \]
\end{lemma}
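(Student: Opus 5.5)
The proof has three parts: boundedness, the frequency bound, and the squarefree identity.

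\textbf{Boundedness.} Write $Q(X^{(1)},\dots,X^{(n)})=\sum_{x}Q(x)\prod_i \mathbf{1}[X^{(i)}=x^{(i)}]$. By independence of the blocks,
\[
R(\theta)=\sum_{x\in\{0,1\}^{mn}}Q(x)\prod_{i=1}^n\rho_{\theta_i}(x^{(i)}).
\]
By Lemma~\ref{lem:prob_dist_th}, each $\rho_{\theta_i}$ is a probability measure for every real $\theta_i$, so the product is a probability measure on $\{0,1\}^{mn}$. Hence $R(\theta)$ is a convex combination of values of $Q$ on the Boolean cube. Since $0\le Q\le 1$ there, we get $0\le R\le 1$ on all of $\mathbb{R}^n$.

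\textbf{Frequency bound.} Expand the multilinear $Q$ as $\sum_T c_T\prod_{(i,j)\in T}x^{(i)}_j$ with $|T|\le D$. Split each $T$ by blocks as $T=\bigcup_i S_i$ with $S_i\subseteq[m]$. Independence then gives
\[
\E\Bigl[\prod_{(i,j)\in T}X^{(i)}_j\Bigr]=\prod_{i=1}^n \E_{X^{(i)}\sim\rho_{\theta_i}}\bigl[X^{(i)}_{S_i}\bigr].
\]
By Lemma~\ref{lem:threshold_trig_expectation}, applied with the unit vector $\phi$ of Lemma~\ref{lem:prob_dist_th}, the $i$-th factor is a univariate trigonometric polynomial in $\theta_i$ of frequency at most $|S_i|$. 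A product of such factors is a sum of exponentials $e^{i\langle a,\theta\rangle}$ with $|a_i|\le |S_i|$, so $\|a\|_1\le |T|\le D$. Summing over $T$ shows that $R$ has total frequency at most $D$.

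\textbf{Squarefree identity.} A trigonometric polynomial is entire, hence analytic near $0$. So I will apply Lemma~\ref{lem:squarefree_approximation} with the family $\rho_\theta$ and $\nu_0=\sigma_k$, $\nu_1=\sigma_{k+1}$. Its hypotheses are:
\begin{itemize}
\item $\rho_\theta$ is a differentiable family. Each $\rho_\theta(x)$ is the square of a trigonometric polynomial in $\theta$, since the entries of $U_\theta$ are $\cos(\theta/2)$ and $\sin(\theta/2)$ and the tensor power multiplies $m$ of them.
\item $\nu_{(1+z)/2}=\rho_0+\frac{z}{\kappa}\rho_0'$ for $z\in\{-1,1\}$, with $\kappa=\sqrt{(k+1)(m-k)}\neq0$. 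This is exactly the ``Consequently'' clause of Lemma~\ref{lem:prob_dist_th}.
\end{itemize}
The lemma then gives $\SF(R)(z/\kappa)=\E_{X^{(i)}\sim\sigma_{k+(1+z_i)/2}}Q$ directly.

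The only point needing care, and the step I expect to be the main obstacle, is the frequency accounting in Lemma~\ref{lem:threshold_trig_expectation}. The entries of $U_\theta^{\otimes m}$ are products of $m$ half-angle sines and cosines, so $\rho_\theta(x)$ alone has frequency up to $m$, not $|S|$. The statement I cite asserts the bound $r=|S|$ for the monomial expectation, and that bound does not come from each $\rho_\theta(x)$ separately. I will rely on it as stated rather than reprove it. The mechanism is that $\E[X_S]=\langle U_\theta^{\otimes m}\phi, M_{X_S}U_\theta^{\otimes m}\phi\rangle$, and conjugating the multiplication operator by the rotation turns each factor $x_j$ into an operator of frequency $1$ while the other tensor factors cancel to the identity. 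With that in hand, the rest is bookkeeping over independent blocks.
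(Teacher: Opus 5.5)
Your proposal is correct and follows essentially the same route as the paper. Boundedness holds because $R$ averages $Q$ over a product probability measure; the frequency bound comes from factoring each monomial across independent blocks and applying Lemma~\ref{lem:threshold_trig_expectation} blockwise; and the squarefree identity comes from combining the derivative relation of Lemma~\ref{lem:prob_dist_th} with Lemma~\ref{lem:squarefree_approximation}, whose hypotheses you check explicitly (differentiability of $\theta\mapsto\rho_\theta$, analyticity of $R$, and $\kappa\neq 0$) where the paper leaves them implicit.
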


\begin{theorem}[Lower bound for inner threshold]\label{thm:inner_th}
    For a non-constant arbitrary Boolean function $f$, and $1 \le t \le m$,
    \[\widetilde{\bdeg}(f \circ \PrTH_m^t) \ge \Omega\left(\adeg(f)\widetilde{\bdeg}(\PrTH^t_m)\right), \]
    where $\PrTH_m^t$ is defined as follows
    \[ \PrTH_m^t(x) = \begin{cases}
        0, &\text{if} \qquad |x|=t-1;\\
        1, &\text{if} \qquad |x|=t;\\
        \text{undefined}, \qquad &\text{otherwise}.
    \end{cases}\]
\end{theorem}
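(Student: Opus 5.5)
Set $k=t-1$, $\kappa=\sqrt{(k+1)(m-k)}=\sqrt{t(m-t+1)}$, $F=f\circ\PrTH_m^t$ and $D=\widetilde{\bdeg}(F)$. Write $d_f=\adeg(f)$.

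\textbf{Step 1 (bounded approximant).} Start from a bounded $1/3$-approximant of $F$ of degree $D$. Amplify it, e.g.\ by composing with a constant-degree amplification polynomial that maps $[0,1]$ into $[0,1]$ and pushes values toward $\{0,1\}$. This gives a multilinear $Q$ of degree $KD$ with $0\le Q\le 1$ on all of $\{0,1\}^{mn}$ and $|Q-F|\le\eta$ on the promise domain, for a small constant $\eta$ such as $1/20$.

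\textbf{Step 2 (averaging and squarefree extraction).} Lemma~\ref{lem:threshold_avg_trig} gives that $R(\theta)=\E_{X^{(i)}\sim\rho_{\theta_i}}Q$ satisfies $0\le R\le1$ and has total frequency at most $KD$. It also gives $\SF(R)(z/\kappa)=\E_{X^{(i)}\sim\sigma_{k+(1+z_i)/2}}Q$. Every sample from $\sigma_{k+(1+z_i)/2}$ lies in the promise, with $\PrTH$-value $y_i=(1+z_i)/2$. Hence, with no sampling error,
\[
|\SF(R)(z/\kappa)-f(y)|\le\eta .
\]

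\textbf{Step 3 (truncation).} Expand $R-\tfrac12=\sum_r H_r$, where $H_0=R(0)-\tfrac12$. Since $|R-\tfrac12|\le\tfrac12$, Lemma~\ref{lem:homogeneous_taylor_coefficients} applies with norm $1/2$ and gives $\|H_r\|_{[-1,1]^n}\le\tfrac12(eKD/r)^r$. Applying Lemma~\ref{lem:squarefree_norm_bound}, together with homogeneity to pull out the scaling by $1/\kappa$, yields
\[
\|\SF(H_r)(\cdot/\kappa)\|_{[-1,1]^n}\le\tfrac12\left(\frac{4eKD}{\kappa r}\right)^r .
\]
Suppose for contradiction that $D\le c\,\kappa d_f$ for a small constant $c$. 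Then for $r\ge d_f$ the ratio $4eKD/(\kappa r)\le 4eKc\le 1/2$ once $c$ is small enough. The tail $\sum_{r\ge d_f}$ is then at most $\tfrac12\sum_{r\ge1}2^{-r}\cdot 2^{-(d_f-1)}$, which is at most a small constant for $d_f\ge 1$. To make it at most $1/20$, tighten $c$ so the ratio is at most $1/40$.

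Define $P(y)=\tfrac12+\sum_{r<d_f}\SF(H_r)((2y-1)/\kappa)$. This is a polynomial of degree $<d_f$, and on $\{0,1\}^n$ it satisfies $|P(y)-f(y)|\le\eta+1/20\le 1/10<1/3$. That contradicts $\adeg(f)=d_f$. Therefore $D=\Omega(\kappa d_f)$.

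\textbf{Step 4 (identify $\kappa$).} It remains to show $\kappa=\sqrt{t(m-t+1)}=\Theta(\widetilde{\bdeg}(\PrTH_m^t))$; in fact only the upper bound $\widetilde{\bdeg}(\PrTH_m^t)=O(\kappa)$ is needed. Use the standard Chebyshev/Paturi construction: a univariate polynomial of degree $O(\sqrt{t(m-t+1)})$ that is bounded in $[0,1]$ on $\{0,\dots,m\}$ and separates weight $t-1$ from weight $t$, such as an amplified threshold approximator. Symmetrize in the Hamming weight to get the bounded approximant of $\PrTH_m^t$.

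\textbf{Main obstacle.} I expect Step 4 to be the main obstacle. Keeping the polynomial bounded on all weights while separating two adjacent layers at degree $O(\sqrt{t(m-t+1)})$ may need a citation to Paturi-type bounds for bounded threshold approximation. Alternatively, the statement can be read with $\kappa$ directly, in which case the lower bound $\Omega(\adeg(f)\kappa)$ already follows from Steps 1–3.
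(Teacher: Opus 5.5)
Your proposal is correct and follows the paper's proof essentially step for step: bounded amplification to a multilinear $Q$ with $0\le Q\le 1$ on the whole cube, averaging over the rotation family $\rho_\theta$ of Lemma~\ref{lem:prob_dist_th} with $\kappa=\sqrt{t(m-t+1)}$, squarefree extraction, and truncation below degree $d_f$ via Lemmas~\ref{lem:squarefree_norm_bound} and~\ref{lem:homogeneous_taylor_coefficients}. The differences are cosmetic: centering $R$ at $1/2$ only improves constants, and for Step~4 you rightly note that only $\widetilde{\bdeg}(\PrTH_m^t)=O(\kappa)$ is needed, which follows from Paturi's bound $\adeg(\THR_m^t)=O(\sqrt{t(m-t+1)})$ plus bounded amplification (since $\PrTH_m^t$ is a restriction of $\THR_m^t$), whereas the paper simply cites Nayak--Wu for the tight $\Theta(\kappa)$ bound.
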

\begin{proof}
    Let $f \circ \PrTH_m^t = F$ and $\widetilde{\bdeg}(F) = D$ and $\adeg(f)=d_f$. Let $q$ be a $1/3$-approximating polynomial for $F$ with $\widetilde{\bdeg}(F)=\deg(q)$. Also, $0 \le q \le 1$ on the entire Boolean cube. Using the amplification polynomial from the proof of Lemma 1 in~\cite{buhrman_robust_polynomials}, we can get a polynomial that approximates $F$ within an error of $\eta=1/20$ and degree upto $KD$. Additional Boolean multilinearization on this polynomial does not increase the degree. Let the resultant multilinear polynomial be $Q$, it satisfies $0 \le Q \le 1$ on $\{0,1\}^{mn}$ and approximates $F$ within an error $\eta=1/20$ for $x$ in the promise of $F$. Let $k=t-1$. Take the family of probability measures $\{\rho_{\theta}: \theta \in \mathbb{R}\}$ constructed in Lemma~\ref{lem:prob_dist_th}. Then, for $z \in \{-1,1\}$ and $\kappa = \sqrt{(k+1)(m-k)}$,
    \[  \rho_0 + \frac{z}{\kappa} \rho_0' = \sigma_{k+(1+z)/2}. \]

    We can now construct $R$ such that 
    \[ R(\theta_1,\dots,\theta_n) = \E_{X^{(i)} \sim \rho_{\theta_i}} [Q(X^{(1)},\dots,X^{(n)})],
\]
$X=(X^{(1)},\dots,X^{(n)})$ where $X^{(i)}$ are blocks sampled independently from $\rho_{\theta_i}$. Then, using Lemma~\ref{lem:threshold_avg_trig} $R$ is a trigonometric polynomial with total frequency at most $KD$. We can write its Taylor expansion at 0, in terms of its homogeneous parts as 
\[ R = \sum_{r \ge 0} H_r. \]

We get its squarefree extraction as 
\begin{equation}\label{sq_homogeneous_R}
    \SF(R) = \sum_{r\ge 0} \SF(H_r)
\end{equation}

Using Lemma~\ref{lem:threshold_avg_trig}, we get,
\[ \SF(R)(z/\kappa) = \E_{X^{(i)} \sim \sigma_{k+(1+z_i)/2}}Q(X^{(1)},\dots,X^{(n)}). \]

 Fix an arbitrary $z \in \{-1,1\}^n$ and let $y \in \{0,1\}^n$ such that $y_i = (1+z_i)/2$.
Let $X=(X^{(1)},\dots,X^{(n)})$ where $X^{(i)}$ are blocks sampled independently from $\sigma_{k+(1+z_i)/2}$. Therefore, $\Pr[|X^{(i)}|=k+y_i]=1$. This implies that $\Pr[\PrTH_m^t(X^{(i)})=y_i]=1$.

 Let $Y=(Y^{(1)},\dots,Y^{(n)})$ where $Y^{(i)}=\PrTH_m^t(X^{(i)})$. Then $F(X)=f \circ \PrTH_m^t(X) = f(y)=f(\frac{1+z}{2})$.

Since $|Q(X) - F(X) | \le \eta$, 
\[|\E Q -f(y)|= |\E (Q(X) -F(X))| \le \E|Q-F|\le \eta. \]

 Finally, we get 
 \begin{equation}\label{bound_sq_f}
  \left|\SF(R)\left(\frac{z}{\kappa}\right) - f\left(\frac{1+z}{2}\right)\right| \le  \eta.   
 \end{equation}

Now, let $P$ be a polynomial as follows:
\[ P(z) = \sum_{r=0}^{d_f -1} \frac{1}{\kappa ^r} \SF(H_r) (z).\]

By writing $\SF(H_r)\left(\frac{z}{\kappa}\right) = \kappa^{-r} \SF(H_r)(z)$, we get
\begin{align*}
    |\SF(R)\left(\frac{z}{\kappa}\right) - P(z)| &= \left|\sum_{r\ge d_f} \kappa^{-r}\SF(H_r)\left(z\right)\right| \\
    &\le \sum_{r\ge d_f} {\kappa}^{-r} \|\SF(H_r)\|_{[-1,1]^n} \\
    &\le \sum_{r\ge d_f} 4^r {\kappa}^{-r} \|(H_r)\|_{[-1,1]^n} \\
    &\le \sum_{r \ge d_f} 4^r {\kappa}^{-r} \left( \frac{e KD}{r} \right)^r.
\end{align*}

We get the first equality from Equation~\ref{sq_homogeneous_R}, the first inequality is from triangle inequality, second inequality is by using Lemma~\ref{lem:squarefree_norm_bound} and the third inequality using Lemma~\ref{lem:homogeneous_taylor_coefficients}.

Let us assume that $KD \le \frac{d_f \kappa}{32e}$. Then,

\[|\SF(R)\left(\frac{z}{\kappa}\right) - P(z)| \le \sum_{r=d_f}^{\infty} 8^{-r} \le \frac{1}{7}.\]

Finally we get, 
\[\left|P(z) - f\left(\frac{\mathbf{1}+z}{2}\right)\right| < \frac{1}{3}.\]

Then the polynomial $p(y)=P(2y -\mathbf{1}) $ which has degree less than $d_f$ approximates $f(y)$ within an error of $1/3$ and this is a contradiction. Hence, $KD > \frac{d_f \kappa}{32e}$. Therefore,

\[ D \ge \Omega \left( d_f \sqrt{t (m-t+1)} \right).\]
Since $\widetilde{\bdeg}(\PrTH_m^t) = \Theta(\sqrt{t(m-t+1)})$ (see~\cite{nayak_wu}), the result follows.
\end{proof}

\section{Lower Bounds for Arbitrary Total Boolean Functions}
\label{section:inner_arbitrary}

The omitted proofs for this section are deferred to Appendix~\ref{appendix-proof-main-result}.

\begin{lemma}\label{lemma_def_deg_operator}
    Let $g:\{0,1\}^m \rightarrow \{0,1\}$ be a non-constant Boolean function with $\adeg(g)=d$. Then there exists a non-empty set $\mathcal{X} \subseteq \{0,1\}^m$, real orthogonal projectors $E_0,E_1,\dots,E_m$ on $\mathbb{R}^{\mathcal{X}}$ and a real symmetric matrix $A = \sum_{k=0}^m \tau_k E_k$ where $\tau_k= \min\{k,d\}$ such that they satisfy the following three properties:
    \begin{align}
        &\sum_{k=0}^m E_k = I, \qquad 0 \preceq A \preceq dI; \label{range_degree_operator} \\
        &E_kM_pE_{\ell} = 0 \quad \text{if} \; |k-\ell|>r \quad \text{for every polynomial $p$ of degree at most} \; r; \label{space_jump_lim} \\
        & \frac{d}{3} \le \lambda = \|P_0AP_1\| \le \frac{d}{2}, \label{scale_output_spaces} \; \text{where} \; P_b = M_{\mathbf{1}_{\{g=b\}}}.
    \end{align}
\end{lemma}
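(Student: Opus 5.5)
The plan is to build everything from a dual witness for $g$. Let $d=\adeg(g)\ge 1$. By Fact~\ref{dual_witness_properties} there is $\psi$ with $\|\psi\|_1=1$, pure high degree $d$ (so $\langle\psi,p\rangle=0$ whenever $\deg p<d$), and $\langle\psi,g\rangle>1/3$. I would take $\psi$ to be an \emph{optimal} solution of the dual LP at degree $d-1$. Then take $\mathcal{X}=\operatorname{supp}\psi$, $w=\sqrt{|\psi|}\in\mathbb{R}^{\mathcal X}$ (a unit vector), $s=\operatorname{sign}\psi$, and $u=M_s w$ (also a unit vector). Since $d\ge1$ we have $\sum_x\psi(x)=0$, hence $u\perp w$.

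For the nested subspaces, set $V_k=\{M_w p|_{\mathcal X}:\deg p\le k\}$ for $0\le k\le m$, with $V_{-1}=0$. Let $E_k$ be the orthogonal projector onto $V_k\ominus V_{k-1}$. Since $V_m=\mathbb{R}^{\mathcal X}$, we get $\sum_k E_k=I$, and $A=\sum_k\min(k,d)E_k$ satisfies $0\preceq A\preceq dI$; this gives (\ref{range_degree_operator}). For (\ref{space_jump_lim}), note that $M_p$ commutes with $M_w$, so $M_pV_\ell\subseteq V_{\ell+r}$ when $\deg p\le r$, and hence $E_kM_pE_\ell=0$ for $k>\ell+r$. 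Since $M_p$ is real symmetric, transposing gives the case $\ell>k+r$. Two spectral facts follow from this construction. First, $w\in V_0$, so $Aw=0$. Second, for $\deg p<d$ we have $\langle M_wp,u\rangle=\sum\psi p=0$, so $u\perp V_{d-1}$ and $Au=du$.

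The upper bound $\lambda\le d/2$ is formal. Since $P_0P_1=0$, we have $P_0AP_1=P_0(A-\tfrac d2I)P_1$, and $\|A-\tfrac d2 I\|\le d/2$ because $0\preceq A\preceq dI$.

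For the lower bound, I would use $c=(P_1-P_0)w$. Because $A\succeq0$ and $Aw=0$, expanding $0=\langle w,Aw\rangle$ gives $\langle P_0w,AP_1w\rangle=-\tfrac14\langle c,Ac\rangle$. Because $u$ is a unit top eigenvector, $\langle c,Ac\rangle\ge d\langle c,u\rangle^2$. Moreover $\langle c,u\rangle=\sum\psi(2g-1)=2\langle\psi,g\rangle$. Together with $\|P_0w\|\|P_1w\|\le\tfrac12$, this yields
\[
\lambda\ \ge\ \frac{2d\langle\psi,g\rangle^2}{1}\cdot\frac{1}{1}\ \ge\ 2d\langle\psi,g\rangle^2 .
\]
This is not yet $d/3$. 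I expect that closing this gap is the main obstacle. My first attempt would be to optimise the test vectors over the invariant pair $\{w,u\}$: take $x\in\operatorname{span}\{P_0w,P_0u\}$ and $y\in\operatorname{span}\{P_1w,P_1u\}$, and write $A=d\,uu^{\top}+A'$ with $A'\succeq0$ and $A'u=A'w=0$. The cross terms of $A'$ can be controlled by the same ``kills $w$ and $u$'' trick used above. This reduces the problem to a $2\times2$ optimisation whose value depends only on $\|P_bw\|$ and the correlation. If that optimisation still falls short, the remaining freedom is in the choice of $\psi$ and of $\mathcal X$. I would then use complementary slackness for an optimal dual: $\psi\,(g-p^*)$ has constant sign, where $p^*$ is a best degree-$(d-1)$ approximant. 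The aim is to force the mass of $\psi$ to align with $g$ and push the effective correlation up to the needed level. Restricting $\mathcal X$ does not disturb (\ref{range_degree_operator}) or (\ref{space_jump_lim}).
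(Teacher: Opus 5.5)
Your construction of $\mathcal{X}$, the $V_k$, $E_k$ and $A$, the first two properties, the facts $Aw=0$ and $Au=du$, and the upper bound $\lambda\le d/2$ all match the paper's proof; your $w,u$ are the paper's $\delta,w$. The gap is the lower bound $\lambda\ge d/3$, which you do not prove. You test $C=P_0AP_1$ only on the symmetric pair $(P_0w,P_1w)$, and $w\in\ker A$. So the top eigenvector $u$ enters only through $\langle c,Ac\rangle\ge d\langle c,u\rangle^2$. Your bound is therefore quadratic in the correlation, $\lambda\ge 2d\langle\psi,g\rangle^2$, which only gives $\lambda>2d/9$. The repairs you suggest (an unexecuted $2\times 2$ optimisation, or complementary slackness for an optimal dual) are left open, and the second is not needed at all.

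The missing idea is to pair $u$ against $w$ through the commutator $[A,P_1]$. In the splitting $\mathbb{R}^{\mathcal X}=P_0\mathbb{R}^{\mathcal X}\oplus P_1\mathbb{R}^{\mathcal X}$,
\[
[A,P_1]=\begin{pmatrix}0 & C\\ -C^{\top} & 0\end{pmatrix},
\]
so $\|[A,P_1]\|=\|C\|=\lambda$. Since $u,w$ are unit vectors with $Aw=0$, $Au=du$ and $A=A^{\top}$,
\[
\lambda\ \ge\ |\langle u,[A,P_1]w\rangle| = |\langle Au,P_1w\rangle-\langle u,P_1Aw\rangle| = d\,|\langle u,P_1w\rangle| = d\,\langle\psi,g\rangle > \frac{d}{3}.
\]
This bound is linear in the correlation. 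Written in terms of $C$, it is the antisymmetric combination $\langle P_0u,CP_1w\rangle-\langle P_0w,CP_1u\rangle=d\langle\psi,g\rangle$. That combination is bounded by $\lambda(\|P_0u\|\|P_1w\|+\|P_0w\|\|P_1u\|)=2\lambda\|P_0w\|\|P_1w\|\le\lambda$. It is exactly the combination in which your $A'=A-d\,uu^{\top}$ cancels, because $A'u=A'w=0$. So your $2\times 2$ idea was pointing the right way, but you needed this mixed pairing rather than $(P_0w,P_1w)$. No optimality of $\psi$ and no shrinking of $\mathcal{X}$ is required.
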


\begin{lemma}\label{prob_inner_arbitrary}
Let $\mathcal{X},A, \lambda,P_0,P_1$ be as defined in previous Lemma~\ref{lemma_def_deg_operator} and $\kappa = 2 \lambda$. There exist real unit vectors $u_0 \in P_0 \mathbb{R}^{\mathcal{X}}$ and $u_1\in P_1 \mathbb{R}^{\mathcal{X}}$ such that
     \[ P_0 A u_1 = \lambda u_0, \qquad  P_1 A u_0 = \lambda u_1. \]
    Define $v = \frac{u_0+i u_1}{\sqrt{2}}$ and for real $\theta$, define $\rho_{\theta}: \mathcal{X} \rightarrow [0,1]$ by
    \[\rho_{\theta}(x) = \left| (e^{i \theta A} v)_x \right|^2.\]
    Then $\{\rho_{\theta}: \theta \in \mathbb{R}\}$ is a family of probability measures.
    
    Also, there are probability measures $\nu_0$ and $\nu_1$ supported on $\mathcal{X} \cap g^{-1}(0)$ and $\mathcal{X} \cap g^{-1}(1)$ respectively, such that
    \[ \rho_0 = \frac{\nu_0 + \nu_1}{2}, \qquad \rho_0' =\frac{\kappa}{2} (\nu_1 - \nu_0). \]

    Consequently, for $z \in \{-1,1\}$
    \[ \nu_{\frac{1+z}{2}} = \rho_0 + \frac{z}{\kappa} \rho_0' . \]
\end{lemma}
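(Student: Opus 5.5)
The plan has three parts: obtain $u_0,u_1$ from a singular value decomposition of the off-diagonal block $B=P_0AP_1$; check that $\rho_\theta$ is a probability measure by unitarity; and read off $\rho_0$ and $\rho_0'$ directly from the definitions.

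\textbf{Step 1: the singular pair.} By \eqref{scale_output_spaces}, $B=P_0AP_1$ is a real matrix with $\|B\|=\lambda>0$. Take a top singular pair: real unit vectors $u_1,u_0$ with
\[
Bu_1=\lambda u_0,\qquad B^*u_0=\lambda u_1 .
\]
\begin{itemize}
\item Since $B=P_0(\cdot)$, we get $u_0=\lambda^{-1}Bu_1\in P_0\mathbb{R}^{\mathcal X}$.
\item Since $A$ is symmetric, $B^*=P_1AP_0$, so $u_1\in P_1\mathbb{R}^{\mathcal X}$.
\item Hence $P_1u_1=u_1$, which gives $P_0Au_1=Bu_1=\lambda u_0$; likewise $P_1Au_0=B^*u_0=\lambda u_1$.
\end{itemize}

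\textbf{Step 2: probability measures.} The matrix $A$ is real symmetric, so $e^{i\theta A}$ is unitary. The vectors $u_0,u_1$ have disjoint supports, hence are orthogonal, so $v$ is a unit vector. Therefore $\sum_x\rho_\theta(x)=\|e^{i\theta A}v\|_2^2=1$, and each $\rho_\theta(x)\ge 0$.

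\textbf{Step 3: value at $0$.} Define $\nu_b(x)=u_b(x)^2$. This is a probability measure supported on $\mathcal X\cap g^{-1}(b)$. Because the supports of $u_0$ and $u_1$ are disjoint,
\[
\rho_0(x)=\tfrac12\bigl|u_0(x)+iu_1(x)\bigr|^2=\tfrac12\bigl(u_0(x)^2+u_1(x)^2\bigr),
\]
which is exactly $(\nu_0+\nu_1)/2$.

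\textbf{Step 4: derivative at $0$.} Write $w_\theta=e^{i\theta A}v$. Then
\[
\rho_\theta'(x)=2\operatorname{Re}\bigl(\overline{w_\theta(x)}\,w_\theta'(x)\bigr),\qquad w_0'=iAv .
\]
We have $iAv=(iAu_0-Au_1)/\sqrt2$, and $A$ is real.
\begin{itemize}
\item For $x$ with $g(x)=0$: $\overline{v_x}=u_0(x)/\sqrt2$ is real. The real part of $(iAv)_x$ is $-(Au_1)_x/\sqrt2=-\lambda u_0(x)/\sqrt2$, using $P_0Au_1=\lambda u_0$. So $\rho_0'(x)=2\cdot\tfrac12\,u_0(x)\bigl(-\lambda u_0(x)\bigr)=-\lambda\nu_0(x)$.
\item For $x$ with $g(x)=1$: $\overline{v_x}=-iu_1(x)/\sqrt2$. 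We need $\operatorname{Re}\bigl(-i\,u_1(x)\,(iAu_0-Au_1)_x\bigr)/1$, which equals $u_1(x)(Au_0)_x=\lambda u_1(x)^2$, using $P_1Au_0=\lambda u_1$. So $\rho_0'(x)=\lambda\nu_1(x)$.
\end{itemize}
Together these give $\rho_0'=\lambda(\nu_1-\nu_0)=\tfrac{\kappa}{2}(\nu_1-\nu_0)$ with $\kappa=2\lambda$.

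\textbf{Step 5: the consequence.} For $z=\pm1$,
\[
\rho_0+\tfrac{z}{\kappa}\rho_0'=\tfrac{1}{2}(\nu_0+\nu_1)+\tfrac{z}{2}(\nu_1-\nu_0),
\]
which equals $\nu_1$ when $z=1$ and $\nu_0$ when $z=-1$.

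The only delicate point is the sign and phase bookkeeping in Step 4. It works because the cross terms vanish: the supports of $u_0$ and $u_1$ are disjoint, and on each support only the off-diagonal block of $A$ contributes to the real part. No deeper obstacle is expected.
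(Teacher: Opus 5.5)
Your proposal is correct and follows the paper's proof essentially step for step. It uses a top singular pair of $P_0AP_1$, unitarity of $e^{i\theta A}$ for the normalization, $\nu_b=u_b^2$, and a case split on $g(x)$ to evaluate $\rho_0'(x)=2\operatorname{Re}\bigl(\overline{v_x}(iAv)_x\bigr)$; your justification that $u_0=\lambda^{-1}P_0AP_1u_1$ and $u_1=\lambda^{-1}P_1AP_0u_0$ lie in the ranges of $P_0$ and $P_1$ is slightly more careful than the paper, which simply asserts $P_bu_b=u_b$.
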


\begin{lemma}\label{squarefree_probability_outputclass_relation}
    Let $Q$ be a real multilinear polynomial on $n$ disjoint blocks of $m$ bits each. Let degree of $Q$ be at most $D$ and $0 \le Q \le 1$ on the Boolean cube $\{0,1\}^{mn}$. Let $X^{(1)},\dots, X^{(n)}$ be $n$ independent disjoint blocks such that $X^{(i)}$ is sampled from $\rho_{\theta_i}$, where $\theta_i \in \mathbb{R}$ and $\rho_{\theta_i}$ are as described in the previous Lemma~\ref{prob_inner_arbitrary}. Let $R$ be defined as follows: 
    \[ R(\theta_1,\dots,\theta_n) = \E_{X^{(i)}\sim \rho_{\theta_i}}[{Q}(X^{(1)},\dots,X^{(n)})]. \]

    Then, $R$ is a real-valued trigonometric polynomial with total frequency at most $D$ and $0 \le R \le 1$ on $\mathbb{R}^n$. Furthermore, the squarefree extraction of $R$ satisfies:
    \begin{equation}
        \SF(R)(z/\kappa) = \E_{X^{(i)}\sim \nu_{(1+z_i)/2}} Q(X^{(1)},\dots,X^{(n)})  \qquad (z \in \{-1,1\}^n).
    \end{equation}
\end{lemma}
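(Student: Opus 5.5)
We need to prove Lemma~\ref{squarefree_probability_outputclass_relation}: $R$ is a real trigonometric polynomial of total frequency at most $D$, satisfies $0\le R\le 1$, and obeys the squarefree identity. The plan is to reduce everything to a single block and a single monomial, then invoke Lemma~\ref{lem:squarefree_approximation} for the last claim.

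\emph{Boundedness.} By Lemma~\ref{prob_inner_arbitrary}, each $\rho_{\theta_i}$ is a probability measure on $\mathcal{X}\subseteq\{0,1\}^m$. Hence $R(\theta)$ is an average of values of $Q$ on the Boolean cube, so $0\le R\le 1$ and $R$ is real.

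\emph{Frequency bound.} Write $Q=\sum_S c_S\prod_{i=1}^n X^{(i)}_{S_i}$, where $S_i=S\cap\text{block } i$ and $\sum_i|S_i|\le D$. By independence of the blocks,
\[
R(\theta)=\sum_S c_S\prod_i \E_{\rho_{\theta_i}}[X_{S_i}].
\]
It therefore suffices to show that, for a monomial $p=X_T$ with $|T|=r$, the function $\theta\mapsto\E_{\rho_\theta}[p]$ is a trigonometric polynomial in $\theta$ of frequency at most $r$; a product of such factors then has total frequency at most $\sum_i|S_i|\le D$. We write
\[
\E_{\rho_\theta}[p]=\langle e^{i\theta A}v, M_p e^{i\theta A}v\rangle,
\]
restricting $p$ to $\mathcal{X}$, so that $M_p$ acts on $\mathbb{R}^{\mathcal{X}}$. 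Using $e^{i\theta A}=\sum_k e^{i\theta\tau_k}E_k$ (the spectral decomposition from Lemma~\ref{lemma_def_deg_operator}, where the $E_k$ are mutually orthogonal projectors summing to $I$), this becomes
\[
\E_{\rho_\theta}[p]=\sum_{k,\ell} e^{i\theta(\tau_\ell-\tau_k)}\,\langle E_k v, E_k M_p E_\ell v\rangle .
\]
By property~\eqref{space_jump_lim}, only pairs with $|k-\ell|\le r$ survive. Since $\tau_k=\min\{k,d\}$ is $1$-Lipschitz and integer-valued, each surviving frequency satisfies $|\tau_\ell-\tau_k|\le r$. This is the key step. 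The one subtlety to check is that the projectors are mutually orthogonal, so that the spectral calculus above is valid; this should be implicit in $\sum_k E_k=I$ for orthogonal projectors.

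\emph{Squarefree identity.} Here we apply Lemma~\ref{lem:squarefree_approximation} with the family $\rho_\theta$ and the measures $\nu_0,\nu_1$ from Lemma~\ref{prob_inner_arbitrary}, which supplies $\nu_{(1+z)/2}=\rho_0+\frac{z}{\kappa}\rho_0'$ with $\kappa=2\lambda\ne0$ (indeed $\lambda\ge d/3>0$ for non-constant $g$, where $d\ge1$). Two hypotheses need verifying. First, the family must be differentiable: it is entire in $\theta$ because $e^{i\theta A}$ is. Second, $R$ must be analytic near $0$: it is a finite trigonometric polynomial, hence entire. We extend $\rho_\theta$ by zero outside $\mathcal{X}$ to view the measures on $\{0,1\}^m$. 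The lemma then yields the stated identity for every $z\in\{-1,1\}^n$.

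I expect the frequency bound to be the main obstacle, specifically making the Lipschitz argument for $\tau$ interact correctly with~\eqref{space_jump_lim}. The remaining steps are bookkeeping.
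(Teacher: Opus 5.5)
Your proposal is correct and follows the paper's proof essentially step for step. The steps are: boundedness by averaging; reduction to per-block monomials via independence; the spectral expansion $\E_{\rho_\theta}[p]=\sum_{k,\ell}e^{i\theta(\tau_\ell-\tau_k)}\langle E_k v, M_p E_\ell v\rangle$, cut down by property~\eqref{space_jump_lim} and $|\tau_k-\tau_\ell|\le|k-\ell|$; and then Lemma~\ref{lem:squarefree_approximation} for the identity. The extra checks you flag are all valid and are left implicit in the paper: mutual orthogonality of the $E_k$, $\kappa\neq 0$, differentiability of $\rho_\theta$, and extending $\rho_\theta$ by zero to $\{0,1\}^m$.
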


\begin{theorem}[Composition Theorem]\label{thm:composition-main}
    Let $f: \{0,1\}^n \rightarrow \{0,1\}$ and $g: \{0,1\}^m \rightarrow \{0,1\}$ be arbitrary total Boolean functions. Then 
    \[ \adeg(f \circ g) = \Omega(\adeg(f) \adeg(g)). \]
\end{theorem}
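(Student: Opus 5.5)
The plan is to follow the template of the proof of Theorem~\ref{thm:inner_th}, replacing the Hamming-layer family of Lemma~\ref{lem:prob_dist_th} with the dual-witness family of Lemma~\ref{prob_inner_arbitrary}.

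\emph{Trivial cases.} If $f$ or $g$ is constant, the statement is immediate: either the right-hand side is $0$, or $f\circ g$ is constant. So assume both are non-constant, and set $d_f=\adeg(f)$, $d=\adeg(g)\ge 1$ and $D=\adeg(f\circ g)$.

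\emph{Step 1: a bounded approximant.} Apply Lemma~\ref{error_reduction} to $F=f\circ g$. This gives a multilinear $Q$ of degree at most $KD$ with $0\le Q\le 1$ on $\{0,1\}^{mn}$ and $|Q-F|\le \eta=1/20$ everywhere. Since $g$ is total, no promise issues arise, unlike in the threshold case.

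\emph{Step 2: the averaged trigonometric polynomial.} Take $\mathcal{X}$, $A$, $\lambda$ from Lemma~\ref{lemma_def_deg_operator}, and $\rho_\theta,\nu_0,\nu_1$, $\kappa=2\lambda\in[2d/3,d]$ from Lemma~\ref{prob_inner_arbitrary}. Each $\rho_\theta$ lives on $\mathcal{X}\subseteq\{0,1\}^m$, so we view it as a measure on $\{0,1\}^m$. Define $R(\theta)=\E_{X^{(i)}\sim\rho_{\theta_i}}Q(X)$. By Lemma~\ref{squarefree_probability_outputclass_relation}, $R$ is a trigonometric polynomial of total frequency at most $KD$, satisfies $0\le R\le 1$ on $\mathbb{R}^n$, and
\[
\SF(R)(z/\kappa)=\E_{X^{(i)}\sim\nu_{(1+z_i)/2}}Q(X).
\]
Because $\nu_b$ is supported on $g^{-1}(b)$, every sample $X$ satisfies $F(X)=f(y)$ with $y=(1+z)/2$. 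Hence
\[
\left|\SF(R)(z/\kappa)-f\!\left(\tfrac{1+z}{2}\right)\right|\le\eta \qquad\text{for all } z\in\{-1,1\}^n,
\]
with no union-bound loss.

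\emph{Step 3: truncation.} Expand $R=\sum_r H_r$ at $0$ and set
\[
P(z)=\sum_{r<d_f}\kappa^{-r}\SF(H_r)(z).
\]
By Lemma~\ref{lem:squarefree_norm_bound} and Lemma~\ref{lem:homogeneous_taylor_coefficients},
\[
\left|\SF(R)(z/\kappa)-P(z)\right|\le\sum_{r\ge d_f}\left(\frac{4eKD}{\kappa r}\right)^r.
\]
Now suppose $KD\le d_f\kappa/(32e)$. For $r\ge d_f$ each term is at most $8^{-r}$, so the tail sum is at most $1/7$, and $P$ approximates $f((1+z)/2)$ within $1/20+1/7<1/3$. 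Then $p(y)=P(2y-\mathbf 1)$ has degree less than $d_f$ and $1/3$-approximates $f$, which is a contradiction. Therefore
\[
D>\frac{d_f\kappa}{32eK}\ge\frac{d_f d}{48eK}=\Omega(\adeg(f)\adeg(g)).
\]

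\emph{Main obstacle.} The theorem itself is a short assembly once the lemmas are in hand. The real difficulty sits in Lemmas~\ref{lemma_def_deg_operator} and~\ref{squarefree_probability_outputclass_relation}, and the point I would check most carefully is the frequency bound. One has to verify the following: when $\E Q$ is expanded under $\prod_i\rho_{\theta_i}$, each monomial of degree $r_i$ in block $i$ contributes $\langle e^{i\theta_iA}v, M_p e^{i\theta_iA}v\rangle$. The band structure $E_kM_pE_\ell=0$ for $|k-\ell|>r_i$, combined with the integer eigenvalues $\tau_k$, forces frequencies $\tau_k-\tau_\ell$ of absolute value at most $r_i$; the capping $\tau_k=\min\{k,d\}$ only shrinks these differences. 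Summing over blocks then gives total frequency at most $\sum_i r_i\le KD$.

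A second point to confirm is that Lemma~\ref{lem:homogeneous_taylor_coefficients} applies even though $R$ is only bounded in $[0,1]$ rather than $[-1,1]$. This is fine, since $[0,1]\subseteq[-1,1]$.
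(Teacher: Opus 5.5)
Your proposal is correct and follows the paper's proof essentially step for step. The steps are the same: error reduction to a bounded multilinear $Q$, then the dual-witness family of Lemmas~\ref{lemma_def_deg_operator} and~\ref{prob_inner_arbitrary}. Next comes averaging into a $[0,1]$-valued trigonometric polynomial of total frequency at most $KD$ with exact recovery $\SF(R)(z/\kappa)=\E_{\nu}Q$, and finally truncation below degree $\adeg(f)$ via Lemmas~\ref{lem:squarefree_norm_bound} and~\ref{lem:homogeneous_taylor_coefficients}. The only differences are cosmetic: you take $\eta=1/20$ where the paper takes $1/10$, and you make the final constant explicit using $\kappa\ge 2d/3$.
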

\begin{proof}
    Let $f \circ g = F$ and $\adeg(F) = D$ and $\adeg(f)=d_f$. If $f$ or $g$ are constant, then the result follows immediately. Therefore, let us see for non-constant $f$ and $g$. Let $q$ be a $1/3$-approximating polynomial for $F$ with $\adeg(F)=\deg(q)$. Let $\eta = 1/10$. Using Lemma~\ref{error_reduction}, there exists a multilinear polynomial $Q$ of degree at most $KD$, where $K$ is a constant, such that 
    \[0 \le Q \le 1, \qquad |Q(x)- f\circ g(x)| \le \eta, \qquad \text{for every} \; x \in \{0,1\}^{mn}.  \]

    Using Lemma~\ref{lemma_def_deg_operator} and Lemma~\ref{prob_inner_arbitrary}, construct a family of probability measures $\{\rho_{\theta}: \theta \in \mathbb{R}\}$ such that for $z \in \{-1,1\}$ and $\kappa =\Theta(\adeg(g))$,
    \[  \rho_0 + \frac{z}{\kappa} \rho_0' = \nu_{(1+z)/2}. \]
     By Lemma~\ref{prob_inner_arbitrary}, the probability measures $\nu_0$ and $\nu_1$ are supported on $\mathcal{X}_0 = \mathcal{X} \cap g^{-1}(0)$ and $\mathcal{X}_1 = \mathcal{X} \cap g^{-1}(1)$ respectively where $\mathcal{X}$ is as described in Lemma~\ref{lemma_def_deg_operator}.
    We can now construct $R$ such that 
    \[ R(\theta_1,\dots,\theta_n) = \E_{X^{(i)} \sim \rho_{\theta_i}} [Q(X^{(1)},\dots,X^{(n)})],
\]
$X=(X^{(1)},\dots,X^{(n)})$ where $X^{(i)}$ are blocks sampled independently from $\rho_{\theta_i}$. Then, using Lemma~\ref{squarefree_probability_outputclass_relation} $R$ is a trigonometric polynomial with total frequency at most $KD$. Also $0 \le R(\theta) \le 1$ for all $\theta \in \mathbb{R}^n$. We can write its Taylor expansion at 0, in terms of its homogeneous parts as 
\[ R = \sum_{r \ge 0} H_r. \]

Its squarefree extraction is 
\begin{equation}\label{square_homogeneous_2}
    \SF(R) = \sum_{r\ge 0} \SF(H_r)
\end{equation}

Using Lemma~\ref{squarefree_probability_outputclass_relation}, we get for every $z \in \{-1,1\}^n$,
\[ \SF(R)(z/\kappa) = \E_{X^{(i)} \sim \nu_{(1+z_i)/2}}Q(X^{(1)},\dots,X^{(n)}). \]

 Fix an arbitrary $z \in \{-1,1\}^n$ and let $y \in \{0,1\}^n$ such that $y_i = (1+z_i)/2$.
Let $X=(X^{(1)},\dots,X^{(n)})$ where $X^{(i)}$ are blocks sampled independently from $\nu_{(1+z_i)/2}$. Therefore, $\Pr[X^{(i)} \in g^{-1}((1+z_i)/2)]=1$ and thus, $\Pr[g(X^{(i)})=y_i]=1$.

 Let $Y=(Y^{(1)},\dots,Y^{(n)})$ where $Y^{(i)}=g(X^{(i)})$. Then $F(X)=f \circ g(X) = f(y)=f(\frac{\mathbf{1}+z}{2})$.

Since $|Q(X) - F(X) | \le \eta$, 
\[|\E Q -f(y)|= |\E (Q(X) -F(X))| \le \E|Q-F|\le \eta. \]

 Finally, we get 
 \begin{equation}\label{eq:squarefree_f_relation}
  \left|\SF(R)\left(\frac{z}{\kappa}\right) - f\left(\frac{1+z}{2}\right)\right| \le  \eta.   
 \end{equation}

Now, let $P$ be a polynomial as follows:
\[ P(z) = \sum_{r=0}^{d_f -1} \frac{1}{\kappa ^r} \SF(H_r) (z).\]

Therefore, $\deg(P)$ is at most $d_f-1$.

By writing $\SF(H_r)\left(\frac{z}{\kappa}\right) = \kappa^{-r} \SF(H_r)(z)$, we get
\begin{align*}
    |\SF(R)\left(\frac{z}{\kappa}\right) - P(z)| &= \left|\sum_{r\ge d_f} \kappa^{-r}\SF(H_r)\left(z\right)\right| \\
    &\le \sum_{r\ge d_f} {\kappa}^{-r} \|\SF(H_r)\|_{[-1,1]^n} \\
    &\le \sum_{r\ge d_f} 4^r {\kappa}^{-r} \|(H_r)\|_{[-1,1]^n} \\
    &\le \sum_{r \ge d_f} 4^r {\kappa}^{-r} \left( \frac{e KD}{r} \right)^r.
\end{align*}

We get the first equality from Equation~\ref{square_homogeneous_2}, and the subsequent inequalities from triangle inequality, Lemma~\ref{lem:squarefree_norm_bound} and Lemma~\ref{lem:homogeneous_taylor_coefficients}.

Let us assume that $KD \le \frac{d_f \kappa}{32e}$. Then,

\begin{equation}\label{eq:squarefree_auxiliary_relation}
    |\SF(R)\left(\frac{z}{\kappa}\right) - P(z)| \le \sum_{r=d_f}^{\infty} 8^{-r} \le \frac{1}{7}.
\end{equation}

Finally from Equation~\ref{eq:squarefree_f_relation} and Equation~\ref{eq:squarefree_auxiliary_relation}, 
\[\left|P(z) - f\left(\frac{\mathbf{1}+z}{2}\right)\right| < \frac{1}{3}.\]

Then the polynomial $p(y)=P(2y -\mathbf{1}) $ which has degree less than $d_f$ approximates $f(y)$ within an error of $1/3$ and this is a contradiction. Hence, $KD > \frac{d_f \kappa}{32e}$. Therefore,

\[ D \ge \Omega \left( d_f \kappa \right).\]
Since $\adeg(g) = \Theta(\kappa)$, the result follows.
\end{proof}

\printbibliography
\appendix

\section{Proofs of Lemmas from the Preliminaries}\label[appendix]{appendix-proof-priliminaries}

\begin{lemma*}[Restatement of Lemma~\ref{lem:squarefree_norm_bound}]
    Let $P$ be a real homogeneous polynomial of degree $r$ in $n$ variables. Then, 
    \[ \|\SF(P)\|_{[-1,1]^n} \le 4^r \|P\|_{[-1,1]^n} \]
\end{lemma*}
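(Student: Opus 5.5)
We will write $\SF(P)$ as a signed average of restrictions of $P$ to rescaled copies of the cube, so that the sup-norm bound reduces to the triangle inequality. The natural device is an averaging identity. For a monomial $x^\alpha$ with $|\alpha|=r$ and a point $x\in[-1,1]^n$, take independent uniform random signs $\epsilon\in\{-1,1\}^n$ and consider $\E_\epsilon\bigl[\prod_i \epsilon_i \cdot P(\epsilon_1 t_1x_1,\dots,\epsilon_n t_nx_n)\bigr]$. This isolates monomials odd in every variable, which is not the same as squarefree. So instead we use a polarization-type formula: for a homogeneous $P$ of degree $r$, the multilinear part $\SF(P)$ evaluated at $x$ is the coefficient of $\prod_{i\in S}$ (first powers). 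We extract it by a finite-difference operator in auxiliary scalings.

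Concretely, introduce one complex (or real) scaling per variable, $x_i\mapsto \omega_i x_i$. A monomial $x^\alpha$ picks up the factor $\omega^\alpha$. To keep only monomials with every $\alpha_i\in\{0,1\}$, average over $\omega_i$ ranging over the three cube roots of unity, weighted so that exponent $0$ and $1$ survive and exponent $\ge2$ dies. Since roots of unity cannot kill every exponent $\ge 2$, we exploit homogeneity to make a univariate reduction. Set $x_i\mapsto x_i s_i$ with $s_i\in\{0,1\}$. By inclusion--exclusion over subsets $T$, $\sum_{T}(-1)^{|S|-|T|}P(x_T)$ isolates monomials whose support is exactly $S$. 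Among those, the squarefree ones, which need $|S|=r$, are those of support size exactly $r$. So
\[
\SF(P)(x)=\sum_{|S|=r}\sum_{T\subseteq S}(-1)^{r-|T|}P(x\mathbf 1_T),
\]
but the number of terms is too large for a $4^r$ bound. This is the main obstacle: we need a representation with total absolute weight $\le 4^r$, independent of $n$.

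To get the required weight we would use random restrictions plus a one-dimensional Chebyshev/Markov argument, in the spirit the paper cites. Fix $x\in[-1,1]^n$ and choose independent uniform signs $\epsilon_i$. Define the univariate polynomial $q(t)=\E_\epsilon\bigl[\bigl(\prod_i(1+\epsilon_i)/2\bigr)\cdots\bigr]$; more usefully, let $g(t)=P\bigl(x_1(1+t\epsilon_1)/2,\dots\bigr)$. Each monomial $x^\alpha$ contributes $\prod_i\bigl((1+t\epsilon_i)/2\bigr)^{\alpha_i}x^\alpha$. Taking $\E_\epsilon$ of the $t^r$ coefficient times $\prod_i\epsilon_i^{\alpha_i}$ would recover squarefree terms. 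The cleanest version is
\[
\SF(P)(x)=2^r\,[t^r]\,\E_\epsilon\Bigl[\Bigl(\prod_i \epsilon_i\Bigr)\text{-weighted } P\bigl(x\circ(\mathbf 1+t\epsilon)/2\bigr)\Bigr],
\]
which we would verify on monomials: for exponent $\alpha_i\ge 2$, $\E[\epsilon_i(1+t\epsilon_i)^{\alpha_i}]$ has top $t$-degree term vanishing by parity adjustments.

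Once such an identity is established with the inner polynomial $h(t)$ of degree $r$ bounded by $\|P\|$ for $t\in[-1,1]$ (since the argument stays in $[-1,1]^n$), we bound its leading coefficient by the Chebyshev extremal bound $|[t^r]h|\le 2^{r-1}\|h\|_{[-1,1]}$. Combined with the factor $2^r$, this gives $4^r$. If the sign-weighting fails, we would fall back on the $[0,1]$-scaling $x_i\mapsto x_i s_i$ with $s_i$ uniform in $[0,1]$ and moments. The constant $4^r$ strongly suggests this Chebyshev-leading-coefficient route.
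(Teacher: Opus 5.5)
There is a genuine gap: your key displayed identity is false, so the argument does not close. Because $P$ is homogeneous of degree $r$, the $t^r$-coefficient of $P(x\circ(\mathbf 1+t\epsilon)/2)$ is just $2^{-r}P(\epsilon\circ x)$. Your right-hand side therefore equals $\E_\epsilon[\prod_{i=1}^n\epsilon_i\,P(\epsilon\circ x)]$, the top Fourier coefficient of $F(\epsilon)=P(\epsilon\circ x)$. This vanishes whenever $n>r$; already $P=x_1x_2$ with $n=3$ gives $0\neq x_1x_2$. Your monomial check also fails. For odd $a\ge 3$ the top term of $\E[\epsilon_i(1+t\epsilon_i)^a]$ is $t^a\neq 0$, and for $a=0$ the factor $\E[\epsilon_i]=0$ kills every monomial not containing $x_i$. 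The deterministic parameter $t$ does no work either: the leading coefficient is explicit, so the Chebyshev step is vacuous. Everything reduces to finding a sign weight $w$ with $\widehat w(S)=1$ for $|S|=r$, $\widehat w(T)=0$ for $|T|=r-2,r-4,\dots$, and $\E|w|$ bounded independently of $n$. The weight $\prod_i\epsilon_i$ does not qualify. The obvious one, $\sum_{|S|=r}\epsilon^S$, has $\E|w|$ growing with $n$, which is the same obstruction you hit with inclusion--exclusion. The real-scaling fallback cannot work at all, since no real random $s$ has $\E s=1$ and $\E s^2=0$ (as $\E s^2\ge(\E s)^2$).

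The missing idea is how to kill exponents $\ge 2$ while keeping total weight bounded. The paper goes complex. With $Z_i=2e^{i\Theta_i}$ and $\Theta_i$ of density $(1+\cos\theta)/(2\pi)$, one has $\E Z_i^0=\E Z_i=1$ and $\E Z_i^a=0$ for $a\ge 2$, so $\SF(P)(y)=\E[P(Z_1y_1,\dots,Z_ny_n)]$. The cost is that $P$ is now evaluated on the polydisk $|z_i|\le 2$. There $|P(z)|\le 4^r\|P\|_{[-1,1]^n}$, because $4^{-r}P(z)$ is the frequency-$r$ Fourier coefficient of $t\mapsto P(\tfrac12\mathrm{Re}(e^{it}z))$, whose argument stays in $[-1,1]^n$.

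Alternatively, your Chebyshev instinct does work if the univariate parameter is the bias of the random signs rather than a deterministic interpolation. Set $h(\rho)=\E[P(y\circ x)]$ with $y_i\in\{-1,1\}$ independent of mean $\rho\in[-1,1]$. Then $h(\rho)=\sum_T\widehat F(T)\rho^{|T|}$, so $\deg h\le r$ and $|h|\le\|P\|_{[-1,1]^n}$ on $[-1,1]$. A degree-$r$ monomial whose set of odd exponents has size $r$ must be squarefree, so $[\rho^r]h=\sum_{|T|=r}\widehat F(T)=\SF(P)(x)$. Chebyshev's leading-coefficient bound then gives $|\SF(P)(x)|\le 2^{r-1}\|P\|_{[-1,1]^n}$ for $r\ge 1$, which is even stronger than $4^r$.
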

\begin{proof}
    If $r\le 1$, then $P$ is already in squarefree form and hence, the result is immediate. Let us now consider the case when $r \ge 2$. 
    We will now consider $P$ with same coefficient but in $\mathbb{C}^n$. For a fixed $z \in \mathbb{C}^n$ such that $|z_i| \le 2$ for all $i \in [n]$, we will define a real $n-$dimensional vector $x(t)$ such that for $t \in \mathbb{R}$, 
    \[ x_z(t) = \frac{1}{2} \text{Re}(e^{it}z). \]

    Note that $x_z(t) \in [-1,1]^n$. Therefore, $\max_{t \in [0,2 \pi]} |P(x_z(t))| \le \|P\|_{[-1,1]^n}$. 

    Now, we can write $x_z(t) = \frac{e^{it}z + e^{-it}\overline{z}}{4}$ and hence, $(x_z(t))_j = \frac{e^{it}z_j + e^{-it}\overline{z_j}}{4}$. Since $P$ is a homogeneous polynomial of degree $r$, we can write it as 
    \[ P(x_z(t)) = \sum_{|\alpha|=r} c_{\alpha}(x_z(t))^{\alpha} = \sum_{|\alpha|=r} c_{\alpha} \prod_j (x_{z}(t))_j^{\alpha_j}.  \]

    Since $(x_z(t))^{\alpha} = \prod_j (x_{z}(t))_j^{\alpha_j} = \prod_j \left(\frac{e^{it}z_j + e^{-it}\overline{z_j}}{4}\right)^{\alpha_j}$ and $\sum_j \alpha_j = r$, if we have a term with frequency $r$, it comes from $\frac{e^{it}z_j}{4}$ since $\frac{e^{-it}\overline{z_j}}{4}$ will only reduce the frequency. Therefore, the term with frequency $r$ is $\frac{e^{irt}z^{\alpha}}{4^{r}}$. Therefore, the coefficient of frequency $r$-term of $(x_z(t))^{\alpha}$ is $4^{-r}z^{\alpha}$. Substituting this back in $P(x_z(t))$, we get that the coefficient of frequency $r$-term is $4^{-r}P(z)$. Finally, we get 
    \begin{equation}\label{intermediate_bound}
        4^{-r}|P(z)| \le  \|P\|_{[-1,1]^n}.
    \end{equation} 

    Let $\Theta_i$ be a random variable sampled independently with density $p(\theta) = \frac{1+\cos \theta}{2 \pi}$ for $\theta \in [0,2 \pi]$. Let $Z_i = 2e^{i \Theta_i}$
    \[ \E[Z_i^0] = 1, \qquad \E[Z_i^1] = 1, \qquad \E[Z_i^a] = 0 \quad \text{for every integer} \; a \ge 2. \]

    We will now see $\E[P(Z_1y_1,\dots,Z_ny_n)] = \SF(P)(y_1,\dots,y_n)$.

    \[ \E[P(Z_1y_1,\dots,Z_ny_n)] = \E\left[\sum_{|\alpha|=r} c_{\alpha} \prod_j (Z_jy_j)^{\alpha_j} \right] = \sum_{|\alpha|=r} c_{\alpha} \E\left[ \prod_j (Z_jy_j)^{\alpha_j}\right]. \]

    Since $Z_j$ are sampled independently, 
    \[ \E[P(Z_1y_1,\dots,Z_ny_n)] =  \sum_{|\alpha|=r} c_{\alpha} \prod_j \E[  Z_j^{\alpha_j}]{y_j}^{\alpha_j} = \sum_{\alpha \in \{0,1\}^n} c_{\alpha} \prod_j {y_j}^{\alpha_j} . \]
    But $\sum_{\alpha \in \{0,1\}^n} c_{\alpha} \prod_j {y_j}^{\alpha_j} = \SF(P)(y)$. Therefore, we get $\E[P(Z_1y_1,\dots,Z_ny_n)] = \SF(P)(y_1,\dots,y_n)$.

    For $y \in [-1,1]^n$, $|\SF(P)(y_1,\dots,y_n)| = |\E[P(Z_1y_1,\dots,Z_ny_n)]| \le \E[|P(Z_1y_1,\dots,Z_ny_n)|] $.
    
    Since $|Z_i| = 2$ and $|y_i|\le 1$, $|Z_i y_i| \le 2$ and hence from Equation~\ref{intermediate_bound},
    \[\|\SF(P)\|_{[-1,1]^n} \le 4^{r} \|P\|_{[-1,1]^n}.\]
\end{proof}

\begin{lemma*}[Restatement of Lemma~\ref{lem:homogeneous_taylor_coefficients}] Suppose $R$ is a trigonometric polynomial of total frequency at most $D$ and $|R(\theta)| \le 1$ for all $\theta \in \mathbb{R}^n$. Taking Taylor expansion at zero as $R = \sum_{r\ge 0} H_r$, where $H_r$ is homogeneous of degree $r$. Then $|H_0| \le 1$ and for $r \ge 1$, 
\[ \|H_r\|_{[-1,1]^n} \le \left( \frac{e D}{r} \right)^r. \]
\end{lemma*}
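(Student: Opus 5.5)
The plan is to reduce to a one-dimensional statement along rays and then apply Bernstein's inequality for trigonometric polynomials. Fix $x\in[-1,1]^n$ and consider the univariate function $h(t)=R(tx)$ for $t\in\mathbb{R}$. Since $R(\theta)=\sum_{\|a\|_1\le D}c_a e^{i\langle a,\theta\rangle}$, we have $h(t)=\sum_a c_a e^{it\langle a,x\rangle}$. Each frequency $\langle a,x\rangle$ is a real number of absolute value at most $\|a\|_1\|x\|_\infty\le D$. Thus $h$ is an entire function of exponential type at most $D$, and $|h(t)|\le 1$ for all real $t$, because $tx\in\mathbb{R}^n$ and $|R|\le 1$ on $\mathbb{R}^n$.

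Homogeneity of the Taylor pieces gives $h(t)=\sum_{r\ge0}H_r(x)t^r$, since $H_r(tx)=t^rH_r(x)$ and the Taylor series converges everywhere (everything is entire). Hence $H_r(x)=h^{(r)}(0)/r!$, and $|H_0|=|R(0)|\le1$ is immediate.

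For $r\ge1$, Bernstein's inequality for bounded functions of exponential type $D$ on the real line gives $|h^{(r)}(0)|\le D^r\sup_{\mathbb{R}}|h|\le D^r$. If one prefers to avoid the general Bernstein theorem (the frequencies $\langle a,x\rangle$ need not be integers), there are two alternatives. One is to take $x$ rational, apply the classical periodic Bernstein inequality after rescaling $t$ so the frequencies become integers, and then pass to all $x$ by continuity of $H_r$. The other is a direct argument: $h^{(r)}(0)=\sum_a c_a (i\langle a,x\rangle)^r$ is the value at $0$ of a linear functional, and the Bernstein bound follows from the standard argument.

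It then remains to check $D^r/r!\le (eD/r)^r$, which is exactly $r!\ge (r/e)^r$. This is immediate from $e^r=\sum_k r^k/k!\ge r^r/r!$. Taking the supremum over $x\in[-1,1]^n$ gives $\|H_r\|_{[-1,1]^n}\le (eD/r)^r$.

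The only delicate step is justifying Bernstein's inequality for non-integer frequencies. I would handle it by the rational-density route: for rational $x$ with common denominator $N$, the function $s\mapsto h(Ns)$ is a genuine trigonometric polynomial of degree at most $ND$. Its $r$-th derivative is therefore at most $(ND)^r$, and the chain rule returns $|h^{(r)}(0)|\le D^r$. The bound extends to all $x$ because $H_r$ is a polynomial.
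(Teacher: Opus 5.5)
Your proof is correct, and its outer structure matches the paper's. Both restrict $R$ to the line through a rational point $y=p/q$. Both note that $s\mapsto R(sp)$ is a genuine $2\pi$-periodic trigonometric polynomial of degree at most $qD$ whose $s^r$-coefficient is $q^rH_r(y)$. Both then extend to all of $[-1,1]^n$ by density and continuity of $H_r$.

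The difference is in the univariate estimate. The paper does not use Bernstein's inequality. It writes $v(z)=L(e^{iz})$ for a Laurent polynomial $L$. It applies the maximum modulus principle to $w^NL(w)$ and $w^NL(1/w)$ to get the growth bound $|v(z)|\le e^{N|\operatorname{Im} z|}$ off the real axis. It then applies Cauchy's estimate on the circle of radius $r/N$, which produces exactly $(eN/r)^r$.

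Your iterated Bernstein inequality gives the sharper bound $|H_r(y)|\le D^r/r!$, which you relax via $r!\ge(r/e)^r$. The gain, a factor of order $\sqrt{r}$ by Stirling, does not matter downstream, where the bound only feeds the geometric tail $\sum_{r\ge d_f}8^{-r}$. So your argument is shorter and slightly stronger but imports Bernstein's inequality; that inequality holds for complex coefficients, so it covers the general statement. The paper's argument is self-contained, using only the maximum modulus principle and Cauchy estimates.

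Committing to the rational-density route is the right call. It makes the exponential-type version of Bernstein's theorem unnecessary, and the vaguer ``direct argument'' alternative can simply be dropped.
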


\begin{proof}
Let us first look at the univariate version. Define $v: \mathbb{C} \rightarrow \mathbb{C}$ as
\[ v(z)= \sum_{j=-N}^N a_je^{ijz}, \qquad |v(z)| \le 1 \; \text{for} \; z \in \mathbb{R}.  \]

Also define 
\[ L(w)= \sum_{j=-N}^N a_j w^j, \qquad P(w) = w^N L(w). \]

Note that $L(e^{iz}) = v(z)$.

On the unit circle, $w = e^{iz}$ for $z \in \mathbb{R}$. Therefore, on the unit circle, $L(w) = \sum_{j=-N}^N a_j e^{ijz} =v(z)$ and hence, $|L(w)| \le 1$. Consequently, $|P(w)| \le 1$ on the unit circle. Using maximum modulus principle, $|P(w)| \le 1$ for $|w| \le 1$. Hence, $|L(w)| \le |w|^{-N}$ for $|w| \le 1$ and $w \ne 0$.

Now, look at the polynomial $\tilde{P}(w) = w^{2N} P(1/w) $. Thus, $\tilde{P}(w) = w^{N} L(1/w) $. $|L(1/w)| \le 1$ on the unit circle and hence $|\tilde{P}(w)| \le 1$ on the unit circle. Using maximum modulus principle, $|\tilde{P}(w)| \le 1$ on $|w| \le 1$. Hence, $|L(w)| \le |w|^{N}$ for $|w| \ge 1$. Finally, we get,
\[ |L(w)| \le \max\{|w|^N,|w|^{-N}\}, \qquad |w| \ne 0. \]

Furthermore, we get $|v(z)| = |L(e^{iz})| \le \max\{|e^{iz}|^N,|e^{iz}|^{-N}\}$. But since $|e^{iz}| = e^{-\text{Im}(z)}$, we get $|v(z)| \le e^{N|\text{Im}(z)|}$. 

When $N=0$, $v(z)$ is a constant trigonometric polynomial and hence $v^{(r)}(0) = 0$ for $r \ge 1$. For $N>0$, using Cauchy's coefficient estimate on a circle of radius $\rho$, we get  

\[
|v^{(r)}(0)| \le \frac{r! e^{N\rho}}{\rho^r}.
\]

By taking $\rho = r/N$, we get,
 \begin{equation}\label{eq_taylor_coefficient_univariate}
    \frac{|v^{(r)}(0)|}{r!} \le \left(\frac{N e}{r}\right)^r. 
 \end{equation}

Now, let us extend the univariate estimate to $[-1,1]^n$. Take rational $y \in [-1,1]^n$ and $y_i = \frac{p_i}{q}$ where $p_i$ and $q$ are integers and $q \ge 1$ and $|p_i| \le q$. Now, we can write $R(p_1s,\dots,p_ns) = v(s)$. Taking its Taylor expansion and writing in terms of homogeneous parts, we get $R(z) = \sum_{r \ge 0} H_r(z)$. We can now write $v(s)$ in terms of homogeneous parts as
\[ v(s) = \sum_{r \ge 0} H_r(p_1s,\dots,p_ns ) = \sum_{r \ge 0} s^r H_r(p_1,\dots,p_n) . \]
For $r=0$, we get $|H_0| = |R(0)|\le 1$.
The coefficient of $s^r$ in $v(s)$ is $\frac{v^{(r)}(0)}{r!}$ and from the above equation, we get
\[ \frac{v^{(r)}(0)}{r!}  = H_r(p_1,\dots,p_n) = q^rH_r(y_1,\dots,y_n). \]

\[R(\theta) = \sum_{\alpha} c_{\alpha} e^{i\langle\alpha, \theta \rangle}, \quad \|\alpha\|_1 \le D.\]

Since $|R(\theta)| \le 1$, $|v(s) |\le 1$ for $s \in \mathbb{R}$.
\[v(s) = R(p_1s,\dots,p_ns) = \sum_{\alpha} c_{\alpha} e^{i\langle\alpha, p \rangle s}. \]
But since $|\langle\alpha, p \rangle| \le \sum_i |\alpha_i||p_i| \le \|\alpha\|_1q \le D q$.
$v$ is a trigonometric polynomial with total frequency at most $qD$.
In case of univariate trigonometric polynomials, total frequency is same as degree.
Therefore, we can use Equation~\ref{eq_taylor_coefficient_univariate} to get
\[\left| H_r(y_1,\dots,y_n) \right| = q^{-r}\left|\frac{v^{(r)}(0)}{r!}\right| \le q^{-r}\left(\frac{qD e}{r}\right)^r = \left(\frac{D e}{r}\right)^r   \]

Because $\mathbb{Q}^n \cap [-1,1]^n$ is dense in $[-1,1]^n$ and $H_r$ is continuous, we get $\left| H_r(y_1,\dots,y_n) \right|  \le \left(\frac{D e}{r}\right)^r $ for all $y \in [-1,1]^n$.
\end{proof}

\begin{lemma*}[Restatement of Lemma~\ref{lem:squarefree_approximation}]
     Let $Q: \{0,1\}^{mn} \rightarrow \mathbb{R}$ be a polynomial. Let $\{ \rho_{\theta} : \theta \in \mathbb{R}\}$ be a differentiable family of probability measures on $\{0,1\}^m$ and let $\nu_0$ and $\nu_1$ be probability measures on $\{0,1\}^m$ such that, $\nu_{(1+z)/2} = \rho_0 + \frac{z}{\kappa} \rho_0'$ for $z \in \{-1,1\}$ and some scalar $\kappa \ne 0$. Let $X^{(1)}, \dots, X^{(n)}$ be $n$ disjoint blocks of $m$ bits each and $X^{(i)}$'s are sampled independently from $\rho_{\theta_i}$. Let $R$ be defined as 
    \[ R(\theta_1,\dots,\theta_n) = \E_{X^{(i)}\sim \rho_{\theta_i}}[Q(X^{(1)}, \dots, X^{(n)})]. \]
    If $R$ is analytic in a neighborhood of 0, then, 
    \[ \SF(R)\left(\frac{z}{\kappa}\right) =  \E_{X^{(i)}\sim \nu_{\frac{1+z_i}{2}}}[Q(X^{(1)}, \dots, X^{(n)})],
    \]
    for every $z \in \{-1,1\}^n$.
\end{lemma*}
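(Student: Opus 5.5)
The plan is to exploit multilinearity in each block's measure. Since each block is sampled independently, $R(\theta)$ is multilinear in the measures:
\[
R(\theta_1,\dots,\theta_n)=\sum_{x^{(1)},\dots,x^{(n)}} Q(x^{(1)},\dots,x^{(n)})\prod_{i=1}^n \rho_{\theta_i}(x^{(i)}).
\]
This is a finite sum over $(\{0,1\}^m)^n$. Consequently, all partial derivatives of $R$ at $0$ can be computed termwise. This uses only the differentiability of each coordinate function $\theta\mapsto\rho_\theta(x)$, which is implicit in the hypothesis that $R$ is analytic. (We will take the family to be analytic, or at least to have analytic coordinates, so that termwise differentiation and the Taylor expansion agree.)

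The key step is to compute the squarefree part. By the displayed formula after the Taylor expansion definition,
\[
\SF(R)(w)=\sum_{S\subseteq[n]} \partial_S R(0)\prod_{i\in S} w_i .
\]
Each $\partial_S$ differentiates each variable at most once, and the $i$-th factor of the product depends only on $\theta_i$. Therefore
\[
\partial_S R(0)=\sum_{x} Q(x)\prod_{i\in S}\rho_0'(x^{(i)})\prod_{i\notin S}\rho_0(x^{(i)}).
\]

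Substituting $w=z/\kappa$ and exchanging the finite sums gives
\[
\SF(R)(z/\kappa)=\sum_x Q(x)\sum_{S}\prod_{i\in S}\frac{z_i}{\kappa}\rho_0'(x^{(i)})\prod_{i\notin S}\rho_0(x^{(i)})=\sum_x Q(x)\prod_{i=1}^n\Big(\rho_0+\frac{z_i}{\kappa}\rho_0'\Big)(x^{(i)}).
\]
The last equality is the product expansion $\prod_i(a_i+b_i)=\sum_S\prod_{i\in S}b_i\prod_{i\notin S}a_i$. By hypothesis each factor equals $\nu_{(1+z_i)/2}(x^{(i)})$, so the right-hand side is exactly $\E_{X^{(i)}\sim\nu_{(1+z_i)/2}}[Q(X)]$.

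The main subtlety is justifying that the coefficient of $\prod_{i\in S}w_i$ in the Taylor series of $R$ equals $\partial_S R(0)$. This follows because $\alpha!=1$ for $\alpha\in\{0,1\}^n$. We also need the Taylor series to be absolutely convergent near $0$, so that $\SF$ is well defined as a coefficient extraction; here we use the analyticity hypothesis. Note that $\SF(R)$ is a polynomial, being a finite sum over $S$. It can therefore be evaluated at $z/\kappa$ even when this point lies outside the ball of convergence, and no convergence issue arises at the evaluation point.
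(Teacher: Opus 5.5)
Your proof is correct and is essentially the paper's argument. The paper applies the functional $\mathcal{L}_1\cdots\mathcal{L}_n$, with $\mathcal{L}_i h = h|_{\theta_i=0}+\frac{z_i}{\kappa}\partial_i h|_{\theta_i=0}$, to both the finite-sum form and the Taylor series of $R$, which is exactly your computation of $\sum_S \partial_S R(0)\prod_{i\in S} z_i/\kappa$ followed by the product expansion. One small remark: your strengthening to analytic coordinates is unnecessary, and the needed differentiability of $\theta\mapsto\rho_\theta(x)$ comes from the ``differentiable family'' hypothesis rather than from analyticity of $R$; since each variable is differentiated at most once, first-order differentiability already suffices for the termwise computation.
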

\begin{proof}
        \begin{equation}\label{R_exp_sum}
     R(\theta_1,\dots,\theta_n)= \E_{X^{(i)} \sim \rho_{\theta_i}} Q(X^{(1)},\dots,X^{(n)}) = \sum_x Q(x^{(1)},\dots,x^{(n)}) \prod_i \rho_{\theta_i}(x^{(i)}).   
    \end{equation}
    where $x=(x^{(1)},\dots,x^{(n)})$.

    For each $i \in [n]$, let $\mathcal{L}_i$ be a linear functional acting on $i^{th}$ variable such that 
    \[ \mathcal{L}_i (h) = h|_{\theta_i=0}+\frac{z_i}{\kappa} \partial_i h|_{\theta_i=0}. \]
    
    Applying the functional $\mathcal{L}_1 \dots \mathcal{L}_n$ on $R$ in Equation~\ref{R_exp_sum} gives us 
    \[ \mathcal{L}_1 \dots \mathcal{L}_n(R) = \sum_x Q(x^{(1)},\dots,x^{(n)}) \prod_i \mathcal{L}_i(\rho_{\theta_i}(x^{(i)}))  .\]

    Since $\mathcal{L}_i(\rho_{\theta_i}) = \rho_0 + \frac{z_i}{\kappa}\rho'_0$ and $\rho_0 + \frac{z_i}{\kappa}\rho'_0 = \nu_{\frac{1+z_i}{2}}$, we then get
    \[\mathcal{L}_1 \dots \mathcal{L}_n(R) = \sum_x Q(x^{(1)},\dots,x^{(n)}) \prod_i \nu_{\frac{1+z_i}{2}}(x^{(i)}). \]
    Since $\sum_x Q(x^{(1)},\dots,x^{(n)}) \prod_i \nu_{\frac{1+z_i}{2}}(x^{(i)}) = \E_{X^{(i)} \sim \nu_{\frac{1+z_i}{2}}}Q(X^{(1)},\dots,X^{(n)})$, we get 
    \[ \mathcal{L}_1 \dots \mathcal{L}_n(R)= \E_{X^{(i)} \sim \nu_{\frac{1+z_i}{2}}}Q(X^{(1)},\dots,X^{(n)}).\]
    On the other hand, $R$ is analytic in a neighborhood of 0 and hence can be written as

    \[ R(\theta)= \sum_{\alpha \in \mathbb{Z}_{\ge 0}^n} \frac{\partial^{\alpha}R(0)}{\alpha!} \prod_i {\theta}_i^{\alpha_i}, \]
where $\alpha! = \alpha_1 ! \dots \alpha_n!$ and $\partial^{\alpha}={\partial_1}^{\alpha_1}\dots {\partial_n}^{\alpha_n} $.

Applying $\mathcal{L}_1 \dots \mathcal{L}_n$ to $R$, we get $\mathcal{L}_1 \dots \mathcal{L}_n(R)= \sum_{\alpha \in \mathbb{Z}_{\ge 0}^n} \frac{\partial^{\alpha}R(0)}{\alpha!} \prod_i \mathcal{L}_i({\theta}_i^{\alpha_i})$.

\[ \mathcal{L}_i({\theta}_i^{\alpha_i}) = \begin{cases}
    1  , \qquad &\alpha_i=0, \\  
    \frac{z_i}{\kappa} \alpha_i {\theta}_i^{\alpha_i-1}|_{\theta_i=0},  &\alpha_i \ge 1.
\end{cases}  \]
For all the terms where $\alpha_i \ge 2$, this evaluates to 0 and so the only terms that remain are $(\frac{z_i}{\kappa})^{\alpha_i}$.

\[\mathcal{L}_1 \dots \mathcal{L}_n(R)= \sum_{\alpha \in \{0,1\}^n } \partial^{\alpha}R(0) \prod_{i} \left(\frac{z_i}{\kappa}\right)^{\alpha_i}\]

This is the squarefree extraction of $R$ evaluated at $(z/\kappa)$.

\[ \SF(R)\left(\frac{z}{\kappa}\right) = \E_{X^{(i)} \sim \nu_{(1+z_i)/2}}Q(X^{(1)},\dots,X^{(n)}) .\]

\end{proof}

\section{Proofs of Lemmas required for the Main Result}\label{appendix-proof-main-result}

\begin{lemma*}[Restatement of Lemma~\ref{lemma_def_deg_operator}]
    Let $g:\{0,1\}^m \rightarrow \{0,1\}$ be a non-constant Boolean function with $\adeg(g)=d$. Then there exists a non-empty set $\mathcal{X} \subseteq \{0,1\}^m$, real orthogonal projectors $E_0,E_1,\dots,E_m$ on $\mathbb{R}^{\mathcal{X}}$ and a real symmetric matrix $A = \sum_{k=0}^m \tau_k E_k$ where $\tau_k= \min\{k,d\}$ such that they satisfy the following three properties:
    \begin{align}
        &\sum_{k=0}^m E_k = I, \qquad 0 \preceq A \preceq dI;  \\
        &E_kM_pE_{\ell} = 0 \quad \text{if} \; |k-\ell|>r \quad \text{for every polynomial of degree at most} \; r;  \\
        & \frac{d}{3} \le \lambda = \|P_0AP_1\| \le \frac{d}{2},  \; \text{where} \; P_b = M_{\mathbf{1}_{\{g=b\}}}.
    \end{align}
\end{lemma*}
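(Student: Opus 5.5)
The plan is to build everything from a dual witness for $g$ via Fact~\ref{dual_witness_properties}. Let $\psi$ be such a witness, with $\|\psi\|_1=1$, $\psi$ orthogonal to all polynomials of degree $<d$, and $\langle\psi,g\rangle>1/3$. Take $\mathcal{X}=\operatorname{supp}(\psi)$, which is non-empty.

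\textbf{The projectors.} Let $w(x)=\sqrt{|\psi(x)|}$ for $x\in\mathcal{X}$. For each $k$ set
\[
V_k=\{\,w\cdot p|_{\mathcal{X}} : \deg p\le k\,\}\subseteq\mathbb{R}^{\mathcal{X}} .
\]
These subspaces are nested, $V_{-1}=0$, and $V_m=\mathbb{R}^{\mathcal{X}}$, because every function on $\mathcal{X}\subseteq\{0,1\}^m$ is a multilinear polynomial of degree at most $m$. Let $E_k$ be the orthogonal projector onto $V_k\ominus V_{k-1}$. Then $\sum_k E_k=I$.

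Define $A=\sum_k\min\{k,d\}E_k$. It is real symmetric, and $0\preceq A\preceq dI$ holds because the $E_k$ are orthogonal and sum to $I$. This gives~\eqref{range_degree_operator}.

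For~\eqref{space_jump_lim}, use that $M_p$ is diagonal, so it commutes with multiplication by $w$. Hence $M_p(w q)=w\,pq$, and $M_p$ maps $V_\ell$ into $V_{\ell+r}$ when $\deg p\le r$. This gives $E_kM_pE_\ell=0$ for $k>\ell+r$. Since $M_p$ is self-adjoint, taking adjoints gives the case $\ell>k+r$.

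\textbf{Two unit vectors.} Let $a=w$ and $b=w\cdot\operatorname{sgn}\psi$. Both are unit vectors, since $\|\psi\|_1=1$. We have $a\in V_0$, so $Aa=0$. For any $p$ of degree $<d$, $\langle b,w p\rangle=\langle\psi,p\rangle=0$, so $b\perp V_{d-1}$ and therefore $Ab=db$. Also
\[
\langle a,b\rangle=\langle\psi,1\rangle=0,\qquad \langle P_1a,b\rangle=\langle\psi,g\rangle=:\gamma>1/3 .
\]

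\textbf{Upper bound on $\lambda$.} Since $P_0P_1=0$, we have $P_0AP_1=P_0(A-\tfrac d2 I)P_1$. The matrix $A-\tfrac d2 I$ has norm at most $d/2$ because $0\preceq A\preceq dI$. Hence $\lambda\le d/2$.

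\textbf{Lower bound on $\lambda$.} Let $c=P_1a$. From $Aa=0$ we get $Ac=-AP_0a$. Therefore
\[
\langle c,Ac\rangle=-\langle P_0a,\,AP_1c\rangle\le\lambda\,\|P_0a\|\,\|P_1a\|\le\lambda/2,
\]
using $\|P_0a\|^2+\|P_1a\|^2=1$. On the other hand, $A\succeq d\sum_{k\ge d}E_k$ and $b$ lies in the range of $\sum_{k\ge d}E_k$. So
\[
\langle c,Ac\rangle\ge d\,\langle c,b\rangle^2=d\gamma^2 .
\]
Combining the two gives $\lambda\ge 2d\gamma^2$.

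\textbf{Main obstacle: the constant $1/3$.} With $\gamma>1/3$ this yields only $\lambda\ge 2d/9$, which is weaker than the claimed $d/3$. To close the gap I would run the same construction with a dual witness for $\deg_\epsilon(g)$, with $\epsilon$ close to $1/2$. Such a witness has correlation $\gamma>\epsilon$; note that $\gamma\le 1/2$ automatically, because $\psi\perp 1$ and $\|\psi\|_1=1$. Take $d=\deg_\epsilon(g)$. Any $\epsilon\ge 1/\sqrt6\approx0.41$ gives $2\gamma^2\ge1/3$.

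The catch is that this $d$ equals $\adeg(g)$ only up to a constant factor, via standard amplification, rather than exactly. The downstream use needs only $\kappa=\Theta(\adeg(g))$, so this suffices there. Matching the lemma's exact normalisation ``$\adeg(g)=d$'' together with the constant $d/3$ is the step I expect to need either this re-parametrisation or a sharper estimate of $\langle c,Ac\rangle$. For the latter, I would first try to exploit the specific vector $c=P_1a$ more carefully, rather than only its projection onto $b$.
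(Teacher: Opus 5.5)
\textbf{There is a genuine gap: the bound $\lambda\ge d/3$ is not established.} Most of your proposal is correct and matches the paper: the choice of $\mathcal{X}$, the filtration $V_k$, the projectors $E_k$ and the operator $A$, the band property, the upper bound $\lambda\le d/2$, and the facts $Aa=0$, $Ab=db$, $\langle P_1a,b\rangle=\gamma$.

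The problem is your quadratic-form estimate $d\gamma^2\le\langle c,Ac\rangle\le\lambda/2$. It only gives $\lambda\ge 2d\gamma^2$, which is quadratic in $\gamma$. Since $2\gamma^2<\gamma$ for every $\gamma<1/2$, it yields only $\lambda>2d/9$ when $\gamma$ is near $1/3$. Refining the quadratic form in $c=P_1a$ will not close this: that approach is intrinsically lossy in $\gamma$.

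Your proposed repair proves a different statement. Replacing $d$ by $\deg_\epsilon(g)$ changes both the eigenvalues $\tau_k=\min\{k,d\}$ and the hypothesis $\adeg(g)=d$. Keeping $d=\adeg(g)$ would require a witness of pure high degree $\adeg(g)$ with correlation near $1/2$, which exists only if $\deg_\epsilon(g)=\adeg(g)$. As written, you obtain the third property with constant $2/9$. That would suffice downstream for $\kappa=\Theta(\adeg(g))$, but it is not the lemma.

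The missing idea is to pair $a$ and $b$ on opposite sides of the commutator $[A,P_1]$, which gives a bound linear in $\gamma$. Since $Aa=0$ and $Ab=db$,
\[
\langle b,[A,P_1]a\rangle=\langle Ab,P_1a\rangle-\langle b,P_1Aa\rangle=d\gamma .
\]
Writing $AP_1=P_0AP_1+P_1AP_1$ and $P_1A=P_1AP_0+P_1AP_1$, the $P_1AP_1$ terms cancel. So $[A,P_1]=P_0AP_1-P_1AP_0$, which is block off-diagonal with respect to $P_0\mathbb{R}^{\mathcal{X}}\oplus P_1\mathbb{R}^{\mathcal{X}}$. Hence $\|[A,P_1]\|=\|P_0AP_1\|=\lambda$. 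Because $a$ and $b$ are unit vectors, this gives $\lambda\ge d\gamma>d/3$, which is exactly the paper's argument.
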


\begin{proof}
    Let $\psi$ be the dual witness of $g$ described in the Fact~\ref{dual_witness_properties}. Let $\mathcal{X} = \{x: |\psi(x)| >0 \}$.
    Since $\|\psi\|_1 = 1$, $\mathcal{X}$ is non-empty.
    Let $\delta$ be a vector such that $\delta_x = \sqrt{|\psi(x)|}$. Also, on $x \in \mathcal{X}$, define $w_x = \frac{\psi(x)}{\sqrt{|\psi(x)|}}$. We can define real subspaces as follows:
    \[ V_{-1} = \{0\}, \qquad V_k = \{M_p \delta: \deg p \le k \} \; (0 \le k \le m).  \]

    Notice that $V_0 \subseteq V_1 \subseteq \dots \subseteq V_m$. Let $W_k = V_k \cap V_{k-1}^{\perp}$. For all $0 \le k \le m$, $V_k$ can thus be written as 
    \[ V_k = W_0 \oplus \dots \oplus W_k.  \]
    Also, $W_i \perp W_j$ for $i \ne j$. Therefore, $V_m = W_0 \oplus \dots \oplus W_m$ is an orthogonal direct sum.
    Let $E_k$ be an orthogonal projector onto $W_k$. Let $e_x$ be a standard basis vector.

    Let $x = (x_1,\dots,x_m) \in \mathcal{X}$. Let $p_x(y) = \prod_{j:x_j=1} y_j \prod_{j:x_j=0} (1-y_j)$. Then $p_x(y) = \mathbf{1}_{x=y}$ and $\deg(p_x) \le m$. Also, $M_{p_x} \delta = \delta_x e_x$. Since $x \in \mathcal{X}$, $\delta_x >0$. Therefore, 
    \[e_x = \frac{M_{p_x}\delta}{\delta_x} \in V_m.\]
    Thus, every standard basis vector of $\mathbb{R}^{\mathcal{X}}$ lies in $V_m$.
    Therefore, $V_m = \mathbb{R}^{\mathcal{X}}$ and consequently, 
    \begin{equation}\label{sum_orthogonal_projectors}
      \sum_k E_k = I.  
    \end{equation}

    Let $A = \sum_k \tau_k E_k$ where $\tau_k = \min\{k,d\}$. $A$ is real and symmetric. Let $v \in W_j$ for some $0\le j \le m$. Then $Av = \sum_k \tau_k E_k v = \tau_j v$. Therefore, $v$ is an eigenvector and $\tau_j$ is the corresponding eigenvalue. Thus, $A = \sum_k \tau_k E_k$ is its spectral decomposition and its eigenvalues lie in $[0,d]$. Hence we get $0 \preceq A \preceq dI$.

    Now, let $p$ be a polynomial with degree $r$. First suppose $\ell+ r < k  \le m$. Then, $ M_p V_{\ell} \subseteq V_{\ell +r}$. If we take a vector $v$, then $E_{\ell} v \in W_{\ell} \subseteq V_{\ell}$. Then $M_p E_{\ell} v \in V_{\ell + r}$. Then $E_k M_p E_{\ell}v=0$ because $E_k$ is an orthogonal projector onto $W_k$ and $W_k \subseteq V_{k-1}^{\perp}$ and $V_{\ell+r} \subseteq V_{k-1}$. 
    By the same argument if we interchange $k$ and $\ell$, $E_{\ell} M_p E_{k} = 0$ if $\ell > r+k$. Since $E_k$,$E_{\ell}$ and $M_p$ are real and symmetric, we get $(E_{\ell} M_p E_{k})^T = E_k M_p E_{\ell}=0$. Therefore, 
    \[ E_kM_pE_{\ell} = 0 \quad \text{whenever} \; |k-\ell|>r. \]

    Let $v \in \mathbb{R}^\mathcal{X}$. Then $(P_b v)_x = v_x$ if $x \in \mathcal{X} \cap g^{-1}(b)$, otherwise, it is 0. Since $g^{-1}(0)$ and $g^{-1}(1)$ are disjoint, $P_0P_1 = 0$. Therefore, we can write $P_0AP_1 = P_0AP_1 - \frac{d}{2}P_0IP_1 $. Let $B = \left(A - \frac{d}{2}I\right)$. 
    Since $A$ is symmetric, $B$ is symmetric. Also, eigenvalues of $B$ lie in $[-d/2,d/2]$. Therefore, $\|B\| \le d/2$. Also, since $P_0$ and $P_1$ are orthogonal projectors, $\|P_0\|=\|P_1\|=1$
    
    Thus, $\lambda= \|P_0 A P_1\| = \|P_0BP_1\| \le \|P_0\| \; \|B\| \; \|P_1\| \le d/2$. 

    Now, note that $\delta \in V_0$. This is because if you take $M_p$ corresponding to constant polynomial $p=1$, then $M_p \delta = \delta$. Since $p$ has degree $0$, $M_p \delta = \delta \in V_0$. Also, since $W_0=V_0$, $\delta \in W_0$ and hence $E_0 \delta = \delta$ but $E_k \delta =0$ for all $k \ge 1$. Therefore, $A \delta = 0$.

     For a polynomial with degree at most $d-1$, $M_p \delta \in V_{d-1}$. Therefore, using the orthogonality condition in Equation~\ref{high_pure_degree} in the second equality below, we get,
    \[\langle w,M_p \delta \rangle  = \sum_{x \in \mathcal{X}} w_x p(x)\delta_x = \sum_{x \in \mathcal{X}} \psi(x)p(x) = 0. \]

    Therefore, $w \perp V_{d-1}$. We know $V_{d-1} = W_0 \oplus \dots \oplus W_{d-1}$. Therefore, $E_k w = 0$ for $0 \le k \le d-1 $. Since $\tau_k = \min\{k,d\}$, $Aw = \sum_{k=0}^m \tau_k E_k w = \sum_{k=d}^m d E_k w$. Also, $\sum_{k=0}^m E_k w = w$ from Equation~\ref{sum_orthogonal_projectors} and $\sum_{k=0}^{d-1} E_kw = 0$, we get $Aw = dw$. 

    Now, by Cauchy-Schwarz we have $\|w\|_2 \; \|[A,P_1]\| \|\delta\|_2 \ge |\langle w,[A,P_1] \delta \rangle|$ , but since $\|w\|=1$ and $\|\delta\|=1$ from $\|\psi\|_1 = 1$,
    \begin{align*}
        \|[A,P_1]\| &\ge |\langle w,[A,P_1] \delta \rangle| \\
                    &= |\langle w,AP_1 \delta \rangle - \langle w,P_1A \delta \rangle| \\
                    &= |\langle Aw,P_1 \delta \rangle| \\
                    &=|\langle dw,P_1 \delta \rangle|\\
    \end{align*}
    
    The second equality uses two facts: $A\delta =0$ and $A = A^T$. Now, we know $\langle dw,P_1 \delta \rangle = d\sum_{x \in \mathcal{X} \cap g^{-1}(1)} w_x \delta_x = d\sum_{x \in \mathcal{X} \cap g^{-1}(1)} \psi(x) = d\sum_{x \in \mathcal{X} } \psi(x)g(x)$. From Equation~\ref{correlation}, we have $\sum_{x \in \mathcal{X} } \psi(x)g(x) \ge 1/3$ and hence we get $\|[A,P_1]\| \ge d/3$. 
    Now, since $P_0P_1=0$ and $P_0+P_1=I$, we can write $\mathbb{R}^{\mathcal{X}} = P_0\mathbb{R}^{\mathcal{X}} \oplus P_1 \mathbb{R}^{\mathcal{X}}$. Then, we can write the matrix $A$ as 
    \[ A = \begin{pmatrix}
        P_0AP_0 & P_0AP_1 \\
        P_1AP_0 & P_1AP_1
    \end{pmatrix}. \]
    Also, matrix $P_1$ can be written as 
    \[ P_1 = \begin{pmatrix}
        P_0P_1P_0 & P_0P_1P_1 \\
        P_1P_1P_0 & P_1P_1P_1
    \end{pmatrix} 
 = \begin{pmatrix}
        0 & 0 \\
        0 & P_1
    \end{pmatrix}. \]

    Therefore, 
    \[ [A,P_1] = AP_1 - P_1A = \begin{pmatrix}
        0 & P_0AP_1 \\
        -P_1AP_0 & 0
    \end{pmatrix}. \]
    Let $C=P_0AP_1$ and because $P_1,P_0,A$ are real symmetric, $C^T = (P_0AP_1)^T =P_1AP_0$. Therefore, 
    \[[A,P_1]^T [A,P_1] = \begin{pmatrix}
        0 & -C \\
        C^T & 0
    \end{pmatrix} \begin{pmatrix}
        0 & C \\
        -C^T & 0
    \end{pmatrix} = \begin{pmatrix}
        CC^T & 0 \\
        0 & C^TC
    \end{pmatrix}\]

    $\left\lVert[A,P_1]^T [A,P_1] \right\rVert = \|[A,P_1]\|^2$ and $\|[A,P_1]^T [A,P_1]\| = \max\{\|CC^T\|,\|C^TC\| \}= \|C\|^2 = \|P_0AP_1\|^2$.
    
    Hence we get $\|[A,P_1]\| = \|P_0AP_1\|$ and hence $\lambda =\|P_0AP_1\| \ge d/3$.
\end{proof}

\begin{lemma*}[Restatement of Lemma~\ref{prob_inner_arbitrary}]
    Let $\mathcal{X},A, \lambda,P_0,P_1$ be as defined in previous Lemma~\ref{lemma_def_deg_operator} and $\kappa = 2 \lambda$. There exist real unit vectors $u_0 \in P_0 \mathbb{R}^{\mathcal{X}}$ and $u_1\in P_1 \mathbb{R}^{\mathcal{X}}$ such that
     \[ P_0 A u_1 = \lambda u_0, \qquad  P_1 A u_0 = \lambda u_1. \]
    Define $v = \frac{u_0+i u_1}{\sqrt{2}}$ and for real $\theta$, define $\rho_{\theta}: \mathcal{X} \rightarrow [0,1]$ by
    \[\rho_{\theta}(x) = \left| (e^{i \theta A} v)_x \right|^2.\]
    Then $\{\rho_{\theta}: \theta \in \mathbb{R}\}$ is a family of probability measures.
    
    Also, there are probability measures $\nu_0$ and $\nu_1$ supported on $\mathcal{X} \cap g^{-1}(0)$ and $\mathcal{X} \cap g^{-1}(1)$ respectively, such that
    \[ \rho_0 = \frac{\nu_0 + \nu_1}{2}, \qquad \rho_0' =\frac{\kappa}{2} (\nu_1 - \nu_0). \]

    Consequently, for $z \in \{-1,1\}$
    \[ \nu_{\frac{1+z}{2}} = \rho_0 + \frac{z}{\kappa} \rho_0' . \]
\end{lemma*}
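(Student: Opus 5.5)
The plan is to take $u_0,u_1$ to be a top singular pair of the coupling $C=P_0AP_1$. Then $\rho_\theta$ is a probability measure because $e^{i\theta A}$ is unitary, and the two identities follow by computing $\rho_0$ and $\rho_0'$ coordinatewise, using that $u_0$ and $u_1$ have disjoint supports.

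\emph{Singular vectors.} Since $C$ is a real matrix on the finite-dimensional space $\mathbb{R}^{\mathcal{X}}$, it has a real singular value decomposition. Its largest singular value is $\|C\|=\lambda$, and $\lambda\ge d/3>0$ by Lemma~\ref{lemma_def_deg_operator}; here $d\ge 1$ because $g$ is non-constant. So there are real unit vectors $u_0,u_1$ with
\[
Cu_1=\lambda u_0,\qquad C^Tu_0=\lambda u_1 .
\]
Because $u_0=\lambda^{-1}P_0AP_1u_1$, it lies in $P_0\mathbb{R}^{\mathcal{X}}$. Likewise $C^T=P_1AP_0$ gives $u_1\in P_1\mathbb{R}^{\mathcal{X}}$. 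Then $P_1u_1=u_1$ and $P_0u_0=u_0$, so $P_0Au_1=\lambda u_0$ and $P_1Au_0=\lambda u_1$, as required.

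\emph{Probability measures.} The vectors $u_0$ and $u_1$ have disjoint supports, so they are orthogonal, and therefore $\|v\|_2^2=(\|u_0\|^2+\|u_1\|^2)/2=1$. Since $A$ is real symmetric, $e^{i\theta A}$ is unitary. Hence $\sum_x\rho_\theta(x)=\|e^{i\theta A}v\|_2^2=1$, and each $\rho_\theta(x)$ is nonnegative, so $\rho_\theta$ is a probability measure. Define
\[
\nu_b(x)=(u_b)_x^2 \qquad (b\in\{0,1\}).
\]
Each $\nu_b$ is a probability measure supported on $\mathcal{X}\cap g^{-1}(b)$. For every $x$ at most one of $(u_0)_x,(u_1)_x$ is nonzero, so $\rho_0(x)=|v_x|^2=((u_0)_x^2+(u_1)_x^2)/2$, which gives $\rho_0=(\nu_0+\nu_1)/2$.

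\emph{Derivative.} Write $\psi(\theta)=e^{i\theta A}v$. Then
\[
\rho_\theta'(x)=2\,\mathrm{Re}\bigl(\overline{\psi_x}\,(iA\psi)_x\bigr).
\]
At $\theta=0$ we have $\overline{v_x}=((u_0)_x-i(u_1)_x)/\sqrt2$ and $iAv=(iAu_0-Au_1)/\sqrt2$. Taking the real part of their product and multiplying by $2$ gives
\[
\rho_0'(x)=(u_1)_x(Au_0)_x-(u_0)_x(Au_1)_x .
\]
We now split by the value of $g$.
\begin{itemize}
\item If $x\in g^{-1}(1)$, then $(u_0)_x=0$ and $(Au_0)_x=(P_1Au_0)_x=\lambda(u_1)_x$, so $\rho_0'(x)=\lambda(u_1)_x^2$.
\item If $x\in g^{-1}(0)$, then $(u_1)_x=0$ and $(Au_1)_x=(P_0Au_1)_x=\lambda(u_0)_x$, so $\rho_0'(x)=-\lambda(u_0)_x^2$.
\end{itemize}
Thus $\rho_0'=\lambda(\nu_1-\nu_0)=\frac{\kappa}{2}(\nu_1-\nu_0)$, using $\kappa=2\lambda$.

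\emph{Consequence.} For $z\in\{-1,1\}$,
\[
\rho_0+\frac{z}{\kappa}\rho_0'=\frac{\nu_0+\nu_1}{2}+\frac{z}{2}(\nu_1-\nu_0)=\nu_{(1+z)/2}.
\]
The only delicate point is the sign bookkeeping in the derivative computation. The essential input is that $u_0,u_1$ are placed in the real and imaginary parts of $v$, so the cross terms combine with the relations $P_0Au_1=\lambda u_0$ and $P_1Au_0=\lambda u_1$ to give opposite signs on the two output classes.
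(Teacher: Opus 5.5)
Your proposal is correct and follows the paper's proof essentially step for step: a top real singular pair of $P_0AP_1$, unitarity of $e^{i\theta A}$ for normalization, $\nu_b(x)=(u_b)_x^2$, and the coordinatewise computation of $\rho_0'(x)=2\,\mathrm{Re}\bigl(\overline{v_x}\,(iAv)_x\bigr)$ split according to the value of $g$. Your explicit argument that $u_0=\lambda^{-1}P_0AP_1u_1$ lies in $P_0\mathbb{R}^{\mathcal{X}}$ (and likewise $u_1\in P_1\mathbb{R}^{\mathcal{X}}$) supplies a step the paper only asserts ``by definition of $P_0$ and $P_1$.''
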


\begin{proof}
    We know from previous Lemma~\ref{lemma_def_deg_operator}, $\lambda = \|P_0 A P_1\| \ge d/3 > 0$. Since this is in  finite dimensions, $\|P_0 A P_1\|$ is the largest singular value of $P_0 A P_1$. We can therefore find a pair of real unit vectors $u_0$ and $u_1$ such that 
    \[ P_0 A P_1 u_1 = \lambda u_0, \qquad  (P_0 A P_1)^*u_0 = \lambda u_1. \]

    Since $P_0,A,P_1$ are real symmetric, $(P_0 A P_1)^* = P_1 AP_0$. Also, by definition of $P_1$ and $P_0$, $P_1 u_1 = u_1$ and $P_0 u_0 = u_0$.

    \[ P_0 A u_1 = \lambda u_0, \qquad  P_1 A u_0 = \lambda u_1. \]

    Let $v = \frac{u_0+i u_1}{\sqrt{2}}$. $v$ is a unit vector and so $\|v\|_2=1$.
     
    Also, for real $\theta$, let $\rho_{\theta}: \mathcal{X} \rightarrow [0,1]$ such that $\rho_{\theta}(x) = \left| (e^{i \theta A} v)_x \right|^2$.
    Since $A$ is symmetric and therefore Hermitian, $e^{i \theta A}$ is unitary. Therefore, 
    \[ \sum_{x \in \mathcal{X}}\rho_{\theta}(x) = \sum_{x \in \mathcal{X}} \left| (e^{i \theta A} v)_x \right|^2 = \|e^{i \theta A} v\|_2^2 = \|v\|_2^2 =1\]
    Thus, $\rho_{\theta}$ is a probability measure for real $\theta$.
    
    Let $\mathcal{X}_b = \mathcal{X} \cap g^{-1}(b)$ for $b \in \{0,1\}$ and $\nu_b(x) = u_b(x)^2$ for $x \in \mathcal{X}$. $\nu_b(x) \ge 0$ and $\sum_{x \in \mathcal{X}_b} \nu_b(x) = \|u_b\|_2^2 = 1$. Thus, $\nu_0$ and $\nu_1$ are probability measures supported on $\mathcal{X}_0$ and $\mathcal{X}_1$ respectively.

    \[ \rho_0(x) =\left| (e^{i 0 A} v)_x \right|^2 = |v_x|^2.  \]

    For every $x \in \mathcal{X}$, either $\nu_1(x) = 0$ or $\nu_0(x) = 0$. Therefore, $\rho_0(x) = \frac{\nu_0(x)+ \nu_1(x)}{2}$ for $x \in \mathcal{X}$. 
    Let $w(\theta) = e^{i \theta A} v$ and so $\rho_{\theta}(x) = |w_x(\theta)|^2 = w_x(\theta) \overline{w_x(\theta)}$.
    \[ \rho_{\theta}'(x) =  w'_x(\theta) \overline{w_x(\theta)} + w_x(\theta) \overline{w'_x(\theta)} =\overline{ w_x(\theta) \overline{w'_x(\theta)}}+ w_x(\theta) \overline{w'_x(\theta)} = 2 \text{Re}(w'_x(\theta) \overline{w_x(\theta)}). \]

    Since, $\overline{w_x(0)} = \overline{v_x}$ and $w_x'(0)=(iAv)_x$, \[\rho_{0}'(x) = 2 \text{Re}(\overline{v_x}(iAv)_x).\]
    For $x \in \mathcal{X}_0$, $v_x = \frac{(u_0)_x}{\sqrt{2}}$. Similarly for $x \in \mathcal{X}_1$, $v_x = \frac{i(u_1)_x}{\sqrt{2}}$. 

    We have \[(iAv)_x = \frac{(iAu_0)_x-(Au_1)_x}{\sqrt{2}}.\]
    Let us first see $x \in \mathcal{X}_0$. \[\overline{v}_x(iAv)_x = \frac{(u_0)_x ((iAu_0)_x-(Au_1)_x) }{2}.\] The real part is $-\frac{(u_0)_x(Au_1)_x}{2}$. But $(Au_1)_x = (P_0Au_1)_x = \lambda u_0(x) $. Hence, $\rho'_0(x) = -\lambda u_0(x)^2$.

    For $x \in \mathcal{X}_1$, \[\overline{v_x}(iAv)_x = \frac{-i(u_1)_x ((iAu_0)_x-(Au_1)_x) }{2}.\] The real part is $\frac{(u_1)_x(Au_0)_x}{2}$. But $(Au_0)_x = (P_1Au_0)_x = \lambda u_1(x) $. Hence, $\rho'_0(x) = \lambda u_1(x)^2$.

    Therefore, we get $\rho'_0 = \frac{\kappa}{2}( \nu_1 - \nu_0)$.
\end{proof}

\begin{lemma*}[Restatement of Lemma~\ref{squarefree_probability_outputclass_relation}]
    Let $Q$ be a real multilinear polynomial on $n$ disjoint block of $m$ bits each. Let degree of $Q$ be at most $D$ and $0 \le Q \le 1$ on the Boolean cube $\{0,1\}^{mn}$. Let $X^{(1)},\dots, X^{(n)}$ be $n$ independent disjoint blocks such that $X^{(i)}$ is sampled from $\rho_{\theta_i}$, where $\theta_i \in \mathbb{R}$ and $\rho_{\theta_i}$ are as described in the previous Lemma~\ref{prob_inner_arbitrary}. Let $R$ be defined as follows: 
    \[ R(\theta_1,\dots,\theta_n) = \E_{X^{(i)}\sim \rho_{\theta_i}}[{Q}(X^{(1)},\dots,X^{(n)})]. \]

    Then, $R$ is a real-valued trigonometric polynomial with total frequency at most $D$ and $0 \le R \le 1$ on $\mathbb{R}^n$. Furthermore, the squarefree extraction of $R$ satisfies:
    \begin{equation}
        \SF(R)(z/\kappa) = \E_{X^{(i)}\sim \nu_{(1+z_i)/2}} Q(X^{(1)},\dots,X^{(n)})  \qquad (z \in \{-1,1\}^n).
    \end{equation}
\end{lemma*}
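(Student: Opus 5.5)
The argument has three parts: boundedness, the frequency bound, and the squarefree identity.

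\emph{Boundedness.} For each real $\theta_i$, Lemma~\ref{prob_inner_arbitrary} shows that $\rho_{\theta_i}$ is a probability measure on $\mathcal{X}\subseteq\{0,1\}^m$. Hence $R(\theta)$ is an average of values of $Q$ on Boolean points. Since $0\le Q\le 1$ on the Boolean cube, we get $0\le R\le 1$. Real-valuedness follows the same way, because each $\rho_{\theta_i}(x)$ is real.

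\emph{Frequency bound.} Since $Q$ is multilinear, it suffices to treat one product monomial and then sum. So write $Q=\sum_S c_S\prod_i X^{(i)}_{S_i}$ with $\sum_i|S_i|\le D$. By independence of the blocks,
\[
R(\theta)=\sum_S c_S\prod_i \E_{X\sim\rho_{\theta_i}}\bigl[X_{S_i}\bigr].
\]
It therefore suffices to show that each single-block factor $\theta\mapsto\E_{\rho_\theta}[p(X)]$, for a polynomial $p$ of degree $r$, is a trigonometric polynomial of frequency at most $r$. Then each product has total frequency at most $\sum_i|S_i|\le D$. Write
\[
\E_{\rho_\theta}[p]=\langle e^{i\theta A}v,\,M_p\,e^{i\theta A}v\rangle .
\]
Expand $e^{i\theta A}=\sum_k e^{i\theta\tau_k}E_k$ using the spectral decomposition from Lemma~\ref{lemma_def_deg_operator}. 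This gives
\[
\E_{\rho_\theta}[p]=\sum_{k,\ell}e^{i\theta(\tau_\ell-\tau_k)}\,\langle E_k v, M_p E_\ell v\rangle .
\]
By property~\eqref{space_jump_lim}, only pairs with $|k-\ell|\le r$ survive. Since $\tau_k=\min\{k,d\}$ is $1$-Lipschitz in $k$, we have $|\tau_\ell-\tau_k|\le|k-\ell|\le r$, and the $\tau$'s are integers. So the factor is a trigonometric polynomial of frequency at most $r$. One check is needed here: $M_p$ acts on $\mathbb{R}^{\mathcal{X}}$ as multiplication by the restriction of $p$, and $\rho_\theta$ is supported on $\mathcal{X}$, so this is consistent. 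This step is where the structure of $A$ is used, and I expect it to be the main point to verify carefully, though it is direct.

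\emph{Squarefree identity.} A trigonometric polynomial is entire, so it is analytic near $0$. Each $\rho_\theta(x)$ is also smooth in $\theta$, being a trigonometric polynomial of frequency at most $2d$. Lemma~\ref{prob_inner_arbitrary} supplies $\nu_{(1+z)/2}=\rho_0+\frac{z}{\kappa}\rho_0'$ with $\kappa=2\lambda\neq 0$. Extending the measures by zero from $\mathcal{X}$ to $\{0,1\}^m$, Lemma~\ref{lem:squarefree_approximation} applies directly and yields
\[
\SF(R)(z/\kappa)=\E_{X^{(i)}\sim\nu_{(1+z_i)/2}}\,Q(X^{(1)},\dots,X^{(n)})
\]
for every $z\in\{-1,1\}^n$.
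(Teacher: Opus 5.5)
Your proposal is correct and follows essentially the same route as the paper. Boundedness holds because each $\rho_{\theta_i}$ is a probability measure on $\mathcal{X}$. The frequency bound comes from factoring each monomial by blocks, expanding $e^{i\theta A}=\sum_k e^{i\theta\tau_k}E_k$, and combining the band condition~\eqref{space_jump_lim} with $|\tau_k-\tau_\ell|\le|k-\ell|$. The squarefree identity then follows from Lemma~\ref{lem:squarefree_approximation}, since trigonometric polynomials are analytic. The extra details you spell out are correct points the paper leaves implicit: the $\tau_k$ are integers, the measures extend by zero from $\mathcal{X}$ to $\{0,1\}^m$, and $\theta\mapsto\rho_\theta(x)$ is smooth (its frequency is in fact at most $d$, not just $2d$).
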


\begin{proof}
    $R$ is bounded on $\mathbb{R}^n$ because $0\le Q \le 1$ on Boolean cube $\{0,1\}^{mn}$ and $X^{(i)} \in \mathcal{X} \subseteq \{0,1\}^m$. Also $R$ is real-valued because $Q$ is real for every $X^{(i)} \in \mathcal{X} \subseteq \{0,1\}^m$. Let $M$ be a monomial of $Q$ and let $S$ be the exact set of indices of blocks containing a variable of $M$. We can factor $M$ into $M^{i}$ where $M^{i}$ is a monomial with variables of $M$ from $i^{th}$ block. $\E[M] = \E\left[\prod_{i \in S} M^{i}\right]$. Since $X^{(i)}$ blocks are independent, we can write  
    \[\E[M] = \E\left[\prod_{i \in S} M^{i}\right] = \prod_{i \in S} \E[M^i].\]

    Now let us look at $\E[M^i]$. Let $p=M^i$ and let $r_i = \deg M^i$. We can write it as $\E[p] = \sum_{x \in \mathcal{X}} p(x) \rho_{\theta_i}(x) $ . Therefore,
    \[\E[p] = \sum_{x \in \mathcal{X}} p(x) |(e^{i \theta_i A}v)_x|^2. \]
    Let $e^{i \theta_i A}v = u$. Then, $\E[p] = \sum_{x \in \mathcal{X}} p(x) |u_x|^2 $. Since $(M_p u)_x = p(x)u_x$, $\sum_{x \in \mathcal{X} } p(x) |u_x|^2 = \langle u,M_p u \rangle$. Therefore,
    \[\E[p] = \langle e^{i \theta_i A}v,M_p e^{i \theta_i A}v \rangle.\]
    Since $A = \sum_k \tau_k E_k $ by spectral decomposition and $E_k$'s are orthogonal projectors, $e^{i \theta_i A} = \sum_k e^{i \theta_i \tau_k} E_k$.
    \[\E[p] = 
    \left\langle \sum_k e^{i \theta_i \tau_k} E_k v,M_p \sum_{\ell} e^{i \theta_i \tau_{\ell}} E_{\ell}v 
    \right\rangle = \sum_{k,\ell} e^{i \theta_i (-\tau_k)} e^{i \theta_i \tau_{\ell}} \left\langle   E_k v,M_p  E_{\ell}v 
    \right\rangle = \sum_{k,\ell} e^{i \theta_i (\tau_{\ell}-\tau_k)}     v^*E_k M_p  E_{\ell}v 
    . \]
    Since $p$ is of degree $r_i$, by using Lemma~\ref{lemma_def_deg_operator}, we get $E_k M_p  E_{\ell} = 0$ for $|k-\ell| > r_i$.   
    Also, $\E[p]$ has frequencies of the form $\tau_{\ell}-\tau_{k}$. Since $|\tau_k-\tau_{\ell}| = |\min\{k,d\}-\min\{\ell,d\}| \le |k - \ell|$ and the terms vanish for $|k - \ell| >r_i$, $\E[p]$ has frequency at most $r_i$. Let $\sum_{\alpha} c_{\alpha} e^{i\langle \alpha,\theta \rangle}$ be the trigonometric polynomial for monomial $M$ where $\alpha_i$ comes from individual monomial $M^i$. Then, since $\deg(M^i) = r_i \ge 0$, $|\alpha_i| \le r_i$ and since $\sum_i r_i \le D$, the total frequency for every monomial is at most $D$.
     By linearity of expectation, $R$ has total frequency at most $D$.

    Also, since $R$ is trigonometric, it is analytic in a neighborhood of 0. Also $\nu_{(1+a)/2} = \rho_0 + \frac{a}{\kappa}\rho_0'$ for $a \in \{-1,1\}$. Applying Lemma~\ref{lem:squarefree_approximation},
    \[ \SF(R)\left(\frac{z}{\kappa}\right) = \E_{X^{(i)} \sim \nu_{(1+z_i)/2}}Q(X^{(1)},\dots,X^{(n)}) .\]

\end{proof}

\section{Proof of Warm up Lemmas}
\subsection{Proofs of Lemmas for function composed with OR}\label[appendix]{appendix-proof-f-or}

\begin{lemma*}[Restatement of Lemma~\ref{lem_or_dilation}]
    Let $f: \{0,1\}^n \rightarrow \{0,1\}$ be a non-constant Boolean function with $\adeg(f)=d_f \ge 1$. Let $R : \mathbb{R}^n \rightarrow \mathbb{R}$ be a polynomial with degree at most $d$. Additionally, $R(x)\in [0,1]$ for all $x \in [0,1]^n$. If for some $a \in [0,1]$,
    \[|R(ay) - f(y)| \le \frac{1}{10},\]
    for all $y \in \{0,1\}^n$, then 
    \[d \sqrt{a} > \frac{d_f}{4e}.\]
\end{lemma*}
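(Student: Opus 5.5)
Assume for contradiction that $d\sqrt{a}\le d_f/(4e)$. We will build a polynomial of degree below $d_f$ that $1/3$-approximates $f$. Since $d_f\ge 1$, this forces $d\ge 1$, and we may take $a>0$ (if $a=0$, then $R(0)$ would have to approximate a non-constant $f$, which is impossible). Expand $R$ about the origin into homogeneous parts,
\[
R=\sum_{r=0}^{d}H_r, \qquad R(ay)=\sum_{r}a^rH_r(y).
\]
The candidate approximant is the truncation $P(y)=\sum_{r<d_f}a^rH_r(y)$, which has degree at most $d_f-1$. It then suffices to show
\[
\sum_{r\ge d_f}a^r\,\|H_r\|_{[0,1]^n}\le \tfrac{1}{5},
\]
because then $|P(y)-f(y)|\le 1/10+1/5<1/3$ on $\{0,1\}^n$, contradicting $\adeg(f)=d_f$.

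The main step is a bound on the homogeneous parts of the form
\[
\|H_r\|_{[0,1]^n}\le \left(\frac{Cd^2}{r}\right)^r
\]
for an absolute constant $C$. This is the higher-order Markov step, and it goes through one-dimensional restrictions. Fix $y\in[0,1]^n$ and set $u(t)=R(ty)$ for $t\in[0,1]$. This is a univariate polynomial of degree at most $d$, it is bounded in $[0,1]$ on $[0,1]$, and its coefficient of $t^r$ is exactly $H_r(y)$. So it suffices to bound the Taylor coefficients at the endpoint $t=0$ of a degree-$d$ polynomial bounded on $[0,1]$. Map $[0,1]$ to $[-1,1]$ via $t=(1+s)/2$. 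The coefficient becomes $u^{(r)}(0)/r!=2^r\,\tilde u^{(r)}(-1)/r!$. The Markov brothers' inequality for higher derivatives (see~\cite{harris2002markov}) gives
\[
|\tilde u^{(r)}(\pm1)|\le T_d^{(r)}(1)=\prod_{j<r}\frac{d^2-j^2}{2j+1}\le \frac{d^{2r}}{(2r-1)!!}.
\]
Hence $|H_r(y)|\le 2^r d^{2r}/(r!\,(2r-1)!!)=4^r d^{2r}/(2r)!$. Using $(2r)!\ge(2r/e)^{2r}$, this is at most $(e d/r)^{2r}$.

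With this bound, the tail is
\[
\sum_{r\ge d_f}a^r\Big(\frac{ed}{r}\Big)^{2r}=\sum_{r\ge d_f}\Big(\frac{e d\sqrt a}{r}\Big)^{2r}.
\]
Under the assumption $ed\sqrt a\le d_f/4\le r/4$, each term is at most $16^{-r}$, so the sum is at most $16^{-d_f}\cdot\tfrac{16}{15}\le 1/15<1/5$, which closes the contradiction. One caveat on $P$: it is a real polynomial in $y$ of degree less than $d_f$, which is all the definition of $\adeg$ needs; no multilinearization is required.

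The hard step is getting the endpoint higher-order Markov bound with the right constants. If the exact Chebyshev extremal form is awkward to cite, I would fall back on the cruder bound $\|\tilde u^{(r)}\|_{[-1,1]}\le (d^2)^r$, obtained by iterating the ordinary Markov inequality. That gives $|H_r(y)|\le (2d^2)^r/r!\le(2ed^2/r)^r$, and the tail $\sum_r(2ead^2/r)^r$ would then need $ad^2\lesssim d_f$ rather than $a d^2\lesssim d_f^2$. That is too weak, so the exact Markov bound (with its $(2r-1)!!$ denominator) really is needed to get the $\sqrt a$ dependence.
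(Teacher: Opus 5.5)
Your proposal is correct and follows essentially the same route as the paper. Both arguments decompose $R$ into homogeneous parts, restrict to the line $t\mapsto R(ty)$, and apply the endpoint higher-order Markov inequality to get $|H_r(y)|\le (2d)^{2r}/(2r)!$. Both then truncate below degree $d_f$ and bound the tail by $\sum_{r\ge d_f}16^{-r}\le 1/15$. One small slip: the form you announce, $\|H_r\|_{[0,1]^n}\le (Cd^2/r)^r$, is exactly the too-weak form you later warn against. What you actually derive and use is $(ed/r)^{2r}=(e^2d^2/r^2)^r$, and that is the bound the argument needs.
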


\begin{proof}
    We cannot have $d < d_f$ because then $h(y)=R(ay)$ will approximate $f(y)$ with error at most $1/10$ and then $\adeg(f) \le d < d_f $ which is a contradiction and hence, $d \ge d_f$. We can write $R$ as sum of its homogeneous parts, that is, $R=\sum_{k=0}^{d} R_k$ where $R_k$ is a homogeneous polynomial of degree $k$. 

    Fix an arbitrary $y \in \{0,1\}^n$, define $g_y: \mathbb{R} \rightarrow \mathbb{R}$ such that $g_y(t)=R(ty)$. We have $\deg(g_y) \le d$ and $g_y(t)$ can be written as $g_y(t)=\sum_k t^k R_k(y)$ . For $t \in [0,1]$, $g_y(t)=R(ty) \in [0,1]$. Also, $g_y^{(k)}(0) = k! R_k(y)$.

    Using Markov's higher-derivative inequality for $k \le d$ on $[0,1]$ gives,

    \[ |g_y^{(k)}(0)| \le 2^k\frac{(d^2)(d^2-1^2)\dots(d^2-(k-1)^2)}{(2k-1)!!} \le \frac{2^kd^{2k}}{(2k-1)!!}. \]

    Using the identity $2^kk! (2k-1)!! = (2k)!$, we get, 
    \begin{equation}\label{r_k_upperbound}
        |R_k(y)| = \frac{|g_y^{(k)}(0)|}{k!} \le \frac{(2d)^{2k}}{(2k)!}.
    \end{equation}
    Since $y$ was arbitrary, this bound holds for all $y \in \{0,1\}^n$.
     Let $S$ be such that
    \[S(y) = \sum_{k=0}^{d_f-1} a^kR_k(y).\]

    Since $\deg S < d_f$, we can write $|R(ay)-S(y)| = |\sum_{k=0}^{d} a^kR_k(y) - \sum_{k=0}^{d_f-1} a^kR_k(y)| = |\sum_{k=d_f}^{d} a^kR_k(y) |\le  \sum_{k=d_f}^{d} a^k|R_k(y)|$. Using Equation~\ref{r_k_upperbound},
    \[|R(ay)-S(y)| \le  \sum_{k=d_f}^{d} a^k|R_k(y)|  \le \sum_{k=d_f}^{d} a^k\frac{(2d)^{2k}}{(2k)!} = \sum_{k=d_f}^{d} \frac{2^{2k}(d \sqrt{a})^{2k}}{(2k)!} .\]
    Now, assuming $d \sqrt{a} \le \frac{d_f}{4e}$, we get
    \[ |R(ay)-S(y)| \le \sum_{k=d_f}^{d} \frac{(2d_f)^{2k}}{(4e)^{2k} (2k)!} \le \sum_{k=d_f}^{d} \frac{(2k)^{2k}}{(4e)^{2k} (2k)!} . \]
    Since $e^{2k}(2k)! \ge (2k)^{2k}$, we get,
    \[ |R(ay)-S(y)| \le \sum_{k=d_f}^{d} \frac{1}{16^k}  \le \sum_{k=d_f}^{\infty} \frac{1}{16^k}  \le \frac{1}{15}. \]

    Therefore, $|S(y)-f(y)| \le |S(y)-R(ay)|+|R(ay)-f(y)| \le 1/10+ 1/15 < 1/3$ for all $y \in \{0,1\}^n$. The assumption $d \sqrt{a} \le \frac{d_f}{4e}$ led to a contradiction that a polynomial $S$ with degree less than $d_f$ approximates $f$ on the Boolean cube. Hence, $d \sqrt{a} > \frac{d_f}{4e}$.
    \end{proof}

\begin{lemma*}[Restatement of Lemma~\ref{lem:bernoulli_avg}]
    Let $Q:\{0,1\}^{mn} \rightarrow \mathbb{R}$ be a multilinear polynomial with degree at most $d$. Let $X=(X^{(1)},\dots,X^{(n)})$ be $n$ disjoint blocks of $m$ bits each. Let $p=(p_1,\dots,p_n) \in [0,1]^n$. For all $mn$ bits, let $X^{(i)}_j$ be sampled independently from $\operatorname{Bernoulli}(p_i)$. Then if $R$ is defined as
    \[ R(p_1,\dots,p_n) = \E[Q(X^{(1)},\dots,X^{(n)})], \]
    then $\deg(R) \le d$. Furthermore, if $0 \le Q \le 1$ for all $x \in \{0,1\}^{mn}$, then $0 \le R \le 1$ for $p \in [0,1]^n$. 
\end{lemma*}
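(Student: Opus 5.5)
The plan is to expand $Q$ in the monomial basis and use independence of the bits. Write $Q(x)=\sum_{T} c_T \prod_{(i,j)\in T} x^{(i)}_j$, where $T$ ranges over subsets of the $mn$ coordinates with $|T|\le d$; this uses that $Q$ is multilinear of degree at most $d$. By linearity of expectation,
\[
R(p)=\sum_T c_T\, \E\Bigl[\prod_{(i,j)\in T} X^{(i)}_j\Bigr].
\]
All $mn$ bits are independent, so each expectation factors as a product over $(i,j)\in T$ of $\E[X^{(i)}_j]=p_i$. Hence
\[
\E\Bigl[\prod_{(i,j)\in T} X^{(i)}_j\Bigr]=\prod_{i=1}^n p_i^{t_i},
\qquad t_i=|\{j:(i,j)\in T\}|.
\]
This is a monomial of total degree $\sum_i t_i=|T|\le d$. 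Summing over $T$ shows $R$ is a real polynomial in $p$ with $\deg R\le d$. The polynomial is defined on all of $\mathbb{R}^n$, and on $[0,1]^n$ it agrees with the expectation.

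For boundedness, fix $p\in[0,1]^n$. The product of $\operatorname{Bernoulli}(p_i)$ laws is a genuine probability distribution on $\{0,1\}^{mn}$, and its weights $\prod_{i,j} p_i^{x^{(i)}_j}(1-p_i)^{1-x^{(i)}_j}$ are nonnegative and sum to $1$. So $R(p)=\sum_{x\in\{0,1\}^{mn}} \Pr[X=x]\,Q(x)$ is a convex combination of values of $Q$ on the Boolean cube. If $0\le Q\le 1$ there, then $0\le R(p)\le 1$.

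There is no real obstacle here. The only point needing care is that $Q$'s multilinearity is what makes each variable appear to power at most one, so the factorization via independence gives exactly $p_i^{t_i}$ without higher moments. In any case Bernoulli moments satisfy $\E[X^k]=p$, so even a non-multilinear $Q$ would give degree at most $d$.
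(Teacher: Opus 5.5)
Your proposal is correct and follows essentially the same route as the paper's proof. Both expand $Q$ into monomials and use independence of the bits to evaluate each monomial as $\prod_i p_i^{t_i}$, a monomial of total degree at most $d$, and both get $0\le R\le 1$ because $R(p)$ is an expectation of $Q$ under a product Bernoulli distribution on $\{0,1\}^{mn}$.
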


\begin{proof}
    Let $M^S$ be a monomial of $Q$ where $S \subseteq [n]$ is the set of blocks from which $M^S$ contains at least one bit. It can be factored into monomials $M^i$ such that the monomial $M^i$ only contains bits from $i^{th}$ block. 
        \[M^S = \prod_{i \in S} M^i. \]
    Additionally, the blocks are disjoint and each bit in one block is sampled independently. Let $\deg(M^i) =r_i$. Therefore,
        \[\E[M^S] = \E\left[\prod_{i\in S} M^i\right] = \prod_{i \in S} \E [M^i] = \prod_{i \in S} (p_i)^{r_i}. \]
    Since $\sum_{i \in S} r_i = \deg(M^S) \le d$, we get degree of $\E [M^S]$ is at most $d$. By linearity of expectation, we get degree of $R=\E Q $ is at most $d$. If $0 \le Q \le 1$, then since $R$ is expectation of Q with respect to product Bernoulli distribution, $0 \le R(p_1,\dots,p_n)\le 1$ for $p=(p_1,\dots,p_n) \in [0,1]^n$.
\end{proof}


\subsection{Proof of Lemmas for Lower Bound for function composed with promise threshold} 
\label{appendix:exact_threshold}

\begin{lemma*}[Restatement of Lemma~\ref{lem:prob_dist_th}]
    Let $m \ge 1$ and $0 \le k \le m-1$. Let $\kappa = \sqrt{(k+1)(m-k)}$. Let $\sigma_k = \operatorname{Uniform}\{x : |x|=k\}$, $\sigma_{k+1} = \operatorname{Uniform}\{x : |x|=k+1\}$ and let $\nu_0 = \sigma_k$ and $\nu_1 = \sigma_{k+1}$. 
    Define vectors $v_j$ for $j \in \{k,k+1\}$ and $\phi$ in $\mathbb{R}^{\{0,1\}^m}$ by
    \[ v_j = \binom{m}{j}^{-1/2} \sum_{|x|=j} e_x \;, \quad  \phi = \frac{v_k + v_{k+1}}{\sqrt{2}}. \] 
    Let $U_{\theta}$ be as defined in Equation~\ref{eq:rotation_matrix}. Define $\rho_{\theta}: \{0,1\}^m \rightarrow [0,1]$ by
    \[\rho_{\theta}(x) =  (\langle e_x, U_{\theta}^{\otimes m} \phi \rangle)^2.\]
    Then $\{\rho_{\theta}: \theta \in \mathbb{R}\}$ is a family of probability measures. Furthermore,
    \[ \rho_0 = \frac{\sigma_k + \sigma_{k+1}}{2}, \quad \rho_0' = \frac{\kappa}{2}(\sigma_{k+1}-\sigma_{k}). \]
    Consequently, for $z \in \{-1,1\}$, $ \rho_0 + \frac{z}{\kappa} \rho_0' = \sigma_{k+(1+z)/2} = \nu_{(1+z)/2}$.
\end{lemma*}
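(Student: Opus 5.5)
The plan is a direct computation with the orthogonal matrix $U_\theta^{\otimes m}$ and its derivative at $\theta=0$. It has three steps: normalization, the value at $\theta=0$, and the derivative at $\theta=0$, with the consequence following immediately.

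\emph{Normalization.} Each $U_\theta$ is a real rotation, so $U_\theta^{\otimes m}$ is orthogonal on $\mathbb{R}^{\{0,1\}^m}$. The vectors $v_k$ and $v_{k+1}$ are unit vectors with disjoint supports (weights $k$ and $k+1$), so they are orthonormal and $\phi$ is a unit vector. Hence
\[
\sum_x \rho_\theta(x)=\|U_\theta^{\otimes m}\phi\|_2^2=\|\phi\|_2^2=1 .
\]
Also $\rho_\theta\ge 0$, and $\rho_\theta$ is smooth in $\theta$, so $\{\rho_\theta:\theta\in\mathbb{R}\}$ is a family of probability measures.

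\emph{Value at $\theta=0$.} Since $U_0=I$, we have $\rho_0(x)=\phi_x^2$. This equals $\tfrac12\binom{m}{j}^{-1}$ when $|x|=j\in\{k,k+1\}$ and $0$ otherwise, which is exactly $(\sigma_k+\sigma_{k+1})/2$.

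\emph{Derivative at $\theta=0$.} Differentiating, $\frac{d}{d\theta}U_\theta|_{\theta=0}=\tfrac12 J$ with $J=\begin{pmatrix}0&-1\\1&0\end{pmatrix}$. By the product rule, $\frac{d}{d\theta}U_\theta^{\otimes m}|_{0}=G:=\tfrac12\sum_{j=1}^m J_j$, where $J_j$ acts on the $j$-th tensor factor. Therefore
\[
\rho_0'(x)=2\,\phi_x\,(G\phi)_x .
\]
This vanishes unless $|x|\in\{k,k+1\}$, because $\phi_x=0$ elsewhere.

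Next I compute $G$ on basis vectors. Since $Je_0=e_1$ and $Je_1=-e_0$, the sum $\sum_j J_j$ sends $e_y$ to $+\sum e_{y+e_j}$ over raisings (coordinates with $y_j=0$) minus $\sum e_{y-e_j}$ over lowerings (coordinates with $y_j=1$). Applied to the layer vectors:
\begin{itemize}
\item $\sum_j J_j\sum_{|y|=k}e_y$ has weight-$(k+1)$ component $(k+1)\sum_{|x|=k+1}e_x$, since each such $x$ is reached from $k+1$ strings;
\item $\sum_j J_j\sum_{|y|=k+1}e_y$ has weight-$k$ component $-(m-k)\sum_{|x|=k}e_x$.
\end{itemize}
The other components of these images land on weights $k-1$ and $k+2$, where $\phi$ vanishes, so they do not contribute to $\rho_0'$.

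For $|x|=k+1$ this gives
\[
\rho_0'(x)=2\cdot\frac{1}{\sqrt2}\binom{m}{k+1}^{-1/2}\cdot\frac{1}{2\sqrt2}(k+1)\binom{m}{k}^{-1/2}.
\]
Using $\binom{m}{k+1}=\binom{m}{k}\frac{m-k}{k+1}$, this simplifies to $\tfrac12\sqrt{(k+1)(m-k)}\,\binom{m}{k+1}^{-1}=\tfrac{\kappa}{2}\sigma_{k+1}(x)$. Symmetrically, for $|x|=k$ the same identity gives $-\tfrac{\kappa}{2}\sigma_k(x)$. Hence $\rho_0'=\frac{\kappa}{2}(\sigma_{k+1}-\sigma_k)$.

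\emph{Consequence.} Substituting into $\rho_0+\frac{z}{\kappa}\rho_0'$ gives $\frac{1+z}{2}\sigma_{k+1}+\frac{1-z}{2}\sigma_k$. For $z\in\{-1,1\}$ this is $\sigma_{k+(1+z)/2}=\nu_{(1+z)/2}$.

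The only real obstacle is the bookkeeping in the derivative step: getting the multiplicities $k+1$ and $m-k$ and the signs from $J$ right, and checking that the binomial normalizations combine exactly into $\kappa/2$.
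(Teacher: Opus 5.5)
Your proposal is correct and follows essentially the same route as the paper. Normalization comes from orthogonality of $U_\theta^{\otimes m}$ and orthonormality of $v_k,v_{k+1}$, the value $\rho_0=\phi_x^2$ comes from $U_0=I$, and the derivative is $\rho_0'(x)=2\phi_x(G\phi)_x$ with $G=\tfrac12\sum_j J_j$. Counting the $k+1$ raisings and $m-k$ lowerings and using $\binom{m}{k+1}=\binom{m}{k}\frac{m-k}{k+1}$ gives exactly $\pm\frac{\kappa}{2}\sigma$, and your signs, multiplicities, and the discarding of the weight-$(k-1)$ and weight-$(k+2)$ components all check out.
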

\begin{proof}
    
    Since $\sigma_k$ is uniform distribution on $\{x \in \{0,1\}^m:|x|=k\}$,
    \begin{equation}
        \sigma_k(x)=\begin{cases}
            \frac{1}{\binom{m}{k}}, \quad &|x|=k , \\
             \; 0, \quad &|x| \ne k.
        \end{cases}
    \end{equation}

    From the definition of $v_k$, we have
    \begin{equation}
        v_k(x)=\begin{cases}
            \sqrt{\frac{1}{\binom{m}{k}}}, \quad &|x|=k , \\
            \; 0, \quad &|x| \ne k.
        \end{cases}
    \end{equation}
    
    Let us see that $\rho_{\theta}$ is a family of probability measures. Clearly, $\rho_{\theta}(x) \ge 0$. Also,
    \[ \sum_{x\in \{0,1\}^m} \rho_{\theta}(x) = \sum_{x\in \{0,1\}^m}  ((U_{\theta}^{\otimes m} \phi)_x)^2 = \|U_{\theta}^{\otimes m} \phi\|^2_2  \]
    We know that $v_k$ and $v_{k+1}$ are unit vectors and hence $\|v_k\|_2=1$ and $\|v_{k+1}\|_2= 1$ and since they are orthogonal, $\phi$ is a unit vector and thus, $\|\phi\|_2=1$. Also, $U_{\theta}$ is orthogonal and hence $\|U^{\otimes m}_{\theta} \phi\|_2 = \|\phi\|_2$. Finally, $\sum_{x\in \{0,1\}^m} \rho_{\theta}(x) =1$  

    Additionally, \[\rho_0(x) = ((U_{0}^{\otimes m} \phi)_x)^2 = ((\phi)_x)^2 = \frac{({v_k}_{x}+{v_{k+1}}_{x})^2}{2} = \frac{\sigma_k(x)+\sigma_{k+1}(x)}{2}. \]

    \[ \rho'_{\theta}(x) = 2 (U_{\theta}^{\otimes m} \phi)_x\left(\frac{d}{d \theta}U_{\theta}^{\otimes m} \phi \right)_x, \qquad \rho'_{0}(x) = 2 ( \phi)_x\left(\left.\frac{d}{d \theta}U_{\theta}^{\otimes m}\right|_{\theta=0} \phi \right)_x.\]

    Then at $\theta=0$, $U_{\theta}^{\otimes m}=I$,

        \[\frac{d}{d \theta}U_{\theta}^{\otimes m} = \sum_{j=1}^m U_{\theta}^{\otimes (j-1)}\otimes U_{\theta}'\otimes U_{\theta}^{\otimes (m-j)},  \qquad \left.\frac{d}{d \theta}U_{\theta}^{\otimes m}\right|_{\theta=0} = \sum_{j=1}^m I^{\otimes (j-1)}\otimes U_{0}'\otimes I^{\otimes (m-j)}.  \]

    Let $E_j=I^{\otimes (j-1)}\otimes U_{0}'\otimes I^{\otimes (m-j)}$.
    Now, \[ U_{\theta}'= \frac{d}{d \theta}U_{\theta} = \frac{1}{2} \begin{pmatrix} 
    -\sin{(\theta/2)} & -\cos{(\theta/2)} \\
    \cos{(\theta/2)} & -\sin{(\theta/2)}
    \end{pmatrix} \qquad U_{0}'=
    \left.\frac{d}{d \theta}U_{\theta}\right|_{\theta=0} =\frac{1}{2}  \begin{pmatrix}
    0 & -1 \\
    1 & 0
    \end{pmatrix}.
    \] 
    Hence, 
    \[\rho'_{0}(x) = 2 ( \phi)_x\left(\sum_j E_j \phi \right)_x = 2(\phi)_x \left(\sum_j \frac{E_j v_k +E_j v_{k+1}}{\sqrt{2}}\right)_x.\]
    Using $U'_0 e_0 = \frac{1}{2} e_1$ and $U'_0 e_1 = -\frac{1}{2} e_0$, we get
    \begin{align*}
        E_j v_k &= \frac{1}{\sqrt{\binom{m}{k}}} \sum_{|y|=k} E_j e_y = \frac{1}{\sqrt{\binom{m}{k}}} \sum_{|y|=k} \bigotimes_{i=1}^{j-1} e_{y_i} \otimes (-1)^{y_j} \frac{1}{2} e_{1-y_j} \otimes \bigotimes_{i=j+1}^m e_{y_i} \\
        &= \frac{1}{2\sqrt{\binom{m}{k}}}\left( \sum_{|y|=k+1,y_j=1} e_y - \sum_{|y|=k-1,y_j=0} e_y \right).    
    \end{align*}
    
    Similarly, 
     \begin{align*} E_j v_{k+1} = \frac{1}{2\sqrt{\binom{m}{k+1}}}\left( \sum_{|y|=k+2,y_j=1} e_y - \sum_{|y|=k,y_j=0} e_y \right).
     \end{align*}
     For $x$ with Hamming weights other than $k$ and $k+1$, $\phi_x = 0$. Therefore, we will only look at $x$ where $|x|=k$ or $|x|=k+1$.
     Let us take $x$ with $|x|=k$. $(\phi)_x = \frac{1}{\sqrt{2 \binom{m}{k}}}$ and $(E_j v_k)_x=0$. For $(E_j v_{k+1})$, only the second sum contributes to the total sum. Therefore, for $x$ with $|x|=k$ and $x_j=0$, 
      \[(E_j v_{k+1})_x = -\frac{1}{2\sqrt{\binom{m}{k+1}}}. \]
     Since, $x$ has $m-k$ zeros, there are only $m-k$ possibilities of $j$ that contribute to the final sum.
     Therefore,
     \[\rho'_{0}(x) = 2(\phi)_x \left(\sum_j \frac{E_j v_k +E_j v_{k+1}}{\sqrt{2}}\right)_x = 2 \frac{1}{\sqrt{2 \binom{m}{k}}}\left( -\frac{1}{2\sqrt{2\binom{m}{k+1}}} (m-k) \right) = \left( \frac{-(m-k)}{2 \sqrt{\binom{m}{k+1} \binom{m}{k}}}   \right).\]

     Since,
     \[ \sqrt{\binom{m}{k+1} \binom{m}{k}} = \sqrt{\frac{(m-k)}{k+1}} \binom{m}{k},  \]
     we get $\rho'_{0}(x) = -\frac{\sqrt{(k+1)(m-k)}}{2\binom{m}{k}} = -\frac{\kappa}{2} \sigma_k(x)$. Following similar calculation for $x$ where $|x|=k+1$, we get $\rho'_{0}(x) = \frac{\kappa}{2} \sigma_{k+1}(x)$.
     
\end{proof}

\begin{lemma*}[Restatement of Lemma~\ref{lem:threshold_trig_expectation}]
     Let $\phi \in \mathbb{R}^{\{0,1\}^m} $ be a unit vector and let $U_{\theta}$ be as defined above. Let $\rho_{\theta}(x) = (\langle e_x, U_{\theta}^{\otimes m} \phi \rangle)^2$ be a probability measure on $\{0,1\}^m$. Let $X = (X_1,\dots,X_m)$ such that $X \sim \rho_{\theta}$. Let $S \subseteq [m]$, $|S|=r$ and $X_S= \prod_{i\in S} X_i$ be a monomial. Then, $\E(X_S)$ is a trigonometric polynomial with frequency at most $r$.
\end{lemma*}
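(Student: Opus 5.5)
The plan is to write the expectation as a quadratic form and exploit the tensor-product structure of $U_\theta^{\otimes m}$. Since each $X_i\in\{0,1\}$, the monomial $X_S$ is the indicator of $\{x: x_i=1 \ \forall i\in S\}$. Let $\Pi=\begin{pmatrix}0&0\\0&1\end{pmatrix}$ be the projector onto $e_1$ in $\mathbb{R}^2$. Then $M_{X_S}=\bigotimes_{i=1}^m \Pi_i$, where $\Pi_i=\Pi$ for $i\in S$ and $\Pi_i=I$ otherwise. With $\psi_\theta=U_\theta^{\otimes m}\phi$ we get
\[
\E(X_S)=\sum_x x_S\,(\psi_\theta)_x^2=\langle \phi,(U_\theta^{\otimes m})^{T} M_{X_S} U_\theta^{\otimes m}\phi\rangle
=\Big\langle \phi,\bigotimes_{i=1}^m \big(U_\theta^{T}\Pi_i U_\theta\big)\phi\Big\rangle .
\]

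For $i\notin S$, the factor is $U_\theta^T U_\theta=I$, which is independent of $\theta$, because $U_\theta$ is orthogonal. For $i\in S$, we compute $U_\theta^T\Pi U_\theta$ directly. Its entries are $\sin^2(\theta/2)$, $\sin(\theta/2)\cos(\theta/2)$ and $\cos^2(\theta/2)$. These equal $\tfrac{1}{2}(1-\cos\theta)$, $\tfrac12\sin\theta$ and $\tfrac12(1+\cos\theta)$, so the matrix equals $\tfrac12 I+\tfrac12(\cos\theta\, Z'+\sin\theta\, X')$ for fixed real matrices $Z',X'$. Every entry is therefore a trigonometric polynomial in $\theta$ of degree at most $1$.

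Next we expand the tensor product multilinearly. Each entry of $\bigotimes_i(U_\theta^T\Pi_iU_\theta)$ is a product of one entry from each factor. Factors with $i\notin S$ contribute constants, and factors with $i\in S$ contribute frequency-at-most-$1$ trigonometric polynomials. A product of $r$ such polynomials has frequency at most $r$, since frequencies add under multiplication of exponentials $e^{ij\theta}$. Pairing this with the fixed vector $\phi$ is a finite linear combination, so $\E(X_S)$ is a trigonometric polynomial of frequency at most $r$. It is real-valued because all quantities involved are real.

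The main point of care is the half-angle convention in $U_\theta$. Entries of $U_\theta$ itself involve $\theta/2$ and are not trigonometric polynomials in $\theta$. The argument works only because the expectation is quadratic in $U_\theta$, which converts half-angle products into frequency-$1$ terms. Equally important is that the unused coordinates cancel through orthogonality rather than contributing $\theta$-dependence; without that cancellation the frequency bound would be $m$ instead of $r$. Everything else is routine bookkeeping.
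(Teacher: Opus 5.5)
Your proposal is correct and follows essentially the same route as the paper: both write $\E(X_S)=\phi^T\bigl(\bigotimes_{j=1}^m U_\theta^T C_j U_\theta\bigr)\phi$ with $C_j=\begin{pmatrix}0&0\\0&1\end{pmatrix}$ for $j\in S$ and $C_j=I$ otherwise, use orthogonality of $U_\theta$ so the factors outside $S$ are the constant identity, and compute $U_\theta^T C_j U_\theta=\tfrac12(I+Z\cos\theta+J\sin\theta)$ for $j\in S$ with fixed real $2\times 2$ matrices $Z,J$. Each of the $r$ factors indexed by $S$ then contributes frequency at most one, giving total frequency at most $r$.
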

\begin{proof}
    \[ \E\left[\prod_{j \in S} X_j\right] =  \sum_{x \in \{0,1\}^m}  \left(\prod_{j \in S} x_j \right) \rho_{\theta}(x).  \]
    Substituting $\rho_{\theta}(x) = ((U_{\theta}^{\otimes m} \phi)_{x})^2 $, we get 
    \[\E\left[\prod_{j \in S} X_j\right] =  \sum_{x \in \{0,1\}^m}  \left(\prod_{j \in S} x_j \right) ((U_{\theta}^{\otimes m} \phi)_{x})^2\]

    Let $B = \begin{pmatrix}
    0 & 0 \\
    0 & 1
\end{pmatrix}$.

We can then write $x_j = e_{x_j}^T B e_{x_j}$. Furthermore, for an m-bit string $x \in \{0,1\}^m$ and a set $S$, \[\prod_{j \in S} x_j=  e_{x}^T \bigotimes_{j=1}^m C_j e_{x}\] where $C_j = \mathbf{1}_{\{j\in S\}}(B-I) + I$.

Also, let $U_{\theta}^{\otimes m} \phi = u$. Then, $\E\left[\prod_{j \in S} X_j\right] = \sum_{x \in \{0,1\}^m}  e_{x}^T \bigotimes_{j=1}^mC_j e_{x} (u_{x})^2 = u^T \bigotimes_{j=1}^mC_j u$.

Substituting $u = U_{\theta}^{\otimes m} \phi$, we get \[\E\left[\prod_{j \in S} X_j\right]=\phi^T (U_{\theta}^{\otimes m})^T \bigotimes_{j=1}^mC_j U_{\theta}^{\otimes m} \phi = \phi^T  \left(\bigotimes_{j=1}^m U_{\theta}^T C_j U_{\theta} \right) \phi. \] 

But 
\[
    U^T_{\theta} C_j U_{\theta} =
        \begin{cases}
           U^T_{\theta} B U_{\theta}  , &j \in S , \\
             \; I, \quad &j \notin S.
        \end{cases}
\]

$U^T_{\theta} B U_{\theta} = \frac{I + Z\cos{\theta}+J \sin{\theta}}{2}$ where $Z = \begin{pmatrix}
    -1 & 0 \\
    0 & 1
    \end{pmatrix}
    $ and $J = \begin{pmatrix}
    0 & 1 \\
    1 & 0
    \end{pmatrix}$. Each $j \in S$ can contribute at most $1$ to the frequency of $\E[X_S]$. Since $|S|=r$, the frequency of $\E[X_S]$ is at most $r$.

\end{proof}

\begin{lemma*}[Restatement of Lemma~\ref{lem:threshold_avg_trig}] Let $Q$ be a multilinear polynomial on $n$ blocks of $m$ variables, with degree at most $D$. Also, let $0 \le Q \le 1$ on the entire Boolean cube. 
Let \[ R(\theta_1,\dots,\theta_n) = \E_{X^{(i)} \sim \rho_{\theta_i}} [Q(X^{(1)},\dots,X^{(n)})]
\]
where $X^{(i)}$ are blocks sampled independently from $\rho_{\theta_i}$ described in Lemma~\ref{lem:prob_dist_th}.

Then, $0 \le R \le 1$ on $\mathbb{R}^n$ and $R$ is a trigonometric polynomial of total frequency at most $D$.

Furthermore, for every $z \in \{-1,1\}^n$,
  \[ \SF(R)(z/\kappa) = \E_{X^{(i)} \sim \sigma_{k+(1+z_i)/2}}Q(X^{(1)},\dots,X^{(n)}). \]
\end{lemma*}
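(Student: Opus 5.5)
The plan mirrors the proof of Lemma~\ref{squarefree_probability_outputclass_relation}, with the frequency bound now coming from Lemma~\ref{lem:threshold_trig_expectation} instead of the degree-operator structure.

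\textbf{Step 1: range of $R$.} By definition,
\[
R(\theta)=\sum_{x} Q(x^{(1)},\dots,x^{(n)})\prod_i\rho_{\theta_i}(x^{(i)}).
\]
By Lemma~\ref{lem:prob_dist_th}, each $\rho_{\theta_i}$ is a probability measure on $\{0,1\}^m$. Hence $R(\theta)$ is an expectation of $Q$ under a product probability measure on the Boolean cube. Since $0\le Q\le 1$ on every cube point, $0\le R\le 1$ for all real $\theta$.

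\textbf{Step 2: total frequency.} Write $Q$ as a linear combination of multilinear monomials $M$ of degree at most $D$. Factor each monomial as $M=\prod_{i\in S}M^i$, where $M^i$ collects the variables of $M$ in block $i$ and has degree $r_i$, so $\sum_i r_i\le D$. Independence of the blocks gives $\E[M]=\prod_{i\in S}\E_{X^{(i)}\sim\rho_{\theta_i}}[M^i]$. By Lemma~\ref{lem:threshold_trig_expectation}, applied with the unit vector $\phi$ from Lemma~\ref{lem:prob_dist_th}, each factor is a univariate trigonometric polynomial in $\theta_i$ of frequency at most $r_i$. Their product is therefore a sum of terms $e^{i\langle a,\theta\rangle}$ with $|a_i|\le r_i$, so $\|a\|_1\le D$. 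By linearity, $R$ is a trigonometric polynomial of total frequency at most $D$, and it is real-valued because each $\rho_\theta$ is real.

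\textbf{Step 3: squarefree identity.} A trigonometric polynomial is entire, so $R$ is analytic near $0$. The family $\rho_\theta$ is differentiable in $\theta$, since its entries are squares of finite trigonometric sums. Lemma~\ref{lem:prob_dist_th} gives $\rho_0+\frac{z}{\kappa}\rho_0'=\sigma_{k+(1+z)/2}=\nu_{(1+z)/2}$ for $z\in\{-1,1\}$, with $\kappa=\sqrt{(k+1)(m-k)}\neq 0$. All hypotheses of Lemma~\ref{lem:squarefree_approximation} therefore hold, and it yields directly
\[
\SF(R)(z/\kappa)=\E_{X^{(i)}\sim\sigma_{k+(1+z_i)/2}}Q(X^{(1)},\dots,X^{(n)}).
\]

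\textbf{Main point to check.} The only real work is Step 2: making sure frequencies add across blocks, giving total frequency $\sum_i r_i$ rather than a per-block maximum. This follows from the product structure once each block factor is known to have frequency at most its local degree, which Lemma~\ref{lem:threshold_trig_expectation} supplies.
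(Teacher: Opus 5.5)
Your proposal is correct and follows the paper's proof essentially step for step. You bound $R$ as an expectation of $Q$, get the total frequency bound by factoring each monomial into block monomials and applying Lemma~\ref{lem:threshold_trig_expectation} to each block, and then invoke Lemma~\ref{lem:squarefree_approximation} with the derivative identity from Lemma~\ref{lem:prob_dist_th}. You also state explicitly two points the paper leaves implicit: that $\rho_\theta$ is differentiable and that $\kappa\neq 0$.
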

\begin{proof}
    Since $0 \le Q \le 1$ and $R$ is an expectation of $Q$, $0 \le R \le 1$.

    Let $M$ be a monomial of $Q$. We can factor this monomial $M$ into $n$ monomials $M_i$ such that $M_i$ only contains variables from $i^{th}$ block. Let $S_i \subseteq [m]$ be such that $M_i = \prod_{j \in S_i} x^{(i)}_j$. Hence, $M = \prod_i \prod_{j \in S_i} x^{(i)}_j$.
    Let us first evaluate $\E[\prod_i \prod_{j \in S_i} X^{(i)}_j]$. Since $X^{(i)}$ are sampled independently, $\E[\prod_i \prod_{j \in S_i} X^{(i)}_j] = \prod_i \E[\prod_{j \in S_i} X^{(i)}_j]$.

    Using Lemma~\ref{lem:threshold_trig_expectation}, we get $\E[\prod_{j \in S_i}X^{(i)}_j]$ has trigonometric frequency at most $|S_i|$ and thus contribute at most $|S_i|$ to the frequency of monomial $M$. Let $\sum_{\alpha} c_{\alpha} e^{i\langle \alpha,\theta \rangle}$ be the trigonometric polynomial for monomial $M$ where $\alpha_i$ comes from individual monomial $M_i$. Then, since $|\alpha_i| \le |S_i|$ and since $\sum_i |S_i| \le D$, the total frequency for every monomial is at most $D$. Therefore, total frequency of $R$ is at most $D$.

Since $R$ is trigonometric, $R$ is analytic in a neighborhood of 0. Using Lemma~\ref{lem:prob_dist_th}, we have $\nu_{(1+z)/2} = \rho_0 + \frac{z}{\kappa}(\rho_0')$. Finally using Lemma~\ref{lem:squarefree_approximation}, we get
\[ \SF(R)(z/\kappa) = \E_{X^{(i)} \sim \sigma_{k+(1+z_i)/2}}Q(X^{(1)},\dots,X^{(n)}). \]
\end{proof}

\end{document}